\newcommand{\N}{\mathbb{N}}
\newcommand{\R}{\mathbb{R}}
\newcommand{\Z}{\mathbb{Z}}
\newcommand{\Acal}{\mathcal{A}}
\newcommand{\Ccal}{\mathcal{C}}
\newcommand{\Hcal}{\mathcal{H}}
\newcommand{\Lcal}{\mathcal{L}}
\newcommand{\Pcal}{\mathcal{P}}
\newcommand{\Rcal}{\mathcal{R}}
\newcommand{\Scal}{\mathcal{S}}
\newcommand{\Tcal}{\mathcal{T}}
\newcommand{\OR}{\mathrm{OR}}
\newcommand{\MAJ}{\mathrm{MAJ}}
\newcommand{\symvec}{\mathrm{symvec}}
\newcommand{\symmat}{\mathrm{Sym}}
\newcommand{\ind}{\mathrm{ind}}
\newcommand{\IP}{\mathrm{IP}}
\newcommand{\PARITY}{\mathrm{PARITY}}
\newcommand{\upto}{\mathbin{:}}
\DeclareMathOperator*{\argmin}{argmin}
\newcommand{\zeros}{\mathrm{zeros}}
\newcommand{\ones}{\mathrm{ones}}
\newcommand{\coisa}{weighted biadjacency matrix }
\newcommand{\floor}[1]{\lfloor #1 \rfloor}
\newcommand{\ceil}[1]{\left\lceil #1 \right\rceil}
\def\01{\{0,1\}}
\DeclareMathOperator*{\argmax}{arg\,max}
\newcommand{\ket}[1]{|#1\rangle}
\newtheorem{theorem}{Theorem}
\newtheorem{lemma}[theorem]{Lemma}
\newtheorem{corollary}[theorem]{Corollary}
\newtheorem{fact}[theorem]{Fact}
\theoremstyle{definition}
\newtheorem{definition}[theorem]{Definition}
\begin{document}
\title{Quantum algorithms for graph problems with cut queries}
\author{Troy Lee\thanks{Centre for Quantum Software and Information, University of Technology Sydney. Email: troyjlee@gmail.com} 
\and Miklos Santha\thanks{CNRS, IRIF, Universit\'e de Paris; Centre for Quantum Technologies and Majulab, National University of Singapore. Email: miklos.santha@gmail.com} 
\and Shengyu Zhang\thanks{Tencent Quantum Laboratory. Email: shengyzhang@tencent.com}}
\date{}
\maketitle

\begin{abstract}
Let $G$ be an $n$-vertex graph with $m$ edges.  When asked a subset $S$ of vertices, a cut query on $G$ returns the number 
of edges of $G$ that have exactly one endpoint in $S$.  We show that there is a bounded-error quantum algorithm that determines all connected components of $G$ after making $O(\log(n)^6)$ many cut queries. In contrast, it follows 
from results in communication complexity that any randomized algorithm even just to decide whether the graph is connected or not must make at least $\Omega(n/\log(n))$ many cut queries. 
We further show that with $O(\log(n)^8)$ 
many cut queries a quantum algorithm can with high probability output a spanning forest for $G$. 

En route to proving these results, we design quantum algorithms for learning a graph using cut queries.  We show 
that a quantum algorithm can learn a graph with maximum degree $d$ after $O(d \log(n)^2)$ many cut queries, and 
can learn a general graph with $O(\sqrt{m} \log(n)^{3/2})$ many cut queries. These two upper bounds are tight up to the poly-logarithmic factors,
and compare to $\Omega(dn)$ and $\Omega(m/\log(n))$ lower bounds on the number of cut queries needed by a randomized algorithm 
for the same problems, respectively.

The key ingredients in our results are the Bernstein-Vazirani algorithm, approximate counting with ``OR queries'', and learning 
sparse vectors from inner products as in compressed sensing.
\end{abstract} 

\newpage
\section{Introduction}
The \emph{cut} and \emph{additive} functions of an $n$-vertex undirected graph $G=(V,E)$ are $c: 2^{V} \rightarrow \N$ defined as $c(S) = |E(S, V \setminus S)|$, and $a: 2^{V} \rightarrow \N$ defined as 
$a(S) = |E(S, S)|$, where $E(S,T)$ is the set of edges between sets $S,T\subseteq V$.
In this paper we study quantum algorithms for several graph problems when the algorithm has oracle access to a cut or additive function for the graph.  In particular, we look at 
the problems of learning all the edges of a graph, determining if the graph is connected, outputting a spanning tree, and determining 
properties of the graph such as if it is bipartite or acyclic.

Motivation to study algorithms with cut or additive query access comes from at least two different sources.
Algorithms with a cut oracle have been studied for computing the minimum cut of a graph \cite{RSW18,MN20}, and
the study of graph algorithms with an additive oracle began in connection with 
an application to genomic sequencing \cite{GK98,ABKRS04}. We describe the previous works on these topics 
and their connection to our results in turn.

Rubinstein et al.\ \cite{RSW18} give a randomized algorithm that exactly computes the size of a minimum cut in an $n$-vertex unweighted and undirected graph with $\widetilde O(n)$ many cut queries  \footnote{The $\widetilde O()$ notation hides polylogarithmic factors in $n$.}. More recently, \cite{MN20} generalizes this result to also give a randomized $\widetilde O(n)$ cut query algorithm to exactly compute the size of a minimum cut in a weighted and undirected graph.  
These results are tight up to the polylogarithmic factors.  As observed by Harvey \cite{Harvey08}, known lower bounds on the communication complexity of determining if a graph is 
connected---when divided by $\log(n)$, the number of bits needed to communicate the answer of a cut query---give lower bounds on the number of cut queries needed by an algorithm to solve connectivity, 
and thus also min-cut.  It is known that the deterministic 
communication complexity of connectivity is $\Omega(n \log(n))$ \cite{HMT88} and the randomized communication complexity is $\Omega(n)$ \cite{BFS86}, which gives the tightness of the upper bounds for the cut query complexity claimed above.

Computing the minimum cut with a cut oracle is a special case of the problem of submodular function minimization 
with an evaluation oracle.  For an $n$-element set $\Omega$, a function $f : 2^\Omega \rightarrow \R$ is submodular if it satisfies $f(S \cap T) + f(S \cup T) \le f(S) + f(T)$ for all $S,T\subseteq \Omega$. 
As the truth table of $f$ is exponentially large in $n$, submodular functions are often studied assuming access to an \emph{evaluation oracle} for $f$, which 
for any $S \subseteq \Omega$ returns $f(S)$.  The submodular function minimization problem is to compute $\min_{S \subseteq \Omega} f(S)$.  The cut function is a submodular function,
 thus computing the minimum cut of a graph with a cut oracle is a special case of this problem \footnote{The cut function is a symmetric submodular function, $c(S) = c(V\setminus S)$.  For symmetric submodular functions 
 the minimization problem becomes to find a non-trivial minimizer $\emptyset \subset S \subset V$.}.  The ellipsoid method was originally used to 
show that submodular function minimization in general can solved in polynomial time with an evaluation oracle \cite{GLS81,GLS1988}, and the current record shows that this can be done with 
$O(n^3)$ many calls of the evaluation oracle \cite{Jiang20}, improving on the bound of $\widetilde O(n^3)$ by Lee, Sidford, and Wong \cite{LSW15}. If $M = \max_{A \subseteq V} |f(A)|$ then the current best known weakly polynomial algorithm makes  $O(n^2 \log(nM))$ many queries~\cite{LSW15}, and the best pseudo-polynomial time algorithm makes $O(nM^3 \log(n))$ many queries~\cite{CLSW17}.
The best lower bound on the number of evaluation oracle queries required to find the minimum value of a submodular function 
is $\Omega(n)$ for deterministic algorithms and $\Omega(n/\log(n))$ for randomized algorithms, coming from the above mentioned communication bounds on determining if a graph is 
connected.

Quantum algorithms for submodular function minimization have been studied for approximating $\min_{S \subseteq \Omega} f(S)$ \cite{HRRS19}.  For the problem of exact minimization no 
quantum algorithm has been considered.  Here we study the problem which is the source of the best known classical lower bounds on submodular 
function minimization, the problem of determining if a graph is connected with a cut oracle.  For this special case, we show that there is a surprisingly
efficient quantum algorithm: we give a quantum algorithm that determines if a graph is connected or not with high probability after $O(\log(n)^6)$ many cut queries (\cref{main:con}).  
With the same number of queries the algorithm can output the connected components of the graph, and can be extended to output a spanning forest after 
$O(\log(n)^8)$ many cut queries (\cref{thm:span_forest}).  
We leave deciding if minimum cut can also be computed by a quantum algorithm with a polylogarithmic number of cut queries as a tantalizing open problem.

Additive queries were originally defined because of an application to genomic sequencing.  One technique for sequencing a genome is to 
first use shotgun sequencing to come up with random segments of DNA called \emph{contigs}.  To complete the sequencing of the genome it remains to 
fill the gaps between the contigs, and thus also figure out how the contigs are connected to one another.
This can be done through polymerase chain reaction (PCR) techniques.  To do this, one makes primers, short sequences 
that pair to the end of a contig.  When two contigs are connected by a gap and the primers for each end of this gap are placed in solution with them, the PCR technique 
can connect the contigs by filling in the gap of base pairs between them.  If there are $n$ contigs, the process of putting separately all pairs of primers 
together with them to find the contigs connected together can take $\binom{n}{2}$ PCR experiments.  This complexity led to the development of multiplex PCR, where one 
places many primers together with the contigs at once.  The algorithmic question becomes, what is the best way to add primers to the contigs in order to 
fill all gaps between contigs while minimizing the number of multiplex PCR assays? 

Grebinski and Kucherov nicely formalize this algorithmic problem in the language of graph theory \cite{GK98}.  
They give several possible models, depending on exactly what information one assumes to learn from a multiplex PCR assay.  
The additive query arises if one assumes that when primers are placed together 
with contigs, one can read from the result the number of contigs that were paired.  A weaker model only assumes that when primers are 
placed together with contigs, one learns whether or not there is a pairing of contigs.  In this model, when one queries a 
subset of vertices $S$, one learns whether or not $|E(S,S)| > 0$.  We call these \emph{empty subgraph queries}.

In the genomic sequencing application, the primary goal is to learn all the edges in the graph.  Furthermore, for this application the graph is 
guaranteed to have degree at most $2$. Grebinski and Kucherov \cite{GK98} show that a Hamiltonian cycle can be learned by a randomized algorithm 
with $O(n\log n)$ many empty subgraph queries, and \cite{ABKRS04} gives a nonadaptive randomized algorithm to learn a matching with $O(n \log n)$ many empty subgraph queries.
After these works, the complexity of learning general weighted graphs with additive queries was also extensively studied
\cite{GK00, CK10,Maz10,BM10,BM11}.  Notable results are that a graph of maximum degree $d$ can be learned with $O(dn)$ many additive queries \cite{GK00}, and 
that an $m$-edge graph can be learned with $O(\frac{m \log n}{\log m})$ many additive queries \cite{CK10,BM11}.  Both of these results are tight up to a logarithmic factor in $n$.  

We show the additive quantum query complexity of learning a graph with maximum degree $d$ is 
$\Theta(d \log(n/d))$ 
(see \cref{cor:add_learn} and \cref{cor:lower_d}), 
and that a graph with $m$ edges can be learned after $O(\sqrt{m \log n} + \log(n))$ many cut queries (see \cref{cor:add_learn}), which is tight up to logarithmic factors (see \cref{cor:lower_m}).  
It is straightforward to show that a cut query can be simulated by 3 additive queries.  Somewhat surprisingly, we show that in general for weighted graphs simulating an 
additive query can require $\Omega(n)$ many cut queries (see \cref{lem:cut_lower}).  Nonetheless, our algorithms for learning graphs can also be made to work with cut queries, at the expense of 
an additional multiplicative logarithmic factor.  The quantum algorithm for efficiently learning a low degree graph with cut queries is a key subroutine in our algorithm for connectivity.

\subsection{Our techniques}
We begin by giving an overview of our quantum algorithm for learning a graph of maximum degree $d$.  The high level idea follows the $O(dn)$ classical algorithm given in \cite{GK00}.  
The starting point is the basic principle of compressed sensing: a sparse vector can be learned from its dot product with a few random vectors.  Say $y \in \{0,1\}^n$ has at most $d$ ones.  
With high probability, we can learn $y$ from the values of $x_1^T y, \dots, x_k^Ty$ for $k = 3d \log n$ where each $x_i \in \{0,1\}^n$ is chosen randomly (see \cref{lem:ran_red}).  
Moreover, this algorithm is non-adaptive, thus given a matrix $Y \in \{0,1\}^{n \times n}$ where each column has at most $d$ ones, we can learn $Y$ with high probability from the product $XY$ where 
$X \in \{0,1\}^{k \times n}$ is chosen randomly with $k = 3d \log n$.  We can apply this principle to learn the adjacency matrix $A_G$ of a graph $G$ of maximum degree $d$.  

Classically, a single product $x^T A_G$ can be computed with $O(n)$ cut queries.  We will show how to use the Bernstein-Vazirani algorithm \cite{BV97}, 
to compute $x^T A_G$ with a constant number of cut queries, provided the graph $G$ is bipartite.  We can then apply this to learn a general graph $G$ as the complete graph
can be covered by $O(\log n)$ many complete bipartite graphs.  This leads to an overall complexity of $\widetilde O(d)$ many cut queries.

These results can be phrased more generally in terms of learning an $m$-by-$n$ matrix $A$ when given the ability to query $x^TAy$ for Boolean vectors $x \in\{0,1\}^m,y \in \{0,1\}^n$.  
We call such queries \emph{matrix cut queries}, and they provide a very clean way of formulating our results \footnote{We term these matrix cut queries because of the relation to the cut 
norm of a matrix $A$, defined as $\max_{x \in \{0,1\}^m, y \in \{0,1\}^n} |x^T A y|$.  The cut norm played a crucial role in the matrix decomposition results of~\cite{FK99a} used for efficient approximation 
algorithms for maximum cut and other graph problems.}
Such queries are also used in the classical learning results \cite{GK00,Maz10, BM11}, and 
we find it useful to make them explicit here.

Our algorithm for connectivity is a contraction based algorithm.  If we find an edge between vertices $u$ and $v$, we can merge $u$ and $v$ into a single vertex without changing 
the connectivity of the graph.  More generally, if we know that a set $S_1$ of vertices is connected, and a disjoint set $S_2$ of vertices is connected, and we learn that there is an edge between a vertex 
in $S_1$ and a vertex in $S_2$, we can merge $S_1$ and $S_2$ without changing the connectivity of the graph.  

Our algorithm for connectivity proceeds in rounds, and maintains the invariant of having a partition of the vertex set $V$ into sets $S_1, \ldots, S_k$, each of which is known to be connected.  
We call such sets $S_i$ supervertices, and the number of $S_j$ such that there is an edge between a vertex in $S_i$ and a vertex in $S_j$ the \emph{superdegree} of $S_i$. 
A round proceeds by first in parallel approximating the superdegree of each $S_i$.  For this we use ideas originating with Stockmeyer \cite{Stockmeyer83} for approximately counting 
the number of ones in a vector $y \in \{0,1\}^n$ with OR queries, queries that for a subset $S \subseteq \{1,\ldots, n\}$ answer $1$ iff there is an $i \in S$ with $y_i =1$.  
We show that we can efficiently simulate the OR queries needed in the context of approximating the superdegree by cut queries.
For supervertices whose superdegree is below a threshold---taken to be $\Theta(\log(n)^2)$---we learn all of their superneighbors with polylogarithmically many cut queries using our techniques for learning a matrix with sparse columns.  
To take care of the supervertices with superdegree above the threshold, we randomly sample $k/2$ many of the $S_j$.  With high probability every high superdegree supervertex will be connected 
to at least one $S_j$ in the sample set, and moreover we show that we can learn edges witnessing this fact with polylogarithmically many cut queries.  We then contract the supervertices according to all 
the edges learned in the round, and show that we either learn that the graph is disconnected or arrive at a new partition of $V$ into at most $k/2$ many connected sets.  In this way, the algorithm terminates 
in at most $O(\log n)$ many rounds, and as each round uses polylogarithmically many cut queries we arrive at the polylogarithmic complexity.

\subsection{Related work}
As far as we are aware, ours is the first work studying quantum algorithms using a cut or additive
query oracle.  Quantum algorithms for graph problems typically use the adjacency matrix or adjacency list 
input model.  Early work \cite{DHHM06,SYZ04,Zha05}\ gives tight bounds for many graph problems in these models.
For example, it is shown in \cite{DHHM06} that the quantum query complexity of connectivity is $\Theta(n^{3/2})$ in the adjacency 
matrix model and $\Theta(n)$ in the adjacency list model, the quantum query complexity of minimum spanning tree is $\Theta(n^{3/2})$ in the adjacency matrix model and $\Theta(\sqrt{nm})$ in the adjacency list model, for a graph with $m$ edges. Focusing on the adjacency matrix model, the quantum query complexity of Bipartiteness and of Graph Matching is $\Omega(n^{3/2})$ \cite{Zha05}, and that of Scorpion graph is $\tilde \Theta(\sqrt{n})$ , which is the lowest possible quantum query complexity for total graph properties \cite{SYZ04}. Also see \cite{MSS07,Belovs12,JKM13,LMS17,LeG14} for triangle finding, \cite{CK12} for general graph minor-closed graph properties, and \cite{BDCG+20} for discussions on partial graph properties.

In the time complexity model, recent work 
of Apers and de Wolf \cite{AdW19} shows that a cut of size at most $(1+\epsilon)$-times
that of the minimum 
cut can be found in time $\widetilde O(\sqrt{mn}/\epsilon)$ given adjacency list access to the graph.  
Their work more generally shows that an $\epsilon$-cut sparsifier of a graph with $\widetilde O(n/\epsilon^2)$ many 
edges can be found in time $\widetilde O(\sqrt{mn}/\epsilon)$. 
Originally defined in~\cite{BK96}, an $\epsilon$-cut sparsifier of a graph $G$ is a reweighted subgraph $H$ such that the 
value of every cut in $H$ is a multiplicative $(1+ \epsilon)$-approximation of the corresponding cut value in $G$.  

In the classical setting, a recent work~\cite{RWZ20} studies a generalization of matrix cut queries 
where for a fixed field $\mathbb{F}$ one can access an input matrix $A \in \mathbb{F}^{m \times n}$ through queries of the form $x^T A y$ 
for $x \in \mathbb{F}^m, y \in \mathbb{F}^n$.  They examine the complexity of an assortment of problems from graph theory, linear 
algebra, and statistics in this model.

\subsection{Organization}
The rest of the paper is organized as follows. In \cref{sec:pre}, we set up the model and notation, and give some algorithmic ingredients such as degree estimation by OR queries, and a quantum primitive of vector learning that is used a number of times in later sections. In \cref{sec:matrix_learning} and \cref{sec:graph_learning} we introduce several oracles for accessing a matrix or a graph, and discuss their relative powers. Some intermediate results built on these oracles will be presented, which lead to the cut oracle algorithms in the next sections:  \cref{sec:connect} for computing connected components by $O(\log(n)^6)$ queries, and \cref{sec:span} for computing spanning forest by $O(\log(n)^8)$ cut queries. Two applications of the latter algorithm will also be given, which test if a graph is bipartite or acyclic, also with $O(\log(n)^8)$ many cut queries. 
The paper concludes with several open problems. 

\section{Preliminaries and primitives} \label{sec:pre}
For a positive integer $M$, we denote $\{0,1,2,\ldots, M-1\}$ by $[M]$.
For two vectors $X, Y \in [M]^k$, we denote their dot product over the integers by
$X \cdot Y = \sum_{i=1}^k X_iY_i.$  For a set $S \subseteq \{1, \ldots, M\}$ we denote 
the complement of $S$ by $\bar S$.  For a set $U$, we let $\tilde U = \{ \{u\} : u \in U\}$.  For a string $x \in \{0,1\}^n$ we use $|x|$ for the Hamming 
weight of $x$, i.e.\ the number of ones.  Let $\OR_n : \{0,1\}^n \rightarrow \{0,1\}$ denote the OR function, 
i.e.\ the function such that $\OR_n(x) = 1$ iff $|x| > 0$.  Let $\MAJ_n : \{0,1\}^n \rightarrow \{0,1\}$ denote the Majority 
function, i.e.\ the function such that $\MAJ_n(x) =1$ iff $|x| \ge \ceil{n/2}$.  When the input length is clear from context 
we will drop the subscript.

In pseudocode for our algorithms we will use some Matlab-like notation.
We use $\zeros(k,\ell)$ and $\ones(k,\ell)$ to denote the $k$-by-$\ell$ all zeros matrix and all 
ones matrix, respectively.   For $M$ a $k$-by-$\ell$ matrix and 
$H \subseteq \{1, \ldots, k\}, R \subseteq \{1, \ldots, \ell\}$ we use 
$M(H,R)$ to refer to the $|H|$-by-$|R|$ submatrix of $M$ given by selecting the 
rows of $M$ in $H$ and columns of $M$ in $R$.  We will further use the shorthand 
$M(i\colon j, k\colon \ell)$ for $M(H,R)$ where $H = \{i, i+1, \ldots,j\}$ and $R = \{k, k+1, \ldots, \ell\}$.  
To denote the $i^{th}$ column of $M$ we will use $M(:,i)$.  
Similarly for a vector $x \in \mathbb{R}^k$, 
we use $x(H) \in \mathbb{R}^{|H|}$ to denote the vector formed by selecting the coordinates in $H$.  
For column vectors $x \in \R^k, y \in \R^\ell$ we use $[x;y] \in \R^{k+\ell}$ for the column vector 
formed by vertically concatenating them. For a vector $x \in \mathbb{R}^k$ and $d \leq k$, we say that
$x$ is $d$-sparse if at most $d$ coordinates in $x$ are nonzero. We use $\ell_0(x)$ to denote the sparsity, i.e.\ the number of nonzero entries, of vector $x$.

We will frequently encounter the situation where we have a set $\Scal = \{S_1, \ldots, S_k\}$ 
and take a subset $H \subseteq \Scal$ where $H = \{S_{i_1}, \ldots, S_{i_t}\}$.  
We use $\ind(H) = \{i_1, \ldots, i_t\}$ to return the indices of the elements in $H$.

\begin{lemma}
\label{lem:ran_red}
Let $r,M,d$ be positive integers with $M \ge 2$ and $d \le r/2$.  
Let $N \le d \binom{r}{d}M^d$ be the number of
$d$-sparse strings in $[M]^r$.  Let 
$A$ be an $N$-by-$r$ matrix whose rows are all the $d$-sparse strings in $[M]^r$.  Let $R$
be a random $r$-by-$q$ Boolean matrix, with each entry chosen independently and uniformly from 
$\{0,1\}$.  If $q = \ceil{2d\log(eMr/d) + 2\log(d) + \log(1/\delta)}$ then 
$AR \bmod M$ will have distinct rows with probability at least $1-\delta$.
\end{lemma}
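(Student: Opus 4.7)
The plan is to reduce distinctness of the rows of $AR \bmod M$ to a pairwise collision bound, establish that pairwise bound by a leftover-randomness argument on a single column of $R$, and conclude by a union bound over all pairs of rows.

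For the pairwise bound, I would fix two distinct $d$-sparse rows $x, y \in [M]^r$ of $A$ and consider $z = x - y$. Since $x \neq y$ and each coordinate of $x$ and $y$ lies in $\{0, \dots, M-1\}$, at least one coordinate $z_i$ is nonzero and satisfies $|z_i| < M$; in particular $z_i \not\equiv 0 \pmod{M}$. For a single column $v \in \{0,1\}^r$ of $R$, condition on all coordinates of $v$ other than the $i$-th. The quantity $z \cdot v \bmod M$ then takes two distinct values as $v_i$ ranges over $\{0,1\}$, since they differ by $z_i \bmod M \neq 0$. Hence at most one choice of $v_i$ gives $z \cdot v \equiv 0 \pmod{M}$, so $\Pr[(x-y)\cdot v \equiv 0 \pmod{M}] \leq 1/2$. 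Because the $q$ columns of $R$ are independent, this yields $\Pr[xR \equiv yR \pmod{M}] \leq 2^{-q}$.

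I would then apply a union bound over the at most $\binom{N}{2} \leq N^2/2$ unordered pairs of rows. Combining the hypothesis $N \leq d \binom{r}{d} M^d$ with the standard estimate $\binom{r}{d} \leq (er/d)^d$ gives $\log N \leq \log d + d \log(eMr/d)$, so the probability that some pair of rows of $AR \bmod M$ collides is at most $N^2 \cdot 2^{-q-1}$, which is at most $\delta$ for the stated value of $q$. There is no genuine obstacle here; the only subtle point is ensuring that the nonzero coordinate of $x - y$ is nonzero modulo $M$, which is automatic because its absolute value is strictly less than $M$.
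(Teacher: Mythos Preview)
Your proposal is correct and follows essentially the same approach as the paper: establish a pairwise collision probability of at most $2^{-q}$ via independence of the columns of $R$, then union bound over the $\binom{N}{2}$ pairs of rows. The paper's proof is terser and omits both the justification for the $1/2$ bound on a single column (your conditioning argument on one coordinate of $x-y$) and the arithmetic verifying that the stated $q$ suffices; you have simply filled in those details.
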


\begin{proof}
Let $A_i, A_j$ be two different rows of $A$.  Then for a random vector $z \in \{0,1\}^r$,  
$\Pr_z[A_i z \bmod M = A_j z \bmod M] \leq 1/2$.  Therefore 
$\Pr_R[ A_i R \bmod M = A_j R \bmod M] \leq 1/2^q$.  The result follows by a union bound over 
the $\binom{N}{2}$ pairs of distinct rows.
\end{proof}

\begin{definition}[OR query]
For $x \in \{0,1\}^\ell$ and $S \subseteq \{1, \ldots, \ell\}$ an OR query $\OR(x,S)$ returns $\vee_{i \in S} x_i$.  
For strings $x^{(1)}, \ldots, x^{(k)} \in \{0,1\}^\ell$ and a subset $S \subseteq \{1, \ldots, \ell\}$, a $k$-OR query 
returns the string $a \in \{0,1\}^k$ where $a_i = \OR(x^{i},S)$.  
\end{definition}

The problem of estimating the Hamming weight of a string $x$ using OR queries was considered in the seminal work of Stockmeyer \cite{Stockmeyer83}.  We recount 
his basic analysis here, modifying it for our application of using $k$-OR queries to estimate the Hamming weight of $x^{(1)}, \ldots, x^{(k)}$ in parallel.

\begin{fact}
\label{fact:exp}
For all $x \ge 1$
\[
\left(1-\frac{1}{x}\right)^x < \frac{1}{e} < \left(1- \frac{1}{x}\right)^{x-1}
\]
\end{fact}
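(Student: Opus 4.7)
The plan is to take logarithms and reduce both desired inequalities to a single elementary analytic fact, namely $\ln(1+z)<z$ for every real $z\neq 0$ with $z>-1$. That inequality itself is standard (it is immediate from the strict concavity of $\ln$ on $(0,\infty)$, or equivalently from $e^z>1+z$ for $z\neq 0$), so the work is to do the two substitutions cleanly.

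I would first dispense with the boundary case $x=1$, where the left side is $0<1/e$ and the right side is $0^0=1>1/e$ under the standard convention. For $x>1$ both sides of $1-1/x$ lie in $(0,1)$, so I can freely take logarithms. The left inequality $(1-1/x)^x<1/e$ becomes $x\ln(1-1/x)<-1$. Applying $\ln(1+z)<z$ with $z=-1/x\in(-1,0)$ gives $\ln(1-1/x)<-1/x$, and multiplying through by the positive quantity $x$ yields the claim.

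For the right inequality $(1-1/x)^{x-1}>1/e$, taking logs gives $(x-1)\ln(1-1/x)>-1$. Since $\ln(1-1/x)<0$ and $x-1>0$, I rewrite this as $-\ln(1-1/x)<1/(x-1)$, i.e.\ $\ln\!\bigl(1+\tfrac{1}{x-1}\bigr)<\tfrac{1}{x-1}$. This is exactly $\ln(1+z)<z$ with $z=1/(x-1)>0$, which holds by the same elementary fact.

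I do not expect a main obstacle: the only subtlety is making sure the sign of $\ln(1-1/x)$ is tracked correctly when multiplying or dividing, and that the $x=1$ case is treated separately because the inequality $\ln(1+z)<z$ is not applicable at $z=-1$. Everything else is a one-line substitution.
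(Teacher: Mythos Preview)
Your proof is correct. The paper states this as a \emph{Fact} without proof, so there is no argument to compare against; your reduction of both inequalities to the single estimate $\ln(1+z)<z$ for $z>-1$, $z\neq 0$, together with the separate treatment of $x=1$, is a clean and complete justification.
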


\begin{definition}
Let $x \in \{0,1\}^\ell$.  An $r$-out-of-$\ell$ sample consists of uniformly at random choosing $r$ many elements of $\{1,\ldots, \ell\}$ with replacement.  An $r$-test
consists of taking an $r$-out-of-$\ell$ sample $S$ and querying $\OR(x,S)$.  
When $\OR(x,S) = 1$, we say that the $r$-test $S$ succeeds.
\end{definition}

\begin{lemma}
\label{lem:ell_test}
Let $x \in \{0,1\}^\ell$ and suppose that $|x| = t$.  The probability that an $r$-test $S$ succeeds is
\[
\Pr_S[\OR(x,S) = 1] = 1 - \left(1- \frac{t}{\ell}\right)^r \enspace,
\]
where the probability is taken over the choice of an $r$-out-of-$\ell$ sample $S$.  
In particular,
\[
1-\exp\left(-\frac{r t}{\ell}\right) < \Pr_S[\OR(x,S) = 1] < 1- \exp\left(-\frac{r t}{\ell-t}\right)
\]
\end{lemma}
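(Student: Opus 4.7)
The plan is to establish the equality first by a direct independence argument, then obtain the two exponential bounds by applying \cref{fact:exp} with the substitution $x = \ell/t$.

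For the equality, recall that an $r$-out-of-$\ell$ sample consists of $r$ independent uniformly random draws from $\{1,\ldots,\ell\}$. A single draw lands on one of the $t$ indices where $x$ is $1$ with probability $t/\ell$, so it misses all the ones with probability $1 - t/\ell$. By independence of the draws, the probability that all $r$ samples miss the $t$ ones is $(1-t/\ell)^r$, and so $\OR(x,S)=1$ with probability exactly $1-(1-t/\ell)^r$. If $t=0$ or $t=\ell$ the formula holds trivially, so I may assume $1 \le t \le \ell-1$, which in particular gives $x = \ell/t \ge 1$ as needed to invoke \cref{fact:exp}.

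For the lower bound on the success probability, I would apply the left half of \cref{fact:exp} with $x = \ell/t$ to get $(1 - t/\ell)^{\ell/t} < 1/e$. Raising both sides to the positive power $rt/\ell$ yields $(1-t/\ell)^r < \exp(-rt/\ell)$, and subtracting from $1$ gives the claimed lower bound $1 - \exp(-rt/\ell) < \Pr_S[\OR(x,S)=1]$. For the upper bound, I would use the right half of \cref{fact:exp}, namely $1/e < (1-t/\ell)^{\ell/t - 1} = (1-t/\ell)^{(\ell-t)/t}$, and raise both sides to the positive power $rt/(\ell-t)$ to obtain $\exp(-rt/(\ell-t)) < (1-t/\ell)^r$. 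Subtracting from $1$ then yields the stated upper bound.

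There is no real obstacle here: the whole proof is a one-line probability computation followed by two applications of \cref{fact:exp} with carefully chosen exponents. The only thing worth flagging is to handle the degenerate cases $t \in \{0, \ell\}$ separately (or to note that when $t = \ell$ the quantity $\exp(-rt/(\ell-t))$ should be read as $0$, in which case both bounds still hold), so that the hypothesis $x \ge 1$ of \cref{fact:exp} is satisfied in the nontrivial range.
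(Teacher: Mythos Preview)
Your proof is correct and follows exactly the approach the paper takes: compute the exact probability via independence of the draws, then derive the two exponential bounds by applying \cref{fact:exp} with $x=\ell/t$. Your write-up is in fact more detailed than the paper's (which simply says the second statement ``follows by applying \cref{fact:exp} to the first statement''), and your handling of the degenerate cases $t\in\{0,\ell\}$ is a nice touch the paper omits.
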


\begin{proof}
As $S$ is chosen with replacement, the probability that any element $i \in S$ satisfies $x_i = 1$ is exactly $\frac{t}{\ell}$.  The first statement 
follows accordingly.  The second statement follows by applying \cref{fact:exp} to the first statement.
\end{proof}

\begin{algorithm}
\caption{Approximate Count$(\delta)$}
\label{alg:parallel_count}
 \hspace*{\algorithmicindent} \textbf{Input:} $k$-OR oracle for strings $x^{(1)}, \ldots, x^{(k)} \in \{0,1\}^\ell$, error bound $\delta$ \\
 \hspace*{\algorithmicindent} \textbf{Output:} A vector $b \in \mathbb{R}^k$ that is a good estimate of $(|x^{(1)}|, \ldots, |x^{(k)}|)$ with probability $1-\delta$.
\begin{algorithmic}[1]
\State $a \gets 200 \ceil{\log\left( \frac{k (\ceil{\log \ell}+1)}{\delta} \right)}$
\For{$j=0$ to $\ceil{\log \ell}$}
  \For{$q = 1$ to $a$}
    \State $S_q \gets$ randomly choose $\min(2^j, \ell)$ many elements from $\{1, \ldots, \ell\}$ with replacement
    \State $\mathrm{ans}_q \gets (\OR(x^{(1)}, S_q), \ldots, \OR(x^{(k)}, S_q))$
  \EndFor
  \For{$i=1$ to $k$}
    \State $B(i,j) \gets \MAJ(\mathrm{ans}_1(i), \ldots, \mathrm{ans}_a(i))$
  \EndFor
\EndFor
\For{$i=1$ to $k$}
  \State $s \gets \argmin_j \{B(i,j) = 1\}$
  \State $b(i) \gets \frac{\ell}{2^s}$
\EndFor
\end{algorithmic}
\end{algorithm}

\begin{definition}[Good estimate]
\label{def:good}
We say that $b \in \R^k$ is a \emph{good estimate} of $c \in \R^k$ if $b(i)/4 \le c(i) \le 2 b(i)$ for all $i \in \{1, \ldots, k\}$.
\end{definition}

Next we see how to use $k$-OR queries to estimate the Hamming weights of $k$ given strings.
\begin{lemma}
\label{lem:app_count_analysis}
Let $x^{(1)}, \ldots, x^{(k)} \in \{0,1\}^\ell$.
Taking $a = 200 \ceil{\log\left( \frac{k (\ceil{\log \ell}+1)}{\delta} \right)}$, \cref{alg:parallel_count} outputs a vector $b$ that is a good estimate of $(|x^{(1)}|, \ldots, |x^{(k)}|)$ 
with probability at least $1-\delta$ after making $O\left((\log(\ell)+1)\log\left( \frac{k (\log(\ell)+1)}{\delta} \right)\right)$ many $k$-OR queries.  
\end{lemma}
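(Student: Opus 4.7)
The plan is to split the argument into an immediate query count and a correctness analysis that majority-amplifies the $k$-OR answers to isolate, for each string $x^{(i)}$, the scale at which the success probability of an $r$-test crosses a constant threshold. The query count follows directly from the pseudocode: the outer loop runs $\lceil \log \ell \rceil + 1$ times and each iteration issues $a$ many $k$-OR queries, giving $(\lceil \log \ell \rceil + 1) a = O((\log \ell + 1) \log(k(\log \ell + 1)/\delta))$ queries in total. Correctness is where the real work lies.

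For correctness, I fix $i$, write $t_i = |x^{(i)}|$ and $\tau_i = \log(\ell/t_i)$ (the case $t_i = 0$ can be flagged by the absence of any $j$ with $B(i,j) = 1$ and handled separately), and let $p_{i,j} = \Pr[\OR(x^{(i)}, S) = 1]$ for a single $r_j$-test with $r_j = \min(2^j, \ell)$. Using \cref{lem:ell_test} I would argue that $p_{i,j} \ge 1 - 1/e > 0.6$ whenever $j \ge \lceil \tau_i \rceil$, and $p_{i,j} < 1 - e^{-1/3} < 0.4$ whenever $j \le \lfloor \tau_i \rfloor - 2$. The lower bound uses $r_j t_i / \ell \ge 1$ when $r_j = 2^j$ and the estimate $(1-1/\ell)^\ell < 1/e$ when $r_j = \ell$; the upper bound uses that $\tau_i \ge 2$ in this regime, so $t_i \le \ell/4$, which makes $r_j t_i/(\ell - t_i) \le (4/3)(r_j t_i/\ell) \le 1/3$. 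Thus, outside a transition window of at most three consecutive integers around $\tau_i$, $p_{i,j}$ is bounded away from $1/2$ by a constant margin.

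Since $B(i,j)$ is the majority of $a$ independent Bernoulli$(p_{i,j})$ samples, a Chernoff bound gives $\Pr[B(i,j) = 0] \le \exp(-\Omega(a))$ when $p_{i,j} \ge 0.6$ and $\Pr[B(i,j) = 1] \le \exp(-\Omega(a))$ when $p_{i,j} \le 0.4$. Setting $a = 200 \lceil \log(k(\lceil \log \ell \rceil + 1)/\delta)\rceil$ makes each such probability at most $\delta/(k(\lceil \log \ell \rceil + 1))$, so a union bound over the at most $k(\lceil \log \ell \rceil + 1)$ relevant pairs $(i,j)$ shows that with probability at least $1 - \delta$, for every $i$ simultaneously, $B(i, j) = 0$ for all $j \le \lfloor \tau_i \rfloor - 2$ and $B(i, \lceil \tau_i \rceil) = 1$. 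On this event, $s = \argmin\{j : B(i,j) = 1\}$ lies in $\{\lfloor \tau_i \rfloor - 1, \lfloor \tau_i \rfloor, \lceil \tau_i \rceil\}$, hence $s - \tau_i \in (-2, 1)$, which translates to $t_i/2 < b(i) = \ell/2^s < 4 t_i$ and so to the good-estimate condition $b(i)/4 \le t_i \le 2b(i)$ of \cref{def:good}.

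The main obstacle is the bookkeeping inside the transition window: I need to verify that no matter how the uncertain values $B(i, \lfloor \tau_i \rfloor - 1)$ and, if $\tau_i$ is non-integer, $B(i, \lfloor \tau_i \rfloor)$ resolve, the resulting $s$ still yields a valid estimate. Once the thresholds $0.6$ and $0.4$ are fixed, everything else is a routine combination of \cref{lem:ell_test}, a Chernoff bound, and a union bound.
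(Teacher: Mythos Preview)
Your proposal is correct and follows essentially the same approach as the paper's proof: both bound the success probability of an $r$-test via \cref{lem:ell_test}, identify the constant-gap thresholds $0.6$ and $0.4$ on either side of the critical scale, amplify with a Chernoff bound on the majority of $a$ trials, and finish with a union bound over all $(i,j)$ pairs. The only cosmetic differences are that the paper treats the case $|x^{(i)}| \ge \ell/2$ explicitly rather than absorbing it into a vacuous ``upper'' regime, and uses the slightly looser bound $1-e^{-1/2}<0.4$ in place of your $1-e^{-1/3}<0.4$; neither affects the argument.
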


\begin{proof}
First we argue about the complexity.  There are $\ceil{\log \ell}+1$ iterations of the outer for loop, and inside the loop we make $a$ many $k$-OR queries.  
Thus the total number of $k$-OR queries is 
\[
O(a (\log(\ell)+1)) = O\left((\log(\ell)+1)\log\left( \frac{k (\log(\ell)+1)}{\delta} \right)\right) \enspace.
\]

Now we argue about the error probability.  Fix some $x^{(i)}$, which will henceforth be called $x$.  Say the estimate the algorithm gives 
for $|x|$ is $\alpha$.  We will upper bound the probability that $\alpha$ is a bad estimate for $|x|$, and then use a union bound over the $k$ many strings to 
get the final result.

Suppose that $r=2^j$ is such that $r \ge \ell/|x|$.  Then if $S$ is a $r$-out-of-$\ell$ sample, by \cref{lem:ell_test} 
\[
\Pr_S[\OR(x,S) = 1] > 1 - \exp\left( -\frac{r |x|}{\ell} \right) \ge 1- \frac{1}{e} > 0.6 \enspace.
\]
Thus by a Chernoff bound, the probability that the majority of $a$ many $r$-tests is $0$ is at most $\exp(-a/200)$.  As the 
algorithm must perform an $r$-test for an $r$ satisfying $\ell/|x| \le r \le 2\ell/|x|$, this means our estimate $\alpha$ will satisfy $|x| \le 2\alpha$, 
except with probability at most $\exp(-a/200)$.

To bound the probability that our estimate is too large, we first need to treat the special case where
$|x| \ge \ell/2$.  In this case, the probability that the majority of $a$ many $2$-tests is $1$
is at least $1-\exp(-a/200)$.  The algorithm is therefore correct in this case with at least this probability as it outputs 
a valid answer if either a majority of $1$-tests or a $2$-tests succeeds.

For the remainder of the proof, therefore, we assume that $|x| < \ell/2$.  Suppose that $r=2^j$ is such that $\ell/r > 4 |x|$.  
Then if $S$ is an $r$-out-of-$\ell$ sample, by \cref{lem:ell_test} 
\[
\Pr_S[\OR(x,S) = 1] < 1 - \exp\left( -\frac{r |x|}{\ell-|x|} \right) \le 1- \exp\left(-\frac{2r |x|}{\ell} \right) \le  1- \exp(-1/2) \le 0.4 \enspace.
\]
Thus by a Chernoff bound, the probability that the majority of $a$ many $r$-tests is $1$ is at most $\exp(-a/200)$.
By a union bound, the probability that an $r$-test will be $1$ for any $r=2^j$ with $r < \ell/(4|x|)$ is at most 
$\ceil{\log \ell} \exp(-a/200)$.  Thus, our estimate $\alpha$ will satisfy $\alpha/4 \le |x|$ except with probability $\ceil{\log \ell} \exp(-a/200)$.
Thus by a union bound, overall we will produce an estimate $\alpha$ satisfying $\alpha/4 \le |x| \le 2\alpha$ except with probability 
$(\ceil{\log \ell} +1)\exp(-a/200)$.

Finally, by a union bound over the $k$ many strings the total failure probability will be at most $k (\ceil{\log \ell}+1) \exp(-a/200)$, giving the 
lemma.
\end{proof}

\begin{lemma}
\label{lem:sample}
Let $x^{(1)}, \ldots, x^{(k)} \in \{0,1\}^\ell$ be such that $t/8 \le |x^{(i)}| \le 2t$ for all $i \in \{1, \ldots, k\}$.
For $\delta > 0$, sample with replacement $\frac{8 \ell \ln(k/\delta)}{t}$ elements of $\{1, \ldots, \ell\}$, and call the resulting set $R$. Then 
\begin{itemize}
\item $\Pr_R[ \exists i \in \{1, \ldots, k\}: |x^{(i)}(R)| = 0] \le \delta$,
\item $\Pr_R[ \exists i \in \{1, \ldots, k\}: |x^{(i)}(R)| > 64 \ln(k/\delta)] \le \delta$.
\end{itemize}
\end{lemma}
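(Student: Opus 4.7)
The plan is to analyze each string $x^{(i)}$ separately and then apply a union bound over the $k$ strings. Throughout, let $r \defeq 8\ell \ln(k/\delta)/t$ denote the sample size, and note that because $R$ is drawn with replacement, its $r$ entries are i.i.d.\ uniform on $\{1, \ldots, \ell\}$.

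For the first bullet, fix $i$ and observe that $|x^{(i)}(R)| = 0$ exactly when none of the $r$ independent draws lands in the support of $x^{(i)}$. Since that support has size at least $t/8$, each individual draw avoids it with probability at most $1 - t/(8\ell)$, so all $r$ draws avoid it simultaneously with probability at most $(1 - t/(8\ell))^r \le \exp(-rt/(8\ell)) = \delta/k$, where the first inequality is the elementary bound $1-y \le e^{-y}$ and the equality comes from plugging in $r$. Summing this bound over $i \in \{1,\ldots,k\}$ gives the desired $\delta$.

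For the second bullet, fix $i$ and write $|x^{(i)}(R)| = \sum_{j=1}^r Z_j$, where $Z_j$ is the Bernoulli indicator that the $j$-th draw lands in the support of $x^{(i)}$. These indicators are i.i.d.\ with mean $|x^{(i)}|/\ell$, so $\mu \defeq \E[\sum_j Z_j]$ satisfies $\mu \le 2rt/\ell = 16 \ln(k/\delta)$. Because the target value $64 \ln(k/\delta)$ is at least $4\mu$, the multiplicative Chernoff upper-tail bound in the regime $\delta' \ge 1$ yields $\pr[\sum_j Z_j > 64\ln(k/\delta)] \le \exp(-(64\ln(k/\delta) - \mu)/3) \le \exp(-16\ln(k/\delta)) \le \delta/k$. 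A second union bound over the $k$ strings completes the argument.

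No step here presents a real obstacle; the proof is a textbook Chernoff-and-union-bound computation. The only care needed is to track the constants $8$ and $64$ in the sample-size and threshold so that the exponents in both tail estimates cleanly reach $\ln(k/\delta)$ per string, which in turn requires the assumption $t/8 \le |x^{(i)}| \le 2t$ on both sides.
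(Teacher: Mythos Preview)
Your proof is correct and follows essentially the same approach as the paper: for each string bound the miss probability by $(1-t/(8\ell))^r \le \delta/k$ and the overshoot probability by a multiplicative Chernoff bound, then union bound over the $k$ strings. The paper phrases the first part via \cref{lem:ell_test} and for the second part uses both the lower and upper bounds on $\mu$ (applying Chernoff at $c=4$ times the mean), whereas you use only the upper bound $\mu \le 16\ln(k/\delta)$ and compare directly to the fixed threshold $64\ln(k/\delta)$; both routes give $\delta/k$ per string.
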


\begin{proof}
Let $x \in \{0,1\}^\ell$ with $t/8 \le |x| \le 2t$.
The probability that $|x(R)| = 0$ is at most $\delta/k$ by \cref{lem:ell_test}.  Thus the first item holds by a union 
bound over the $k$ many strings $x^{(1)}, \ldots, x^{(k)}$.

For the second item, note $\ln(k/\delta) \le \mathbb{E}_R[|x(R)|] \le 16 \ln(k/\delta)$.  By a Chernoff bound, the probability 
that $|x(R)|$ is a factor $c$ larger than its expectation is at most $\exp(-(c-1) \ln(k/\delta)/3)$.  Thus taking $c =4$ the probability 
that $|x(R)| > 64 \ln(k/\delta)$ is at most $\delta/k$.  The second item then holds by a union 
bound over the $k$ many strings $x^{(1)}, \ldots, x^{(k)}$.
\end{proof}

Next we give our quantum primitive algorithm for learning a vector from subset sums. Note that different than state learning in many previous work, here we aim to learn all entries of the vector precisely and correctly. This algorithm will be repeatedly used in various forms in later sections. 

\begin{lemma}
\label{lem:QFT}
Let $x \in [M]^k$ and suppose we have an oracle that for any subset $S \subseteq [k]$ returns 
$\sum_{i \in S} x_i \bmod M$.  Then there is a quantum algorithm which learns $x$ with $m = \ceil{\log(M)}$ queries 
without any error.  
\end{lemma}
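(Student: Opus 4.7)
The plan is to view the oracle as providing subset-sum queries mod $M$ and lift it to a Bernstein--Vazirani-style algorithm over the cyclic group $\mathbb{Z}_M$. If one had access to a $\mathbb{Z}_M^k$-valued oracle computing $\langle x, y\rangle = \sum_i x_i y_i \bmod M$ for arbitrary $y \in [M]^k$, then a single $\mathrm{QFT}_M$-based BV query would recover $x$ exactly; my task is therefore to simulate such a query using $m = \lceil \log M \rceil$ Boolean subset queries by binary decomposition. To this end I will first prepare $k$ subregisters of $m$ qubits each in the uniform superposition $\frac{1}{\sqrt{M^k}}\sum_{y \in [M]^k}\ket{y}$, by applying $\mathrm{QFT}_M$ to $\ket{0}$ in each, together with $m$ phase ancillas $\ket{\phi_b} = \frac{1}{\sqrt{M}}\sum_{a \in [M]}\omega_M^{-2^b a}\ket{a}$ for $b = 0,\ldots,m-1$, where $\omega_M = e^{2\pi i/M}$. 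Each $\ket{\phi_b}$ is an eigenstate of the cyclic shift $T:\ket{a}\mapsto\ket{a+1 \bmod M}$ with eigenvalue $\omega_M^{2^b}$.

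For each $y \in [M]^k$, I write $y_i = \sum_{b=0}^{m-1} 2^b y_i^{(b)}$ in binary and set $S_b(y) = \{i : y_i^{(b)} = 1\}$. For each $b$ in turn, I apply the oracle $U:\ket{S}\ket{a}\mapsto\ket{S}\ket{a + f(S) \bmod M}$ coherently to $\ket{S_b(y)}\otimes\ket{\phi_b}$; since $\ket{\phi_b}$ is a shift eigenstate, this imparts the phase $\omega_M^{2^b f(S_b(y))}$ on $\ket{y}$. After all $m$ queries, the total phase on $\ket{y}$ is $\omega_M^{\sum_b 2^b f(S_b(y))} = \omega_M^{\sum_i x_i y_i}$, because $\sum_b 2^b f(S_b(y)) = \sum_b 2^b \sum_i x_i y_i^{(b)} = \sum_i x_i y_i$. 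The overall state is then $\frac{1}{\sqrt{M^k}}\sum_{y \in [M]^k} \omega_M^{\sum_i x_i y_i}\ket{y} = \mathrm{QFT}_M^{\otimes k}\ket{x}$; applying $\mathrm{QFT}_M^{-1}$ to each subregister yields $\ket{x}$, which I read off by a single computational-basis measurement, completing the algorithm with zero error and exactly $m$ oracle queries.

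The one subtle point I anticipate is handling $M$ that is not a power of two. If one instead simply applied Hadamards to all $mk$ qubits, the superposition would range over $[2^m]^k \supsetneq [M]^k$, and the resulting coordinate-wise states $\frac{1}{\sqrt{2^m}}\sum_{y=0}^{2^m-1}\omega_M^{x_i y}\ket{y}$ are pairwise non-orthogonal for different $x_i \in [M]$ when $M \nmid 2^m$, precluding zero-error discrimination. Restricting to $[M]^k$ via $\mathrm{QFT}_M$ sidesteps this; since exact $\mathrm{QFT}_M$ is a valid unitary for any $M$ (implementable with general phase gates), there is no additional cost in query count.
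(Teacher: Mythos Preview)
Your proof is correct and follows essentially the same approach as the paper: both use Bernstein--Vazirani over $\mathbb{Z}_M$, simulate the inner-product oracle $y \mapsto \langle x,y\rangle \bmod M$ via binary decomposition of each $y_i$ into $m=\lceil\log M\rceil$ bits (yielding $m$ Boolean subset queries), and then undo the Fourier transform to read off $x$. The only cosmetic difference is that the paper uses a single phase ancilla $\ket{\xi_M}=\frac{1}{\sqrt M}\sum_j \omega_M^{j}\ket{j}$ and classically combines the $m$ partial sums $t(j)\cdot x$ into $t\cdot x$ before kicking back, whereas you use $m$ separate eigenstate ancillas $\ket{\phi_b}$ with eigenvalues $\omega_M^{2^b}$ so the phases accumulate directly; both realize the same unitary with the same query count. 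Your explicit remark on why one must use $\mathrm{QFT}_M$ rather than Hadamards when $M$ is not a power of two is a nice addition that the paper leaves implicit.
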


\begin{proof}
We can represent all the elements of $[M]$ using $m$ bits. 
We first describe how we can use the oracle to compute $t \cdot x \bmod M$ for any $t \in [M]^k$.  
Let $t = (t_1, \ldots, t_k)$ and let $t_i = \sum_{j=0}^{m-1} 2^j t_i(j)$, where $t_i(j) \in \{0,1\}$.  Then 
\[
t \cdot x \bmod M = \sum_{j = 0}^{m-1} 2^j ( t(j) \cdot x \bmod M) \bmod M
\]
where $t(j) = (t_1(j), t_2(j), \ldots, t_k(j)) \in \{0,1\}^k$.  Now $t(j) \cdot x \bmod M$ can be computed with one call to the oracle, thus 
$t \cdot x \bmod M$ can be computed with $m$ calls to the oracle.  

With a quantum algorithm we can do this in superposition over all $t$. Combining this with the phase kickback trick
allows us to compute the inverse quantum Fourier transform of $x$ over 
$(\Z_M)^k$.  
Similarly to the Bernstein-Vazirani algorithm, we can then learn $x$ by applying the quantum Fourier transform. 

More precisely, we will work with two quantum registers. In the first register we are computing over the group
$(\Z_M)^k$, and in the second register over $\Z_M$. We will use in the second register
the auxiliary state
$$
\ket{\xi_M} = \frac{1}{\sqrt M} \sum_{j=0}^{M-1} \omega_M^j \ket{j},
$$
where $\omega_M$ is a primitive $M$th root of the unity. In the first register we create the uniform superposition,
therefore we start with the state
$$
\frac{1}{\sqrt {M^k}} \sum_{t \in [M]^k} \ket{t} \ket{\xi_M}.
$$
With $m$ queries to the oracle we compute $t \cdot x \bmod M$ that we add to the second register, creating
$$
\frac{1}{\sqrt {M^k}} \sum_{t \in [M]^k} \omega_M^{- t \cdot x} \ket{t} \ket{\xi_M}.
$$
Applying the quantum Fourier transform over $(\Z_M)^k$, we can find in the first register $x.$
Overall the quantum algorithm uses $m$ queries.
\end{proof}

\section{Learning a matrix}
\label{sec:matrix_learning}
Let $n$ be a positive integer, and $V$ an ordered set of size $n$.
For a subset $S$ of $V$, we denote by 
$\chi_S \in \{0,1\}^n$ is the characteristic vector of $S$.

\begin{definition}[Matrix cut oracle]
Given a matrix $A \in \N^{k \times \ell},$ the matrix cut oracle for $A$ is the function 
$m_A : \{0,1\}^k \times \{0,1\}^\ell \rightarrow \N$ satisfying $m_A(x,y) = x^T A y$.  
\end{definition}

\begin{lemma}
\label{lem:Ay}
Let $\alpha, M \in \N$ and $A \in [\alpha]^{k \times \ell}$ be a matrix.  There is a quantum algorithm making $\ceil{\log(M)}$ many matrix cut queries 
to $A$ that perfectly computes $(Ay) \bmod M$ for any $y \in \{0,1\}^\ell$.
\end{lemma}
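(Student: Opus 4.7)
The plan is to reduce this lemma directly to \cref{lem:QFT} by observing that a matrix cut query, when one coordinate is fixed to $\chi_S$, is exactly a subset-sum oracle for the vector $Ay$.

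Concretely, I would proceed as follows. Fix the vector $y \in \{0,1\}^\ell$ throughout and let $v = Ay \in \N^k$; note we never form $v$ explicitly, we only need oracle access to its coordinates modulo $M$. For any subset $S \subseteq \{1, \ldots, k\}$ with characteristic vector $\chi_S \in \{0,1\}^k$, a single matrix cut query yields
\[
m_A(\chi_S, y) \;=\; \chi_S^T A y \;=\; \chi_S^T v \;=\; \sum_{i \in S} v_i.
\]
Reducing modulo $M$ on our side (a free classical post-processing step) turns this into exactly the subset-sum-mod-$M$ oracle for the vector $v \bmod M \in [M]^k$ required by \cref{lem:QFT}.

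Next I would invoke \cref{lem:QFT} with the role of its vector $x$ played by $v \bmod M$. That lemma produces a quantum algorithm that learns $v \bmod M = (Ay) \bmod M$ exactly, using $m = \ceil{\log M}$ queries to this subset-sum oracle. Since each such oracle call is implemented by exactly one matrix cut query to $A$ (with the second argument always $y$), the total number of matrix cut queries is $\ceil{\log M}$, and the output is correct with no error.

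There is no real obstacle here; the only thing to be careful about is making sure the queries of \cref{lem:QFT} are made in quantum superposition, which is fine because the reduction is reversible: computing $\chi_S^T A y \bmod M$ from $S$ uses one matrix cut query to $A$ and the deterministic operation $z \mapsto z \bmod M$, both of which lift to unitary operations on the appropriate registers. Thus the quantum algorithm of \cref{lem:QFT} can be executed verbatim with its oracle calls replaced by matrix cut queries to $A$, giving the claimed $\ceil{\log M}$-query perfect algorithm.
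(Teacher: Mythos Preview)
Your proposal is correct and follows essentially the same approach as the paper: observe that for fixed $y$ and any $S \subseteq \{1,\ldots,k\}$, one matrix cut query $\chi_S^T A y$ (reduced modulo $M$) serves as the subset-sum oracle for $Ay \bmod M$, then invoke \cref{lem:QFT}. Your additional remarks about lifting the reduction to superposition are sound but more detailed than the paper's terse argument.
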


\begin{proof}
Let $S \subseteq \{1, \ldots, k\}$.  For any $y \in \{0,1\}^\ell$ we have $\sum_{i \in S} (Ay)_i \bmod M = \chi_S^T A y \bmod M$,
thus we can compute $\sum_{i \in S} (Ay)_i \bmod M$ with one matrix 
cut query.  The result then follows from \cref{lem:QFT}.
\end{proof}

\begin{corollary}
\label{cor:individual}
Let $A \in [M]^{k \times \ell}$.  There is a quantum algorithm that perfectly learns $A$ after $\min\{k,\ell\} \cdot \ceil{\log(M)}$ 
many matrix cut queries.
\end{corollary}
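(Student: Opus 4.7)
The plan is a direct application of \cref{lem:Ay} to the standard basis vectors, combined with the observation that matrix cut queries to $A$ are equivalent to matrix cut queries to $A^T$.

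First, I would handle the case $\ell \le k$. The idea is to learn $A$ one column at a time. For each $i \in \{1,\ldots,\ell\}$, apply \cref{lem:Ay} with query vector $y = e_i \in \{0,1\}^\ell$, the $i$-th standard basis vector, to compute $(Ae_i) \bmod M$ using $\ceil{\log M}$ matrix cut queries. Since $Ae_i = A(:,i)$ is the $i$-th column of $A$ and its entries lie in $[M] = \{0, 1, \ldots, M-1\}$, the reduction modulo $M$ is a no-op and we recover $A(:,i)$ exactly. Iterating over all $\ell$ columns uses $\ell \cdot \ceil{\log M}$ matrix cut queries in total and learns $A$ perfectly.

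For the case $k < \ell$, I would exploit the symmetry of the matrix cut oracle: for any $x \in \{0,1\}^k$ and $y \in \{0,1\}^\ell$ we have $x^T A y = y^T A^T x$, so a single matrix cut query to $A$ is also a single matrix cut query to $A^T \in [M]^{\ell \times k}$. Applying the above column-by-column procedure to $A^T$ (whose column count is now $k$) learns $A^T$, and hence $A$, with $k \cdot \ceil{\log M}$ matrix cut queries.

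Combining the two cases gives the claimed bound of $\min\{k,\ell\} \cdot \ceil{\log M}$ queries. There is no real obstacle here; the only point worth noting explicitly is that the entrywise bound on $A$ is precisely what ensures that the modular output of \cref{lem:Ay} coincides with the integer column of $A$, so no additional rounds or lifting are needed.
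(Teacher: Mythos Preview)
Your proposal is correct and mirrors the paper's proof essentially line for line: the paper likewise sets $B = A$ or $B = A^T$ depending on which dimension is smaller, learns each column of $B$ via \cref{lem:Ay} on the standard basis vectors, and uses the fact that entries lie in $[M]$ so the $\bmod\ M$ is a no-op.
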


\begin{proof}
If $\ell \le k$ define $B = A$, otherwise let $B = A^T$.  Let $m = \min\{k,\ell\}$ be the number of columns in $B$.  
For each standard basis vector $e_1, \ldots, e_m$, we can learn $Be_i = Be_i \bmod M$ with $\ceil{\log(M)}$ matrix cut queries by 
\cref{lem:Ay}.  This tells us the $i^{th}$ column of $B$.  Thus we can learn $B$ entirely with $m \ceil{\log(M)}$ many 
matrix cut queries.
\end{proof}

\begin{lemma}
\label{lem:learn_matrix}
Let $A \in [M]^{k \times \ell}$ be a matrix with $d$-sparse rows.
There is a quantum algorithm that 
learns $A$ with probability at least $1-\delta$ after making 
$(4d\ceil{\log(M \ell/d)} + \ceil{\log(1/\delta)}) \ceil{\log(M)}$ 
many matrix cut queries.
\end{lemma}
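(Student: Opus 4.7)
The plan is to reduce learning $A$ to a compressed-sensing-style problem: each $d$-sparse row of $A$ will be recovered from its image under a single random Boolean linear map, and that map will be evaluated in parallel using the Fourier-based subroutine of \cref{lem:Ay}.

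First, I would draw a random Boolean matrix $R \in \{0,1\}^{\ell \times q}$ with entries i.i.d.\ uniform, for an integer $q$ chosen large enough to invoke \cref{lem:ran_red} (applied with $r=\ell$). That lemma says that for $q = \lceil 2d\log(eM\ell/d) + 2\log d + \log(1/\delta)\rceil$, the map $v \mapsto vR \bmod M$ is injective on the set of all $d$-sparse strings in $[M]^\ell$ with probability at least $1-\delta$. A little arithmetic shows this $q$ can be absorbed into $4d\lceil \log(M\ell/d)\rceil + \lceil \log(1/\delta)\rceil$, since the $2d\log e$ and $2\log d$ overhead are dominated by the extra factor of $2$ in front of $d\log(M\ell/d)$ whenever $M\ell/d$ is at least a small constant (and the trivial cases otherwise can be handled directly via \cref{cor:individual}).

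Second, I would use matrix cut queries to compute $AR \bmod M$. Column by column, for each $j \in \{1,\ldots,q\}$, \cref{lem:Ay} produces $A\, R(:,j) \bmod M$ exactly using $\lceil \log M\rceil$ matrix cut queries. Doing this for all $q$ columns gives total query complexity
\[
q \cdot \lceil \log M\rceil \le \bigl(4d\lceil \log(M\ell/d)\rceil + \lceil \log(1/\delta)\rceil\bigr)\lceil \log M\rceil,
\]
matching the target bound.

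Third, conditioned on the good event from the first step, I would recover $A$ row by row: since every row of $A$ is itself a $d$-sparse vector in $[M]^\ell$, and $v\mapsto vR \bmod M$ is injective on such vectors, each row $(AR)_i \bmod M$ identifies the unique $d$-sparse $v$ with $vR \equiv (AR)_i \pmod M$, which must be row $i$ of $A$. The only real subtlety in the proof is checking that the $q$ demanded by \cref{lem:ran_red} fits inside the cleaner expression stated in the lemma — a routine but slightly finicky calculation — while the conceptual content is just the pairing of \cref{lem:ran_red} with \cref{lem:Ay}, with no quantum primitive beyond the one already encapsulated in the latter.
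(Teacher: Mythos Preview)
Your proposal is correct and follows essentially the same approach as the paper: reduce to \cref{lem:ran_red} for a random Boolean matrix, compute $AR \bmod M$ column by column via \cref{lem:Ay}, and handle the degenerate regime (where $d$ is too large for \cref{lem:ran_red} to apply, namely $d \ge \ell/2$) directly via \cref{cor:individual}. The paper makes the case split $d \ge \ell/2$ versus $d < \ell/2$ explicit up front rather than folding it into the arithmetic check, but the content is identical.
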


\begin{proof}
If $d \ge \ell/2$ then we use \cref{cor:individual} to learn $A$ perfectly with $\ell \ceil{\log(M)} \le 4 d\ceil{\log(M \ell/d)}$ 
many matrix cut queries.

Now assume $d < \ell/2$.
By \cref{lem:ran_red}, taking $q = 4 d\ceil{\log(M \ell/d)} + \ceil{\log(1/\delta)}$ and computing $AZ \bmod M$ for a random $\ell$-by-$q$ 
Boolean matrix $Z$ allows us to learn $A$ with probability at least $1-\delta$.  By \cref{lem:Ay}
we can compute $AZ_i \bmod M$ with $\ceil{\log(M)}$ matrix cut queries, where $Z_i$ the $i^{th}$ column of $Z$.  Thus we can learn $A$ after $q \ceil{\log(M)}$ many 
matrix cut queries.
\end{proof}

\begin{definition}[degree sequence]
Let $A \in [M]^{k \times \ell}$ be a matrix.  The degree sequence of $A$ is the vector $x \in \N^k$ such that 
$x_i$ is the number of nonzero entries in the $i^{th}$ row of $A$, for all $i=1, \ldots, k$.  
\end{definition}

\begin{lemma}[Approximate degree sequence]
\label{lem:matrix_approx_degree}
Let $M,k,\ell$ be positive integers and $A \in [M]^{k \times \ell}$ be a matrix.  There is a quantum algorithm, making $O(\log(\ell M) (\log(\ell)+1) \log(\frac{k (\log(\ell) +1)}{\delta}))$ many 
matrix cut queries, that with probability at least $1-\delta$ outputs $\vec{g} \in \R^k$ that is a good estimate of the degree sequence of $A$.  If $M=2$ then there is a 
quantum algorithm that outputs the degree sequence of $A$ perfectly after $\ceil{\log(\ell+1)}$ many matrix cut queries.
\end{lemma}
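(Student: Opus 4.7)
The plan is to split into the two cases announced by the statement and handle each by reducing to a tool already built in the preliminaries: the exact learner of $Ay \bmod M$ from \cref{lem:Ay}, and the approximate counting algorithm \cref{alg:parallel_count}.

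First I would dispatch the easy $M = 2$ case. When $A \in \{0,1\}^{k \times \ell}$ the degree sequence is literally $A \mathbf{1}$ where $\mathbf{1} = \chi_{\{1,\ldots,\ell\}}$, and each entry is an integer in $[0,\ell]$. Applying \cref{lem:Ay} with modulus $M' = \ell+1 > \ell$ and input $y = \mathbf{1}$ therefore recovers $A\mathbf{1} \bmod (\ell+1) = A\mathbf{1}$ exactly, using $\ceil{\log(\ell+1)}$ matrix cut queries. This gives the claimed exact bound.

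For general $M$, I would reduce approximate degree estimation to the parallel OR counting of \cref{alg:parallel_count}. Let $x^{(i)} \in \{0,1\}^\ell$ be the indicator of nonzero entries in the $i$-th row of $A$, so the degree sequence is exactly $(|x^{(1)}|, \ldots, |x^{(k)}|)$. By \cref{lem:app_count_analysis}, a good estimate can be produced with probability at least $1-\delta$ using
\[
O\!\left((\log(\ell)+1)\log\!\left(\tfrac{k(\log(\ell)+1)}{\delta}\right)\right)
\]
many $k$-OR queries on $x^{(1)}, \ldots, x^{(k)}$. The remaining step is to simulate one such $k$-OR query by matrix cut queries to $A$: given $S \subseteq \{1,\ldots,\ell\}$, the $k$-OR answers are the indicators of the nonzero entries of $A\chi_S$. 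Since $A \in [M]^{k \times \ell}$, all entries of $A\chi_S$ lie in $[0, (M-1)\ell]$, so choosing modulus $M' = (M-1)\ell + 1$ makes $A\chi_S \bmod M' = A\chi_S$; \cref{lem:Ay} then returns $A\chi_S$ exactly using $\ceil{\log(M')} = O(\log(M\ell))$ matrix cut queries. Reading off the support yields the $k$-OR answers deterministically.

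Multiplying the per-query cost $O(\log(M\ell))$ by the number of $k$-OR queries gives the claimed overall bound $O\!\left(\log(\ell M)(\log(\ell)+1)\log(k(\log(\ell)+1)/\delta)\right)$, and the success probability is inherited directly from \cref{lem:app_count_analysis} because every $k$-OR simulation is errorless. There is no real obstacle here; the only care needed is to pick $M'$ large enough that the modular reduction in \cref{lem:Ay} is exact, and to make sure the cost accounting bundles the logarithmic factors correctly. The main conceptual step is simply recognizing that $\OR(x^{(i)}, S) = \mathbf{1}[(A\chi_S)_i > 0]$, which turns the degree-sequence task into something the counting machinery already solves.
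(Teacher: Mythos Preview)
Your proposal is correct and matches the paper's proof essentially line for line: both reduce the general case to \cref{lem:app_count_analysis} by simulating each $k$-OR query via \cref{lem:Ay} with modulus $(M-1)\ell+1$, and both handle $M=2$ by noting the degree sequence equals $A\mathbf{1}$ and invoking \cref{lem:Ay} with modulus $\ell+1$. The only cosmetic difference is that you present the $M=2$ case first.
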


\begin{proof}
Define $x^{(1)}, \ldots, x^{(k)} \in \{0,1\}^\ell$ by $x^{(i)}(j) = 1$ if $A(i,j) > 0$ and $x^{(i)}(j) = 0$ otherwise.  
The degree sequence of $A$ is then $(|x^{(1)}|, \ldots, |x^{(k)}|)$.  We apply
\cref{lem:app_count_analysis} to approximate $(|x^{(1)}|, \ldots, |x^{(k)}|)$ and therefore the degree sequence of 
$A$.  To do this, for a subset $S \subseteq \{1, \ldots, \ell\}$, we need to serve the $k$-OR query 
$(\OR(x^{(1)},S), \ldots, \OR(x^{(k)},S))$.  Let $\chi_S$ be the characteristic vector of $S$.  As the entries of $A$ 
are at most $M-1$ in magnitude, the entries of $A \chi_S$ are at most $\ell(M-1)$.  Therefore $A \chi_S = A \chi_S \bmod (\ell(M-1)+1)$.  
We can thus compute $A \chi_S$ with $\ceil{\log((\ell(M-1)+1)}$ many queries by \cref{lem:Ay}.  
Computing $A \chi_S$ suffices to serve the $k$-OR query since $(\OR(x^{(1)},S), \ldots, \OR(x^{(k)},S)) = 
(\min\{(A \chi_S)_1, 1\}, \ldots, \min\{(A \chi_S)_k, 1\})$.
Thus by \cref{lem:app_count_analysis} we can output a good estimate of the degree sequence of $A$ with probability at 
least $1-\delta$ after $O\left(\log(\ell M) (\log(\ell)+1) \log\left(\frac{k (\log(\ell)+1)}{\delta}\right)\right)$ many matrix cut queries.

For the case $M=2$, note that the degree sequence of $A$ is the vector $A \vec{1}$, where $\vec{1}$ is the vector of all ones.  
As the entries of $A\vec{1}$ are at most $\ell$ by \cref{lem:Ay} we can compute the degree sequence with $\ceil{\log(\ell+1)}$ many matrix 
cut queries.
\end{proof}

\begin{theorem}
\label{thm:matrix_m}
Let $A \in [M]^{k \times \ell}$ be a matrix with $m$ many nonzero entries.  There is a quantum algorithm to learn $A$ 
with probability at least $1-\delta$ after making 
\[
O\left(\sqrt{m \log(M \ell)}  \log(M)  + \log(\ell M) (\log(\ell)+1) \log\left(\frac{k(\log(\ell)+1)}{\delta}\right) \right)
\]
many matrix cut queries.
When $M=2$, a better bound of $O(\sqrt{m \log(\ell)} + \log(\ell+1) + \log(1/\delta))$ many matrix cut queries holds.
\end{theorem}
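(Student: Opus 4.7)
The plan is to use the approximate degree sequence from \cref{lem:matrix_approx_degree} to dyadically partition the rows of $A$ by sparsity, and then apply one of two different learning strategies depending on the sparsity level, with the threshold between them chosen to balance their costs at roughly $\sqrt{m\log(M\ell)}\log M$ queries each.

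First, I would invoke \cref{lem:matrix_approx_degree} with failure probability $\delta/3$ to obtain $g \in \R^k$ that, by \cref{def:good}, satisfies $g(i)/4 \le d_i \le 2g(i)$ for every row, where $d_i$ denotes the number of nonzeros in the $i$th row of $A$. This contributes exactly the additive term in the statement. I would then group the row indices dyadically into $I_j = \{i : 2^{j-1} < g(i) \le 2^j\}$ for $j = 0, 1, \ldots, \lceil\log(2\ell)\rceil$. The good-estimate guarantee forces every row of $A$ in $I_j$ to be $2^{j+1}$-sparse, while $\sum_i d_i = m$ combined with $d_i > 2^{j-3}$ for $i \in I_j$ gives $|I_j| \le 8m/2^j$.

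Next, I would pick a threshold $J$ with $2^J \approx \sqrt{m/\log(M\ell)}$ and split the row set into the low-sparsity part $L = \bigcup_{j \le J} I_j$ and the high-sparsity part $H = \bigcup_{j > J} I_j$. Apply \cref{lem:learn_matrix} with $d = 2^{J+1}$ and failure probability $\delta/3$ to the submatrix $A_L$, at a cost of
\[
O\bigl(2^J \log(M\ell/2^J)\log M + \log(1/\delta)\log M\bigr) = O\bigl(\sqrt{m\log(M\ell)}\,\log M + \log(1/\delta)\log M\bigr).
\]
For $H$, a geometric sum yields $|H| \le \sum_{j > J} 8m/2^j = O(m/2^J) = O(\sqrt{m\log(M\ell)})$, so \cref{cor:individual} learns $A_H$ exactly using $\min(|H|,\ell)\lceil\log M\rceil = O(\sqrt{m\log(M\ell)}\,\log M)$ queries. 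Summing these two with the degree-sequence cost and taking a union bound over the three failure events gives the claimed complexity.

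The $M=2$ case follows identically, except that the second part of \cref{lem:matrix_approx_degree} supplies the exact degree sequence in $\lceil\log(\ell+1)\rceil$ queries, so $g = d$, every factor $\log M$ collapses to $1$, and the overall cost sharpens to $O(\sqrt{m\log\ell} + \log(\ell+1) + \log(1/\delta))$. The main obstacle is simply verifying that the chosen threshold $2^J$ really equalizes the two contributions so that the total becomes $O(\sqrt{m\log(M\ell)}\,\log M)$; the factor-of-$8$ slack between $g$ and $d$ affects only hidden constants, and no ingredient beyond the previously established lemmas is required.
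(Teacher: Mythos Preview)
Your proposal is correct and follows essentially the same approach as the paper: obtain a good estimate of the degree sequence via \cref{lem:matrix_approx_degree}, split the rows into low- and high-sparsity groups at a single threshold $\approx\sqrt{m/\log(M\ell)}$, apply \cref{lem:learn_matrix} to the low part and \cref{cor:individual} to the high part, and balance. The dyadic bucketing you introduce is vestigial---you immediately collapse the buckets $I_j$ into two groups $L$ and $H$ via the single cutoff $J$, which is exactly the paper's direct split $L=\{i:g(i)\le d\}$, $H=\{i:g(i)>d\}$ with $d=2^J$; the geometric sum for $|H|$ is likewise unnecessary since every row in $H$ already has at least $2^{J-2}$ nonzeros.
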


\begin{proof}
First we use \cref{lem:matrix_approx_degree} to compute $g \in \R^k$ which is a good approximation of the 
degree sequence of $A$ except with probability $\delta/2$.  This takes 
\[
O\left(\log(\ell M) (\log(\ell)+1) \log\left(\frac{k (\log(\ell)+1)}{\delta}\right)\right)
\]
many queries.
If $M=2$ then we can exactly compute the degree sequence of $A$ with $\ceil{\log(\ell+1)}$ many matrix cut queries.  
We now assume that $g$ is a good approximation and add $\delta/2$ to our total error bound.

Let $d \in \N$ be a degree threshold that will be chosen later, and define $L = \{i : g(i) \le d\}$ and $H = \{i : g(i) > d\}$.  As $g$ is a good 
approximation of the degree sequence, rows whose indices are in $L$ have at most $2d$ nonzero entries.  
We use \cref{lem:learn_matrix} to learn the submatrix $A(L, 1\colon \ell)$ with probability at least $1-\delta/2$ with 
$O( (d\log(M \ell/d) + \log(1/\delta))  \log(M))$ many matrix cut queries.  

For each $i \in H$, the $i^{th}$ row of $A$ must have at least $d/4$ many nonzero entries, as $g$ is a good approximation of the 
degree sequence of $A$.  Thus $|H| \le 4m/d$.  By \cref{cor:individual} we can learn the submatrix $A(H, 1 \colon \ell)$ with 
$4m \ceil{\log(M)}/d$ many matrix cut queries.

Setting $d = \sqrt{\frac{m}{\log(M\ell)}}$ the total number of queries becomes 
\[
O\left(\sqrt{m \log(M \ell)}  \log(M) + \log(\ell M) (\log(\ell)+1) \log\left(\frac{k(\log(\ell)+1)}{\delta}\right) \right) \enspace .
\]

In the case $M=2$, the number of queries becomes $O(\sqrt{m \log(\ell)} + \log(\ell+1) + \log(1/\delta))$.
\end{proof}

\section{Learning graphs} \label{sec:graph_learning}
A {\em weighted graph} is a couple
$G = (V,w)$, where $V$ is the set of vertices,  $V^{(2)}$ is the
set of subsets of $V$ with cardinality 2, and $w : V^{(2)} \rightarrow \N $ is the weight function. We assume that we have 
a total ordering $v_1 < v_2 < \cdots < v_{|V|}$ on the elements of $V$.  The set of edges
is defined as $E = \{e \in V^{(2)} : w(e) > 0\}$, therefore weighted graphs are undirected and without self-loops.
We can also think of them as multi-graphs, where the number of edges between vertices 
$u$ and $v$ is $w( \{u,v\})$. 
When the range of $w$ is $\{0,1\}$ we will call a weighted graph a {\em simple graph}, or just a {\em graph} and
denote it by $G = (V,E)$.  If $G$ is a weighted graph on $n$ vertices, the \emph{adjacency matrix} of $G$ is an $n$-by-$n$ matrix $A_G$ with 
zeros on the diagonal and $A_G(i,j) = w(\{v_i,v_j\})$, for $i \ne j$. 
Observe that $A_G$ is a symmetric matrix.

For an edge $e = \{u,v\} $, we say that $u$ and $v$ are the endpoints of $e$.
The degree $\deg(v)$ of a vertex $v$ is the number of edges for which $v$ is an endpoint.
For $S,T \subseteq V$ sets of vertices, we denote
by $E(S,T)$ the set of edges with one endpoint in $S$ and the other endpoint in $T$. (More precisely, $E(S,T) = \{e \in E: |e \cap (S \cup T) | = 2, |e \cap S| \ge 1, |e \cap T|\ge 1\}$.) We extend the weight function $w$
to sets of vertices $S,T \subseteq V$ by $w(S,T) = \sum_{e \in E(S,T)} w(e)$.

A {\em bipartite weighted graph} is a triple $G=(V_1, V_2, w)$, 
where $V_1 = \{v_1, \ldots , v_k\}$ and $V_2 = \{v_{k+1}, \ldots , v_{k + \ell}\}$ 
are the disjoint sets of respectively left and right vertices,
and $w : V_1 \times V_2 \rightarrow \N $ is the bipartite weight function.  A bipartite graph 
$G=(V_1, V_2, w)$ can of course also be viewed as a graph $G' =(V',w')$ with vertex set $V = V_1 \cup V_2$,
where $V_1$ and $V_2$ are independent sets, 
and weight function $w'$ over $V^{(2)} $, where 
\[
w' (\{v_i,v_j\}) = 
\begin{cases} 
0, & \mbox{ if } 1\leq i, j \leq k  \mbox{ or } k + 1 \leq i, j \leq k+ \ell, \\
w(v_{ \min \{i,j\}},v_{ \max \{i,j\}} ),  & \mbox{ otherwise }  .
\end{cases}
\]
Consistently with the general case, for $V'_1 \subseteq V_1$ and $V'_2 \subseteq V_2$, we extend $w$ as 
$w(V'_1, V'_2) = \sum_{u \in V'_1, v \in V'_2} w(u, v).$
The set of edges
is defined as $E = \{e \in V_1 \times V_2 : w(e) > 0\}$. 

The biadjacency matrix of $G$ is a $k \times \ell$ matrix $B_G$ where 
$B_G(i,j-k) = w(\{v_i,v_j\})$ for $1 \leq i \leq k, k+1 \leq j \leq k+ \ell$.  
Similarly to the bipartite weight function, the
biadjacency matrix $B_G$ is a condensed description of the $(k+\ell) \times (k+ \ell)$
adjacency matrix $A_G$ of $G$, where $A_G (i,j) = {w'} (\{v_i,v_j\})$, for $i \neq j$.

We will look at four different oracles for accessing a graph, the matrix cut oracle, the 
disjoint matrix cut oracle, the additive oracle, and the cut oracle. The comparison of their definitions and relative power is illustrated in \cref{fig:oracles}.

\begin{definition}[Matrix cut oracle and disjoint matrix cut oracle for a graph]
Let $G = (V,w)$ be a weighted graph. The matrix cut oracle  for $G$ is the matrix cut oracle for the
adjacency matrix $A_G$ of $G$. The disjoint matrix cut oracle for $G$ is the matrix cut oracle for $G$
with the restriction that it can only be  queried on strings
$x,y \in \{0,1\}^{|V|}$, where $x_i y_i = 0$, for all $1 \leq i \leq n$.
\end{definition}

If we consider the strings $x,y \in \{0,1\}^{|V|}$ as characteristic vectors of the sets $X,Y \subseteq V$,
the restriction on the domain of the disjoint matrix cut oracle is that it is only defined if $X \cap Y = \emptyset$, which explains its name.

Beside the matrix oracles we
will look at two more oracle models for accessing a graph, the additive oracle and the cut oracle.
\begin{definition}[Additive oracle]
Let $G = (V,w)$ be a weighted graph. The additive oracle $a : [2]^V \rightarrow \N$ returns $a(S) = \sum_{\{u,v\} \in S^{(2)}} w(\{u,v\})$ 
for any subset $S \subseteq V$.
\end{definition}
\begin{definition}[Cut oracle]
Let $G = (V,w)$ be a weighted graph. The cut oracle $c : [2]^V \rightarrow \N$ returns $c(S) = w(S, V \setminus S)$ 
for any subset $S \subseteq V$.
\end{definition}

Learning graphs with an additive oracle have been extensively studied in the classical case \cite{GK98,GK00, ABKRS04, Maz10,BM10,BM11}.  
For their proofs many of these papers actually work with matrix cut queries, and we find it useful to explicitly define these here.
The cut oracle has also been studied in the classical case in the context of  computing the minimum cut of a graph \cite{RSW18, MN20}.
We see in the next section that the cut oracle is the weakest of all these models, and our main algorithmic results will be for the 
cut oracle model.   
\begin{figure}[h]
\centering
\[
  \begin{tikzcd}[column sep = 1.5cm, cells = {anchor=west}]
    \mathrm{general} (S,T) \arrow[r,"1", shift left=0.5ex] \arrow[d,"1"] & \mathrm{additive} (S,S) \arrow[d,"3"] \arrow[l,"5",shift left=0.5ex]\\
     \mathrm{disjoint} (S,T) \arrow[r,"1",shift left=0.5ex] \arrow[u,"\Theta(n)",shift left=1.5ex]& \makebox[\widthof{$\mathrm{additive} (S,S)$}]{$\mathrm{cut} (S, \bar S)$} \arrow[u, "\Theta(n)", shift left=1.5ex] \arrow[l,"3", shift left=0.5ex]
  \end{tikzcd}
\]
\caption{Illustration of oracles for comparison. The arrows indicate reductions, and $A \xrightarrow{s} B$ means that a query to oracle $B$ can be simulated by $s$ queries to oracle $A$.}
\label{fig:oracles}
\end{figure}
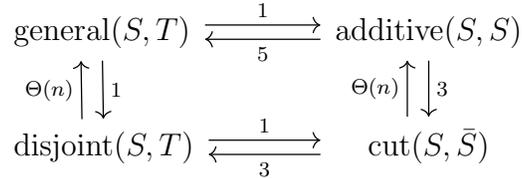


\subsection{Relationships between oracles}
In this section we examine the power of the four oracles we have defined for graphs.
In essence we show that if we consider the relative power of these oracles up to a constant overhead,
then the disjoint matrix cut oracle for graphs and the cut oracle have the same power,
the matrix cut oracle for graphs and the additive oracle have the same power, 
and the power of the latter group is greater than the power of the former.

\begin{definition}[Constant-reduction between oracles]
Let $O_1$ and $O_2$ be graph oracles.
We say that $O_1$ is constant-reducible to $O_2$ if there exist a positive integer $k$ such that
for every weighted graph $G = (V,w)$,
every query of $O_1$ to $G$ can be computed with $k$ queries of $O_2$ to $G$.
If $O_1$ and $O_2$ are mutually constant-reducible to each other, then they are called constant-equivalent.
\end{definition}

The first lemma shows the constant-equivalence of the disjoint matrix cut oracle for graphs and the cut oracle.
\begin{lemma}
\label{lem:3}
The disjoint matrix cut oracle for graphs and the cut oracle are constant-equivalent.
In particular, the cut oracle can simulate with $3$ queries a query of the disjoint matrix cut oracle for graphs.
\end{lemma}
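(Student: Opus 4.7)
The plan is to prove the two reductions separately. For the easy direction (disjoint matrix cut oracle simulates cut oracle), I would just observe that $c(S) = w(S, V \setminus S)$ and that $S$ and $V \setminus S$ are disjoint, so $c(S)$ is exactly the value $\chi_S^T A_G \chi_{V \setminus S}$ returned by a single disjoint matrix cut query on the pair $(\chi_S, \chi_{V \setminus S})$.

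The heart of the lemma is the other direction. First I would unpack what a disjoint matrix cut query actually returns: for disjoint $X, Y \subseteq V$ with characteristic vectors $x, y$, since $A_G$ has zero diagonal and $X \cap Y = \emptyset$ the indices $i \in X$ and $j \in Y$ are distinct, so
\[
x^T A_G y = \sum_{i \in X, j \in Y} w(\{v_i, v_j\}) = w(X,Y).
\]
Thus the disjoint matrix cut oracle is morally the ``bipartite cut'' $w(X,Y)$ for disjoint $X,Y$.

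Now I would derive the key identity relating $w(X,Y)$ to cut queries on the sets $X$, $Y$, and $X \cup Y$. Writing $Z = V \setminus (X \cup Y)$ and using the disjoint decomposition of the cut boundaries,
\[
c(X) = w(X,Y) + w(X,Z), \quad c(Y) = w(X,Y) + w(Y,Z), \quad c(X \cup Y) = w(X,Z) + w(Y,Z).
\]
Adding the first two and subtracting the third yields
\[
w(X,Y) = \tfrac{1}{2}\bigl(c(X) + c(Y) - c(X \cup Y)\bigr),
\]
so $3$ cut queries suffice to answer one disjoint matrix cut query. Combining both directions gives the constant-equivalence. There is no real obstacle here; the only thing to be careful about is that $X$ and $Y$ are genuinely disjoint so that no edge is double-counted inside the bipartite contribution, which is exactly why the restriction to the disjoint domain is needed.
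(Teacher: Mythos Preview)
Your proof is correct and follows essentially the same approach as the paper: both directions are handled identically, and the key identity $w(X,Y) = \tfrac{1}{2}\bigl(c(X) + c(Y) - c(X \cup Y)\bigr)$ is exactly the one the paper uses. Your version simply spells out the derivation of this identity via $Z = V \setminus (X \cup Y)$, whereas the paper states it without justification.
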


\begin{proof}
The cut oracle is obviously 1-reducible to the disjoint matrix cut oracle. For the reverse direction,
let $G = (V,w)$ be a weighted graph with $|V| = n$ and let $A_G$ be its adjacency matrix.  
Let $x,y \in \{0,1\}^n$ be the characteristic vectors of $X,Y \subseteq V$ where $X$ and $Y$ are disjoint.
Then
\begin{align*}
x^T A_G y &= w(X,Y)  \\
                 &= \frac{1}{2}\left(c(X) + c(Y) - c(X \cup Y) \right) \enspace.
\end{align*}
\end{proof}

\begin{lemma}
\label{lem:biptograph}
Let $G$ be a bipartite graph. A matrix cut query to the biadjacency matrix of $G$ can be simulated
by $3$ cut queries.
\end{lemma}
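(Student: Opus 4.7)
The plan is to reduce a matrix cut query on the biadjacency matrix $B_G$ to a disjoint matrix cut query on the adjacency matrix of the associated graph $G'$, and then to invoke \cref{lem:3}.

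First I would unpack the definitions. Given $x \in \{0,1\}^k$ and $y \in \{0,1\}^\ell$ as inputs to a matrix cut query on $B_G$, let $X \subseteq V_1$ be the subset with characteristic vector $x$, and let $Y \subseteq V_2$ be the subset with characteristic vector $y$. From the definition of $B_G$ (with entries $B_G(i, j-k) = w(\{v_i, v_j\})$ for $v_i \in V_1, v_j \in V_2$), we have
\[
x^T B_G y \;=\; \sum_{v_i \in X, \, v_j \in Y} w(\{v_i, v_j\}) \;=\; w(X,Y).
\]

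Next I would transfer this to the graph $G' = (V', w')$ associated to $G$, where $V' = V_1 \cup V_2$. Form the padded characteristic vectors $\tilde x = [x; 0] \in \{0,1\}^{k+\ell}$ and $\tilde y = [0; y] \in \{0,1\}^{k+\ell}$. Since $X \subseteq V_1$ and $Y \subseteq V_2$ and $V_1 \cap V_2 = \emptyset$, the vectors $\tilde x$ and $\tilde y$ have disjoint support. Using the definition of $w'$ (and the fact that $w'$ vanishes on pairs both in $V_1$ or both in $V_2$), we get $\tilde x^T A_{G'} \tilde y = w'(X, Y) = w(X, Y)$, so this coincides with the biadjacency matrix cut query value.

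Finally I would apply \cref{lem:3}, which shows that a disjoint matrix cut query on the graph $G'$ can be simulated by $3$ cut queries, specifically via
\[
w'(X, Y) \;=\; \tfrac{1}{2}\bigl(c(X) + c(Y) - c(X \cup Y)\bigr),
\]
where the cut oracle $c$ here is the cut oracle of $G'$ (equivalently, of the bipartite graph $G$ viewed as a graph). This yields the simulation using exactly $3$ cut queries. There is no real obstacle here; the only thing to verify carefully is that the cut oracle on the bipartite graph $G$ and the cut oracle on its graph-view $G'$ coincide, which is immediate from the definition of $w'$.
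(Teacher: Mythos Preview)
Your proposal is correct and follows essentially the same approach as the paper: pad $x$ and $y$ with zeros to obtain disjoint-support vectors in $\{0,1\}^{k+\ell}$, observe that $x^T B_G y$ equals the corresponding disjoint matrix cut query on $A_G$, and then invoke \cref{lem:3}. The paper's proof is slightly terser and does not dwell on the $G$-versus-$G'$ distinction, but the argument is the same.
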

\begin{proof}
Let $B_G$ be the biadjacency matrix of $G$ with $k$ left and $\ell$ right vertices,
 and let $x \in \{0,1\}^k, y \in \{0,1\}^\ell$.
We define the vectors $\bar x, \bar y \in \{0,1\}^{k+\ell}$ as
$\bar x = [x; 0^{\ell}]$ and $\bar y = [0^k; y]$.
Then $x^T B_G y = {\bar x}^T A_G \bar y$, where $A_G$ is the adjacency matrix of $G$.
Since $x$ and $y$ are the characteristic vectors of disjoint sets in $V_1 \cup V_2$,
by \cref{lem:3} we can compute $x^T B_G y$  with 3 queries to the cut oracle for $G$.
\end{proof}

The constant-equivalence of the matrix cut oracle for graphs and the additive oracle was essentially
shown by Grebinski and Kucherov in Theorem 4 of \cite{GK00}. For completeness we reproduce
here the proof.

\begin{lemma}
\label{lem:gk}
The matrix cut oracle for graphs and the additive oracle are constant-equivalent.
In particular, the latter oracle can simulate with $5$ queries a query of the former.
\end{lemma}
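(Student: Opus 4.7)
The plan is to establish constant-equivalence by exhibiting simulations in both directions. The reverse direction (matrix cut simulates additive) is immediate: since $A_G$ is symmetric with zero diagonal, for any $S \subseteq V$ one has $\chi_S^T A_G \chi_S = \sum_{u,v \in S,\, u \neq v} w(\{u,v\}) = 2 a(S)$, so a single matrix cut query on $x = y = \chi_S$ recovers $a(S)$ after dividing by two. It remains to show that an additive oracle can simulate one matrix cut query with $5$ additive queries.

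For the forward direction I would take characteristic vectors $x = \chi_X$, $y = \chi_Y$ and first decompose $X \cup Y$ into the three disjoint pieces $A \defeq X \setminus Y$, $B \defeq X \cap Y$, $C \defeq Y \setminus X$. Expanding
\[
x^T A_G y \;=\; \sum_{u \in X,\, v \in Y,\, u \neq v} w(\{u,v\})
\]
and sorting terms by which of $A,B,C$ contains each endpoint, I would verify the identity
\[
x^T A_G y \;=\; 2\, a(B) \,+\, w(A, B) \,+\, w(B, C) \,+\, w(A, C).
\]
The factor $2$ in front of $a(B)$ arises because an edge inside $X \cap Y$ produces two admissible ordered pairs, whereas edges with exactly one endpoint in $B$, or edges entirely between $A$ and $C$, contribute only one ordered pair each.

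The remaining task is to express this right-hand side as a small linear combination of values of $a$. I would query $a(X)$, $a(Y)$, $a(X \triangle Y)$, $a(X \setminus Y)$, $a(Y \setminus X)$, and verify
\[
x^T A_G y \;=\; a(X) + a(Y) + a(X \triangle Y) - 2\, a(X \setminus Y) - 2\, a(Y \setminus X)
\]
by substituting the expansions $a(X) = a(A) + a(B) + w(A,B)$, $a(Y) = a(B) + a(C) + w(B,C)$, $a(X \triangle Y) = a(A) + a(C) + w(A,C)$, $a(X \setminus Y) = a(A)$, $a(Y \setminus X) = a(C)$ and collecting terms. The main conceptual hurdle, and the reason one needs more than two or three queries, is that the cross-weight $w(A, C)$ between the two symmetric-difference pieces is not visible from an $a$-query on $X$ or $Y$ alone; one must introduce some set---here $X \triangle Y$---that sees $A$ and $C$ together, which necessarily also brings in the spurious interior contributions $a(A)$ and $a(C)$ that the two subtractions then remove. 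Once one commits to including $a(X \triangle Y)$, the coefficients of the other four queries are forced, giving the desired $5$-query simulation.
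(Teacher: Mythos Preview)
Your proposal is correct and takes essentially the same approach as the paper. Both directions match: the paper also observes $a(X) = \tfrac{1}{2}\chi_X^T A_G \chi_X$ for the easy direction, and for the hard direction the paper's identity, once rewritten via $\chi_S^T A_G \chi_S = 2a(S)$, is precisely your formula $x^T A_G y = a(X) + a(Y) + a(X \triangle Y) - 2a(X \setminus Y) - 2a(Y \setminus X)$; the paper derives it by expanding the bilinear form in $x_- = \chi_{X\setminus Y}$, $y_- = \chi_{Y\setminus X}$, $z = \chi_{X\cap Y}$, while you reach the same identity combinatorially, but the five queried sets and the coefficients are identical.
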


\begin{proof}
Let $G = (V,w)$ be a weighted graph on $n$-vertices, and let $A_G$ be its adjacency matrix.  
The additive oracle is 1-reducible to the matrix cut oracle for $A_G$ because for every
set $X \subseteq V$ with characteristic vector $x \in \{0,1\}^n$, we have $a(X) = \frac{1}{2} x^T A_G x$. 

For the reverse direction we show that the
matrix cut oracle to $A_G$ 
can be simulated with 5 queries to the additive oracle for $G$.
For this let $X,Y \subseteq V$ be arbitrary sets. We consider the characteristic vectors
$x_{-}, y_{-}$ and $z$ of respectively $X \setminus Y, Y \setminus X$ and $X \cap Y$.
Then 
\begin{align*}
x^T A_G y &= (x_{-} + z)^T A_G (y_{-} +z) \\
                  &=  x_{-}^T A_G y_{-} + x_{-}^T A_G z + z^T A_G y_{-} + z^T A_G z \\
                  &= \frac{1}{2}\left((x_{-}+y_{-})^TA_G (x_{-}+y_{-}) + x^T A_G x + y^T A_G y \right)
                   - x_{-}^T A_G x_{-} - y_{-}^T A_G y_{-} \enspace .
\end{align*}
To go from the second line to the last line we used  that $A_G$ is symmetric, 
that the sets $X \setminus Y, Y \setminus X$ and $X \cap Y$ are pairwise disjoint,
and finally the fact that if $u_1$ and $u_2$ are the characteristic vectors of the disjoint sets
$U_1, U_2 \subseteq V$ then 
$u_1^T A_G u_2 = a(U_1 \cup U_2) - a(U_1) - a(U_2)$.

 \end{proof}

Since the matrix cut oracle for graphs is by definition at least as strong as the disjoint matrix cut oracle
for graphs, \cref{lem:3} and \cref{lem:gk} imply that the cut oracle is constant reducible to
the additive oracle. The following lemma shows that the simulation in fact can be done by 3 queries.
\begin{lemma}
\label{lem:add_cut}
Let $G = (V,w)$ be a weighted graph.  The cut oracle to $G$ is $3$-reducible to the additive oracle for $G$.
\end{lemma}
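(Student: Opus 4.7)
My plan is to exhibit an explicit identity expressing $c(S)$ as an integer linear combination of three additive oracle evaluations, namely $a(V)$, $a(S)$, and $a(V \setminus S)$. The key observation is that every edge of $G$ falls into exactly one of three disjoint classes with respect to the partition $(S, V \setminus S)$: edges with both endpoints in $S$, edges with both endpoints in $V \setminus S$, and edges crossing the cut. Summing the weights of these three classes therefore recovers the total weight of all edges.

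Concretely, I would note that by definition
\[
a(V) = a(S) + a(V \setminus S) + w(S, V \setminus S),
\]
since $a(V)$ counts each edge once and the three terms on the right-hand side count, respectively, edges inside $S$, edges inside $V \setminus S$, and edges between $S$ and $V \setminus S$, which together partition $E$. Rearranging gives
\[
c(S) = w(S, V \setminus S) = a(V) - a(S) - a(V \setminus S),
\]
and the right-hand side can be evaluated with three queries to the additive oracle.

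There is essentially no obstacle here: the only thing to check is that the additive oracle truly counts each edge inside the queried set exactly once (which follows directly from the definition $a(S) = \sum_{\{u,v\} \in S^{(2)}} w(\{u,v\})$), and that no edge is double-counted across the three disjoint classes. This gives the claimed $3$-reduction.
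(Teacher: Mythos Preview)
Your proof is correct and uses exactly the same identity as the paper, $c(S) = a(V) - a(S) - a(V\setminus S)$. The paper states this formula without elaboration, so your additional justification of the edge partition is fine but not strictly necessary.
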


\begin{proof}
For any $S \subseteq V$
\[
c(S) = a(V) - a(S) - a(V\setminus S) \enspace .
\]
\end{proof}

We now turn to the question of simulating an additive oracle with a cut oracle.
In an $n$-vertex weighted graph $G = (V,w)$, we can simulate an additive oracle with at most $3n$ applications 
of a cut oracle.  This is because for any $S \subseteq V$,
\[
a(S) = \frac{1}{2} \sum_{v \in S} w(v,S \setminus \{v\}) \enspace
\]
and each $w(v,S \setminus \{v\})$ can be computed with $3$ cut queries by \cref{lem:3}.  Note that this algorithm 
works no matter how large the weights are.  We next show that for weighted graphs with sufficiently large weights 
this trivial algorithm is nearly tight, and in the worst case $\Omega(n)$ cut queries are needed to simulate 
an additive query.

To do this, we first need a specific form of the Fredholm alternative.
\begin{lemma}[Fredholm Alternative]
\label{lem:fredholm}
Let $A \in \{0,1\}^{N \times k}$ have independent columns, and let $b \in \{0,1\}^N$. Suppose that $Ax = b$ has no solution $x \in \R^k$.  Then the 
integer vector $\hat y = \det(A^TA) (I - A(A^TA)^{-1}A^T) b \in \Z^N$ satisfies 
\begin{enumerate}
 \item $\hat y^TA = \vec{0} \; ,$
 \item $\hat y^Tb \ne 0 \; ,$ and
 \item $\|\hat y\|_\infty \le N^{k+1/2} k^{k/2}.$ 
\end{enumerate}
\end{lemma}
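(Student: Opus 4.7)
The plan is to exploit the fact that $(I - P)$, where $P = A(A^TA)^{-1}A^T$, is the orthogonal projection onto the orthogonal complement of the column space of $A$. Because the columns of $A$ are independent, $A^TA$ is invertible (in fact positive definite), so $P$ is well-defined, satisfies $P^T = P$, $P^2 = P$, and $(I-P)A = 0$.

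I would first verify that $\hat y$ is integer-valued. The key identity is $\det(A^TA)\,(A^TA)^{-1} = \mathrm{adj}(A^TA)$, the adjugate matrix, whose entries are signed minors of the integer matrix $A^TA$ and hence integers. Consequently $\hat y = \det(A^TA)\, b - A\,\mathrm{adj}(A^TA)\,A^Tb \in \Z^N$. With this in hand, properties (1) and (2) are short computations. For (1): $(I-P)A = A - A(A^TA)^{-1}A^TA = 0$, so transposing (using the symmetry of $I-P$) yields $A^T(I-P) = 0$, hence $\hat y^T A = \det(A^TA)\, b^T(I-P)A = 0$. For (2): because $(I-P)$ is a symmetric idempotent, $b^T(I-P)b = \|(I-P)b\|_2^2$; since $Ax=b$ has no solution, $b$ is not in the column space of $A$, so $(I-P)b \neq 0$ and $\|(I-P)b\|_2^2 > 0$. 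Combined with $\det(A^TA) > 0$, this gives $\hat y^T b > 0$, in particular nonzero.

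The norm bound (3) follows from two elementary inequalities. First, since $I-P$ is an orthogonal projection its operator norm is $1$, so $\|(I-P)b\|_2 \le \|b\|_2 \le \sqrt{N}$ (using $b \in \{0,1\}^N$). Second, by Hadamard's inequality applied to the columns of $A$, $\det(A^TA) \le \prod_{j=1}^{k} \|A_{:,j}\|_2^2 \le N^k$, since each column of $A$ has at most $N$ ones. Combining these,
\[
\|\hat y\|_\infty \;\le\; \|\hat y\|_2 \;=\; \det(A^TA)\,\|(I-P)b\|_2 \;\le\; N^k \sqrt{N} \;=\; N^{k+1/2},
\]
which is in fact stronger than the stated bound $N^{k+1/2} k^{k/2}$.

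I do not anticipate a real obstacle: the whole argument is a packaging of projection-matrix identities, the adjugate characterization of the integer inverse, and Hadamard's inequality. The only mildly subtle point is the integrality step, which is why one must write $\det(A^TA)(A^TA)^{-1}$ as an adjugate rather than manipulate $(A^TA)^{-1}$ directly.
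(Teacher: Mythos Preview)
Your argument is correct and follows essentially the same route as the paper's proof: both use that $I-P$ is the orthogonal projection off the column space, invoke the adjugate to see $\hat y$ is integral, and bound $\det(A^TA)$ via Hadamard. Your use of the Gram-determinant form of Hadamard (applied to the columns of $A$) yields the sharper bound $\det(A^TA)\le N^k$ and hence $\|\hat y\|_\infty \le N^{k+1/2}$, whereas the paper applies Hadamard directly to the $k\times k$ matrix $A^TA$ (entries $\le N$) and only gets $\det(A^TA)\le N^k k^{k/2}$.
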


\begin{proof}
As $A$ has independent columns, $A^TA$ is invertible.
Let $b_c = A(A^TA)^{-1}A^T b$ be the orthogonal projection of $b$ onto the column space
of $A$ and $y = (I - A(A^TA)^{-1}A^T) b$ be the orthogonal projection onto the left nullspace.
Then $b = b_c + y$ and, as $A x = b$ has no solution, $y \ne \vec{0}$.  As $y$ is in the left 
nullspace, we have $y^T A = \vec{0}$.  Also $y^T b = y^T (b_c +y) = \|y\|^2 \ne 0$ as $y^T b_c =0$.

Now $\|y\| \le \|b\| \le \sqrt{N}$, thus also $\|y\|_\infty \le \sqrt{N}$.  For an invertible matrix $B$, by Cramer's rule $B^{-1} = \mathrm{adj}(B)/\det(B)$, 
where $\mathrm{adj}(B)$ is the adjugate matrix of $B$.
Thus $\det(A^TA) \cdot (A^TA)^{-1}=\mathrm{adj}(A^TA)$ is an integer matrix and $\hat y = \det(A^TA) y$ is an 
integer vector. As $\hat y$ is just a nonzero scalar factor of $y$, we have $\hat y^T A = 0$ and $\hat y^T b\ne 0$. Finally, $\| \hat y\|_\infty \le \sqrt{N}\cdot \det(A^TA) \le N^{k+1/2} k^{k/2}$, as the entries of $A^TA$ are 
at most $N$ and Hadamard's inequality gives that $\det(B) \le N^k k^{k/2}$ for a $k$-by-$k$ matrix $B$
whose entries are bounded by $N$.
\end{proof}

\begin{theorem}
\label{lem:cut_lower}
Any deterministic cut query algorithm on weighted graphs with vertex set $V$
must make at least $|V|/2$ queries to compute $a(V)$.
\end{theorem}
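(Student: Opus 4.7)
The plan is to combine an adversary argument with the Fredholm alternative stated in \cref{lem:fredholm}. Write $n=|V|$ and suppose for contradiction that some deterministic algorithm correctly computes $a(V)$ on every weighted graph using only $k<n/2$ cut queries. I play the role of adversary by pretending the hidden graph is $w_0=M\vec{1}$, the weighted graph on $V$ in which every edge of $K_n$ carries the same integer weight $M$ (to be pinned down later). Against this input the algorithm produces a deterministic sequence of queries $S_1,\dots,S_k\subseteq V$; let $A\in\{0,1\}^{\binom{n}{2}\times k}$ be the matrix whose $i$th column is the characteristic vector $c_{S_i}\in\{0,1\}^{\binom{n}{2}}$ of the cut $E(S_i,V\setminus S_i)$ in $K_n$. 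By replacing $A$ with a maximal linearly independent subset of its columns (which leaves the column span unchanged), we may assume $A$ has linearly independent columns, so that \cref{lem:fredholm} applies.

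The heart of the argument is the linear-algebraic claim that $\vec{1}\notin\mathrm{span}_{\R}(c_{S_1},\dots,c_{S_k})$ whenever $k<n/2$. To prove it, suppose $\vec{1}=A\lambda$ for some $\lambda\in\R^k$ and assign to each vertex $v$ its signature $x_v=(\chi_{S_1}(v),\dots,\chi_{S_k}(v))\in\{0,1\}^k$. Using $c_{S_i}(\{u,v\})=(x_u)_i+(x_v)_i-2(x_u)_i(x_v)_i$, the equation $(A\lambda)_{\{u,v\}}=1$ becomes, for every pair $u\neq v$,
\[
\lambda^T x_u+\lambda^T x_v-2\,\lambda^T(x_u\wedge x_v)=1,
\]
where $\wedge$ denotes coordinatewise AND. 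Fix any $v_0\in V$ and set $y_v=x_v\oplus x_{v_0}$ for $v\neq v_0$; the equation for $\{u,v_0\}$ gives $\lambda^T y_u=1$, and feeding this back into the equation for $\{u,v\}$ with $u,v\neq v_0$ gives $\lambda^T(y_u\wedge y_v)=\tfrac{1}{2}$. Arranging the $y_v$'s as columns of $Y\in\{0,1\}^{k\times(n-1)}$, these two identities together yield
\[
Y^T\,\mathrm{diag}(\lambda)\,Y\;=\;\tfrac{1}{2}I_{n-1}+\tfrac{1}{2}J_{n-1},
\]
whose right-hand side has eigenvalues $n/2$ (once) and $1/2$ (with multiplicity $n-2$), hence has rank $n-1$. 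On the other hand the left-hand side has rank at most $\mathrm{rank}(Y)\le k<n-1$ (using $k<n/2\le n-1$ for $n\ge 2$), a contradiction.

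With $\vec{1}$ out of the column span of $A$, \cref{lem:fredholm} delivers an integer vector $\hat y\in\Z^{\binom{n}{2}}$ satisfying $A^T\hat y=0$, $\vec{1}^T\hat y\neq 0$, and $\|\hat y\|_\infty\le N^{k+1/2}k^{k/2}$ where $N=\binom{n}{2}$. Set $M=\|\hat y\|_\infty$ and define the alternative weighted graph $w_1=M\vec{1}+\hat y\in\Z^{\binom{n}{2}}$. Every coordinate of $w_1$ is nonnegative by choice of $M$, so $w_1$ is a legitimate weighted graph. Each cut query returns $c_{S_i}^T w_1=c_{S_i}^T(M\vec{1})+c_{S_i}^T\hat y=c_{S_i}^T w_0$, while $\vec{1}^T w_1-\vec{1}^T w_0=\vec{1}^T\hat y\neq 0$, so the two graphs have different additive values $a(V)$. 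The deterministic algorithm, receiving identical responses on $w_0$ and $w_1$, must output the same value on both inputs and therefore errs on at least one, contradicting its supposed correctness.

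The main obstacle is the rank argument in the middle paragraph: a naive pigeonhole on the $n$ signatures yields only a $\log_2 n$ lower bound, so one really needs to exploit the extra constraints $\lambda^T(y_u\wedge y_v)=\tfrac{1}{2}$ in order to recognize the full-rank quadratic form $\tfrac{1}{2}(I+J)$. Once that is in hand, the invocation of Fredholm, the choice of $M$, and the final consistency check all slot in routinely.
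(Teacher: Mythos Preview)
Your proof is correct, and the overall skeleton---fix a default input $M\vec 1$, show that the all-ones vector $\vec 1$ lies outside the span of the cut vectors $c_{S_1},\dots,c_{S_k}$, then invoke \cref{lem:fredholm} to manufacture a second input indistinguishable to the algorithm---matches the paper exactly. The only substantive difference is in how you establish the linear-algebra claim $\vec 1\notin\mathrm{span}(c_{S_1},\dots,c_{S_k})$.

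The paper's argument is a one-liner: under the isomorphism $\symvec:\symmat_n^0\to\R^{\binom{n}{2}}$, the vector $\vec 1$ corresponds to $J-I$, which has rank $n$, while each cut vector $c_{S_i}$ corresponds to the rank-$2$ matrix $x_i\bar x_i^T+\bar x_i x_i^T$. A linear combination of $k$ such matrices has rank at most $2k$, so $k\ge n/2$ is forced. Your route instead unpacks the equations coordinate-wise via vertex signatures, performs the change of variables $y_v=x_v\oplus x_{v_0}$, and recognizes the Gram-type identity $Y^T\mathrm{diag}(\lambda)Y=\tfrac12(I_{n-1}+J_{n-1})$, whose full rank $n-1$ contradicts $\mathrm{rank}(Y)\le k$. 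This is more labor but in fact yields the sharper conclusion $k\ge n-1$ rather than $k\ge n/2$; you only use $k<n/2\le n-1$, so the extra strength goes unadvertised, but it is genuinely there. The paper's argument is conceptually cleaner (rank subadditivity does all the work), while yours squeezes out a better constant at the cost of a longer computation.
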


\begin{proof}
Let $G$ be a weighted graph with vertex set $V$ where $|V| = n$.
We will show that at least $n/2$ cut queries are needed to compute 
the total edge weight $a(V)$ in $G$.

Let $\symmat_n^0$ be the vector space of 
symmetric $n$-by-$n$ matrices with zeros along the diagonal. Let  the map 
$\symvec :  \symmat_n^0 \to \R^{\binom{n}{2}} $ be 
defined by $\symvec(C) = [C(2\upto n,1);C(3\upto n,2); \cdots ;C(n,n-1)]$, which
is an isomorphism between the two vector spaces.

Let $B = J - I$ where $J$ is the $n$-by-$n$ all ones matrix and $I$ is the $n$-by-$n$ identity matrix. 
Let $b = \symvec(B)$.
Then $a(V) = \symvec(A_G)^T b$ where $A_G$ is the adjacency matrix of $G$.
For a subset $X \subseteq V$ of the vertices, let $x \in \{0,1\}^n$ be the characteristic vector of $X$,
and let $\bar x \in \{0,1\}^n$ be the characteristic vector of $V \setminus X$. Then for the cut value of $X$
we have $c(X) =  \symvec(A_G)^T \symvec(x {\bar x}^T + \bar x x^T).$ Observe that the rank of
the matrix $x {\bar x}^T + \bar x x^T$ is 2.

Consider a deterministic cut query algorithm making $k < n/2$ many queries.
Let $m = n^{2k+1} k^{k/2}$, 
and let $G_1$ be the graph whose adjacency matrix is $A_{G_1} = m\cdot B$.
We set $a_1 = \symvec (A_{G_1})$.
Suppose that the algorithm makes queries $X_1, \ldots , X_k \subseteq V$ in $G_1$, with respective
characteristic sequences $x_1, \ldots, x_k \in \{0,1\}^n$.
Let $d_i = \symvec(x_i {\bar x_i}^T + \bar x_i x_i^T)$ for $i=1, \ldots, k$, and let $D$ be the 
$\binom{n}{2}$-by-$k$ matrix whose $i$th column is $d_i$.  

As the rank of $B$ is $n$ and $k < n/2$, 
there is no solution $x$ to the equation $Dx = b$, since otherwise we could express 
$B$ as a linear combination of $k$ matrices of rank 2.
Let $\hat y \in \Z^{\binom{n}{2}}$ be the vector given by \cref{lem:fredholm} 
satisfying $\hat y^T D = \vec{0}$ and $\hat y^T b \ne 0$.  
Then by  the choice of $m$ and by \cref{lem:fredholm}, $a_1 + \hat y$
will be a non-negative integer vector. Let $G_2$ be the weighted graph whose adjacency matrix satisfies
$ \symvec (A_{G_2}) = a_1 + \hat y$.
 
Now $a_1^T D = (a_1 + \hat y)^T D$ means that the graphs $G_1$ and $G_2$ give the same answers on 
all queries.  Yet $a_1^T b \ne (a_1 + \hat y)^T b$, meaning that the total sum of weights 
in $G_1$ and in $G_2$ are different.  
\end{proof}

\subsection{Learning graphs with an additive oracle}
\cref{lem:gk} immediately lets us apply \cref{lem:learn_matrix} and \cref{thm:matrix_m} to learning graphs with 
additive queries.
\begin{corollary}
\label{cor:add_learn}
Let $G=(V,w)$ be a weighted graph on $n >1$ vertices.
\begin{enumerate}
\item If every vertex has degree at most $d$ then there is is a quantum algorithm that 
learns $A$ with probability at least $1-\delta$ after making 
{$O( (d \log (Mn/d) + \log(1/\delta) )\log (M))$} 
many additive oracle queries to $G$.
\item If $G$ has $m$ edges then there is a quantum algorithm that learns $G$ with probability at least $1-\delta$ 
after making
\[
O\left(\sqrt{m  \log(M n)} \log(M)  + \log(Mn) \log(n) \log\left(\frac{n\log(n)}{\delta}\right) 
\right)
\]
many additive oracle queries to $G$.  When $M=2$, a better bound of $O(\sqrt{m \log(n)} + \log(n) + \log(1/\delta))$ many additive queries holds.
\end{enumerate}
\end{corollary}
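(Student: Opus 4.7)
The plan is to view the corollary as an essentially immediate consequence of the matrix-learning bounds of \cref{lem:learn_matrix} and \cref{thm:matrix_m}, translated from matrix cut queries on the adjacency matrix $A_G$ into additive queries on $G$ via the reduction in \cref{lem:gk}. The key observation is that $A_G$ is an $n$-by-$n$ matrix with entries in $[M]$, so the general matrix-learning routines apply directly, and each matrix cut query can be answered by $5$ additive queries, which absorbs into the asymptotic bounds.

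For part~1, I would note that since every vertex of $G$ has degree at most $d$, every row of $A_G$ has at most $d$ nonzero entries, i.e.\ $A_G$ is a $k$-by-$\ell$ matrix with $k=\ell=n$ and $d$-sparse rows. Invoking \cref{lem:learn_matrix} gives a quantum algorithm that learns $A_G$ with probability at least $1-\delta$ using $(4d\lceil\log(Mn/d)\rceil + \lceil\log(1/\delta)\rceil)\lceil\log(M)\rceil$ matrix cut queries to $A_G$. By \cref{lem:gk}, each such matrix cut query can be simulated by $5$ additive queries to $G$, giving the claimed bound $O((d\log(Mn/d)+\log(1/\delta))\log(M))$.

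For part~2, the only extra observation needed is that since $A_G$ is symmetric and has zero diagonal, an edge $\{u,v\}$ of positive weight contributes exactly two nonzero entries to $A_G$. Thus $A_G$ has $2m$ nonzero entries, and applying \cref{thm:matrix_m} with $k=\ell=n$ and this edge count yields the matrix cut query bound stated in the corollary, up to absorbing the factor $\sqrt{2}$ into the $O$-notation. The simulation by $5$ additive queries per matrix cut query then delivers the claimed additive-query bound, and in the $M=2$ case the sharper bound $O(\sqrt{m\log(n)}+\log(n)+\log(1/\delta))$ is read off directly from the improved statement in \cref{thm:matrix_m}.

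There is no real obstacle here: the corollary is a clean composition of a query-model reduction (\cref{lem:gk}) with two already-proved upper bounds. The only mild bookkeeping points are to remember that $A_G$ has $2m$ rather than $m$ nonzero entries when converting an edge count to a matrix sparsity, and to verify that the constant-factor overhead of $5$ from simulating a matrix cut query by additive queries does not affect any of the $\log$, $\sqrt{m}$, or $d$ factors in the final expressions.
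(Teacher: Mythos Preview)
Your proposal is correct and matches the paper's approach exactly: the paper states only that \cref{lem:gk} immediately lets one apply \cref{lem:learn_matrix} and \cref{thm:matrix_m} to learning graphs with additive queries, and your write-up spells out precisely this reduction, including the bookkeeping about $2m$ nonzero entries in $A_G$ that the paper leaves implicit.
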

Note that this result implies that one can learn a matching or Hamiltonian cycle using $O(\log n)$ quantum additive queries, in contrast to the
 $\Theta(n)$ additive queries needed deterministically \cite{GK00}.

\subsection{Learning graphs with a cut oracle}
Since by \cref{lem:cut_lower} a cut oracle cannot efficiently simulate an additive oracle (or a matrix cut oracle), we 
must do more work to adapt the results from \cref{sec:matrix_learning} to the 
case of a cut oracle.  We first show how to learn a general weighted graph with weights bounded by $M$ with $O(n \log(M))$
many cut queries.
\begin{theorem}
\label{thm:learning_ind}
Let $G = (V, w)$ be a weighted graph with $n$ vertices and edge weights in $[M]$.  There is a quantum algorithm that learns $G$ 
perfectly after making $3n \ceil{\log(M)}$ cut queries.  In particular, a simple graph can be learned perfectly by a quantum algorithm 
making $3n$ cut queries.
\end{theorem}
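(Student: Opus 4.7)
The plan is to learn the adjacency matrix $A_G$ one column at a time, applying \cref{lem:QFT} vertex-by-vertex and simulating each of its sum-queries by three cut queries via \cref{lem:3}.

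Fix a vertex $v \in V$ and let $x^v \in [M]^{n-1}$ be the vector indexed by $V\setminus\{v\}$ with $x^v_u = w(\{u,v\})$; once we have $x^v$ for every $v$, the graph is determined. For any $S \subseteq V\setminus\{v\}$ the identity
\[
\sum_{u \in S} x^v_u \;=\; w(S,\{v\})
\]
holds as an integer equation, and reducing modulo $M$ gives exactly the subset-sum oracle required by \cref{lem:QFT}. Because $S$ and $\{v\}$ are disjoint, \cref{lem:3} expresses $w(S,\{v\})$ as $\tfrac{1}{2}\bigl(c(S)+c(\{v\})-c(S\cup\{v\})\bigr)$, so each individual sum-query can be served by three cut queries. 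This simulation is manifestly coherent: the cut oracle is a classical function, we compute $c(S)$, $c(\{v\})$, and $c(S\cup\{v\})$ reversibly into ancilla registers, combine them, and uncompute, all in superposition over $S$.

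Applying \cref{lem:QFT} with $k=n-1$ and modulus $M$ therefore learns $x^v$ exactly using $\lceil\log M\rceil$ sum-queries, which costs $3\lceil\log M\rceil$ cut queries. Repeating this subroutine for every $v \in V$ recovers all of $A_G$ and uses $3n\lceil\log M\rceil$ cut queries in total. For a simple graph we have $M=2$, so $\lceil\log M\rceil=1$ and the bound becomes $3n$.

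There is essentially no obstacle to overcome here beyond the observation that \cref{lem:3}'s three-query simulation goes through in superposition; the heavy lifting has already been done in \cref{lem:QFT}. One small bookkeeping point worth flagging is that the cut queries return integers which we then reduce mod $M$ to feed into \cref{lem:QFT}; this is fine because the entries of $x^v$ lie in $[M]$ by hypothesis, so Fourier inversion over $(\Z_M)^{n-1}$ recovers $x^v$ without error.
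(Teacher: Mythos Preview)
Your proof is correct and is essentially the same as the paper's: the paper also learns each row of $A_G$ by treating $(\{v_i\}, V\setminus\{v_i\})$ as a bipartition, simulating the required subset-sum queries by three cut queries (via \cref{lem:biptograph}, which is just \cref{lem:3} specialized to bipartite graphs), and invoking the Fourier-based recovery of \cref{lem:QFT} (packaged there through \cref{cor:individual}). You have simply unwound the intermediate lemmas and appealed to \cref{lem:QFT} and \cref{lem:3} directly.
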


\begin{proof}
Let $A_G$ be the adjacency matrix of $G$, and let $v_1, \ldots, v_n$ be the order of the vertices labelling the rows and columns.  
Let $B_i = A_G(i, \{1, \ldots, i-1\} \cup \{i+1, \ldots, n\})$ be the 
$i^{th}$ row of $A_G$ with the $i^{th}$ entry (which must be $0$) removed.  Then $x^T B_i y$ 
can be computed with 3 cut queries by \cref{lem:biptograph} for any $x \in \{0,1\}, y \in \{0,1\}^{n-1}$.  Thus 
by \cref{cor:individual} we can learn $B_i$ with $3\ceil{\log M}$ many cut queries.  Doing this in turn for each $v_i$
gives the result.

The ``in particular'' statement follows by taking $M=2$.  
\end{proof}

Next we show how to learn low-degree and sparse graphs more efficiently.
We start out in the bipartite case. 

\begin{lemma}
\label{lem:bipartite_learning_degree}
Let $G=(V_1,V_2, w)$ be a weighted bipartite graph with $|V_1|=k, |V_2|=\ell$, 
and edge weights in $[M]$. 
\begin{enumerate}
\item If $\deg(u) \leq d$ for every $u \in L$ then
there is a quantum algorithm that learns $G$ with probability at least $1-\delta$ after making 
{$O( (d \log (M\ell/d) + \log(1/\delta)) \log M )$} many cut queries.
\item If $G$ has $m$ edges then there is a quantum algorithm that learns $G$ with probability at least $1-\delta$ 
after making
\[
O\left(\sqrt{m  \log(M \ell)} \log(M)  + \log(\ell M) (\log(\ell)+1) \log\left(\frac{k(\log(\ell)+1)}{\delta}\right) 
\right)
\]
many 
cut queries.
When $M=2$, a better bound of $O(\sqrt{m \log(\ell)} + \log(\ell+1) + \log(1/\delta))$ many cut queries holds.
\end{enumerate}
\end{lemma}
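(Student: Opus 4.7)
The plan is to reduce both parts to the matrix-learning results of \cref{sec:matrix_learning} by applying them to the biadjacency matrix $B_G \in [M]^{k\times \ell}$ of the bipartite graph $G$. The key observation is already provided by \cref{lem:biptograph}: every matrix cut query $x^T B_G y$ on the biadjacency matrix can be simulated by exactly $3$ queries to the cut oracle of $G$. Hence any algorithm that makes $q$ matrix cut queries to $B_G$ can be transformed into a cut oracle algorithm for $G$ making $3q = O(q)$ queries, preserving the success probability.

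For part (1), note that the hypothesis $\deg(u) \le d$ for every $u \in V_1$ is exactly the statement that every row of $B_G$ is $d$-sparse. I would therefore invoke \cref{lem:learn_matrix} directly on $B_G$, which learns the matrix with probability at least $1-\delta$ using $O((d \log(M\ell/d) + \log(1/\delta))\log M)$ many matrix cut queries, and then multiply the query count by $3$ via \cref{lem:biptograph}. Since $B_G$ determines $G$ (the weighted edges are precisely the nonzero entries of $B_G$), learning $B_G$ is equivalent to learning $G$.

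For part (2), I would apply \cref{thm:matrix_m} to $B_G$, since $B_G$ has exactly $m$ nonzero entries (one per edge of $G$). This immediately yields the stated bound on the number of matrix cut queries, and again simulating each matrix cut query by $3$ cut queries via \cref{lem:biptograph} proves the claim. The improved bound for $M=2$ is inherited verbatim from the corresponding sharper bound in \cref{thm:matrix_m} for the Boolean case.

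There is no real obstacle here: this is a clean reduction, since \cref{lem:biptograph} gives a constant-overhead simulation of matrix cut queries to the biadjacency matrix using cut queries to $G$, and the matrix versions of both results have already been established. The only minor point to verify is that the constant factor of $3$ from the simulation is absorbed into the big-$O$ notation, which is immediate.
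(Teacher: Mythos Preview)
Your proposal is correct and takes essentially the same approach as the paper: simulate matrix cut queries to the biadjacency matrix $B_G$ via \cref{lem:biptograph} at a cost of $3$ cut queries each, then invoke \cref{lem:learn_matrix} for part~(1) and \cref{thm:matrix_m} for part~(2). The paper's proof is just a two-sentence version of what you wrote.
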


\begin{proof}
By \cref{lem:biptograph} any matrix cut query to the biadjacency matrix of $G$ can 
be computed by 3 queries to the cut oracle for $G$.
Item~(1) therefore follows from \cref{lem:learn_matrix}
and item~(2) follows from \cref{thm:matrix_m}.
\end{proof}

Now we extend the bipartite algorithms to the general case.

\begin{theorem}
\label{lem:cut_learn}
Let $G=(V,w)$ be a weighted $n$-vertex graph, $n > 1$, with edge weights in $[M]$. 
\begin{enumerate}
\item If $\deg(v) \leq d$ for every $v \in V$ then
there is a quantum algorithm that learns $G$ with probability at least $1-\delta$ after making 
$ O((d \log (Mn/d) + \log(\log(n)/\delta)) \log (M) \log(n) )$  many cut queries.
\item If $G$ has $m$ edges then there is a quantum algorithm that learns $G$ with probability at least $1-\delta$ 
after making
\[
O\left(\left(\sqrt{m  \log(M n)} \log(M)  + \log(Mn) (\log(n)) \log\left(\frac{n \log(n)}{\delta}\right)
\right) \log(n) \right)
\]
many cut oracle queries to $G$.  When $M=2$, a better bound of $O( (\sqrt{m \log(n)} + \log(n) + \log(\log(n)/\delta)) \log(n) )$ many cut queries holds.
\end{enumerate}
\end{theorem}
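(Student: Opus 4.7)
The plan is to reduce learning a general graph to learning $\lceil \log n \rceil$ bipartite subgraphs and then to invoke the results already established for the bipartite case. Fix any binary labeling $\ell : V \to \{0,1\}^{\lceil \log n \rceil}$ assigning distinct labels to distinct vertices. For each bit position $j \in \{1, \ldots, \lceil \log n \rceil\}$, define the bipartition $V_1^{(j)} = \{v : \ell(v)_j = 0\}$ and $V_2^{(j)} = \{v : \ell(v)_j = 1\}$, and let $G_j$ be the bipartite subgraph of $G$ consisting of all edges with one endpoint in each side of this bipartition. Since any two distinct vertices of $V$ differ in at least one bit of their labels, every edge of $G$ appears in $G_j$ for some $j$, so recovering all the $G_j$ recovers $G$.

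Next I would show that matrix cut queries to the biadjacency matrix of $G_j$, namely $A_G(V_1^{(j)}, V_2^{(j)})$, can be simulated by $3$ cut queries to $G$. For $x \in \{0,1\}^{|V_1^{(j)}|}$ and $y \in \{0,1\}^{|V_2^{(j)}|}$, extend them to $\bar x, \bar y \in \{0,1\}^n$ by placing $x$ on the coordinates of $V_1^{(j)}$, $y$ on those of $V_2^{(j)}$, and zeros elsewhere. Then $x^T A_G(V_1^{(j)}, V_2^{(j)}) y = \bar x^T A_G \bar y$, and because $V_1^{(j)} \cap V_2^{(j)} = \emptyset$, $\bar x$ and $\bar y$ are characteristic vectors of disjoint subsets of $V$; so by \cref{lem:3} this matrix cut query to $A_G$ is computed with $3$ cut queries to $G$.

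For item~(1), observe that since $G_j$ is a subgraph of $G$, each row of $A_G(V_1^{(j)}, V_2^{(j)})$ is $d$-sparse whenever $G$ has maximum degree $d$. Apply \cref{lem:bipartite_learning_degree}(1) to $G_j$ with failure probability $\delta/\lceil \log n \rceil$, simulating its queries through the reduction above; this costs $O\bigl((d\log(Mn/d) + \log(\log(n)/\delta))\log M\bigr)$ cut queries to $G$. Repeating over all $\lceil \log n \rceil$ bipartitions and taking a union bound over their failures gives total failure probability at most $\delta$ and the claimed bound, after which $G$ is recovered by unioning the edges learned from each $G_j$.

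For item~(2), each $G_j$ has at most $m$ edges because it is a subgraph of $G$. Apply \cref{lem:bipartite_learning_degree}(2) to each $G_j$ with failure probability $\delta/\lceil \log n \rceil$; the claimed query bound follows by multiplying the bipartite bound by $\lceil \log n \rceil$ and a union bound over the failure events. The $M=2$ case uses the sharper bound from \cref{lem:bipartite_learning_degree}(2). There is essentially no hard obstacle here: the substantive content has already been done in \cref{sec:matrix_learning} and in \cref{lem:bipartite_learning_degree}, and the only point requiring care is keeping track of the failure probability inflation and the query cost simulation factor of $3$, both of which are absorbed into the stated $O(\cdot)$.
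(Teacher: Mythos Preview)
Your proposal is correct and follows essentially the same approach as the paper: decompose $G$ into $\lceil \log n\rceil$ bipartite graphs via the bits of a binary labeling of $V$, simulate matrix cut queries on each biadjacency matrix by three cut queries to $G$ via \cref{lem:3}, and then invoke \cref{lem:bipartite_learning_degree} on each bipartite piece with error parameter $\delta/\lceil \log n\rceil$, finishing with a union bound.
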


\begin{proof}
Let $r=\ceil{\log(n)}$.  We suppose that the vertices are labeled with $r$-bit
binary strings. We define $r$ many weighted bipartite graphs  $G_i = (L_i, R_i, w_i)$, for $i = 1, \ldots , r$, derived from $G$ as follows. 
$L_i$ consists of the vertices
of $V$ whose $i^{th}$ bit is 0, whereas $R_i$ consists of the vertices whose $i^{th}$ bit is 1. 
The weight function $w_i$ in $G_i$ is defined as 
\[
w_i (u, v) = 
\begin{cases} 
w(u, v) & \mbox{ if } (u, v) \in L_i \times R_i \\
0          & \mbox{ otherwise} \enspace .
\end{cases}
\]

Let $B_{G_i}$ be the biadjacency matrix of $G_i$.  Let $x \in \{0,1\}^{|L_i|}, y \in \{0,1\}^{|R_i|}$ and $X = \{v \in V: x_i(v) = 1\}, Y = \{v \in V: y_i(v) = 1\}$, and let $\chi_X$ and $\chi_Y$ be the characteristic sequence
of respectively $X$ and $Y$ in $V$.  
Then $x^T B_{G_i} y = \chi_X^T A_{G_i} \chi_Y$, and since $X$ and $Y$ are disjoint sets in $V$
we can compute it 
with $3$ cut queries to $G$ by \cref{lem:3}.  

We first prove item~(1).  As $G$ has degree at most $d$, each $B_i$ will also have degree at most $d$.  Therefore by applying the first item of
\cref{lem:bipartite_learning_degree} with error bound $\delta/r$ we can learn $B_i$ with 
$O( (d \log (Mn/d) + \log(\log(n)/\delta)) \log (M))$ 
many cut queries.  Doing this for each $B_i$ in turn gives item~(1).

Similarly for item~(2), as $G$ has at most $m$ edges so will each $B_i$.  Learning each $B_i$ in 
turn by the second item of 
\cref{lem:bipartite_learning_degree} with error bound $\delta/r$ gives the bound in item~(2).
\end{proof}

\subsection{Lower bounds}
\label{sec:lb}
In the classical case, an $\Omega(n)$ lower bound is known on the number of cut queries needed by a deterministic algorithm to determine if a graph 
is connected.  As observed by Harvey (Theorem~5.9 in \cite{Harvey08}), 
an $\Omega(n)$ cut query lower bound for connectivity follows from the deterministic communication complexity lower bound of $\Omega(n \log n)$ 
for connectivity by \cite{HMT88}, and the fact that the answers to cut queries can be communicated with $O(\log n)$ bits.  Clearly this lower bound applies to all 4 of the graph oracles 
we have discussed, as all of them have answers that can be communicated with $O(\log(n))$ bits.  For randomized communication protocols a $\Omega(n)$ lower bound 
is known for connectivity \cite{BFS86}, giving an $\Omega(n/\log(n))$ lower bound on the number of cut queries needed to solve connectivity.

The quantum case behaves differently.  There is an analogous communication complexity result: \cite{IKLSW12} show that the 
bounded-error quantum communication complexity of connectivity is $\Omega(n)$.  However, the standard way to turn a query 
algorithm into a communication protocol, developed by Buhrman, Cleve, and Wigderson (BCW) \cite{BCW98}, involves 
Alice and Bob sending back and forth the query and answer state.  In the case of cut queries, this is
a state of the form $\sum_{S \subseteq \{0,1\}^n} \alpha_S \ket{S}\ket{b_S}$ that is $n+ \log(n)$-qubits.  Thus a $k$-cut query quantum algorithm for connectivity 
gives a $kn$ qubit communication protocol for connectivity via the BCW simulation.  
Our polylogarithmic cut query algorithm for connectivity shows an example of a (partial) function where this blow-up from the BCW simulation is nearly optimal.  
We note that Chakraborty et al.\  \cite{CCMP19} recently gave the first example of a \emph{total} function showing that the BCW blow-up can be necessary.

We now show that our quantum algorithms for learning a general graph with $O(n)$ cut queries \cref{thm:learning_ind}, a graph of maximum degree $d$ with 
$\widetilde O(d)$ cut queries, and an $m$-edge graph with $\widetilde O(\sqrt{m})$ cut queries \cref{lem:cut_learn}
are all tight, even if the algorithms are equipped with the stronger additive oracle.

Recall that the inner product function $\IP_{N} : \{0,1\}^N \times \{0,1\}^N \rightarrow \{0,1\}$ on strings of length 
$N$ is defined as $\IP_{N}(x,y) = \PARITY(x \wedge y)$.  The bounded-error quantum communication 
complexity of $\IP_{N}$ is $\Omega(N)$ \cite{Kre}.

\begin{lemma}
\label{clm:ip}
Suppose there is a bounded-error quantum query algorithm that learns a graph on $n$ vertices with $k$ many additive queries.  Then there is 
a bounded-error quantum communication protocol for $\IP_{\binom{n}{2}}$ with $O(kn)$ qubits of communication.
\end{lemma}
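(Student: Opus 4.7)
The plan is to use a Buhrman--Cleve--Wigderson (BCW) style simulation of the given quantum query algorithm by a two-party protocol, adapted to the additive oracle model. Let Alice hold $x \in \{0,1\}^{\binom{n}{2}}$ and Bob hold $y \in \{0,1\}^{\binom{n}{2}}$, indexed by unordered pairs $e \in V^{(2)}$ with $V = [n]$, and view $x,y$ as the edge-indicators of graphs $G_A,G_B$ on $V$. Define the weighted graph $G = (V,w)$ with $w(e) = x_e + y_e \in \{0,1,2\}$. The crucial feature is that the additive oracle decomposes: $a_G(S) = a_{G_A}(S) + a_{G_B}(S)$ for every $S \subseteq V$, and Alice can compute $a_{G_A}(S)$ from her input alone, while Bob can compute $a_{G_B}(S)$ from his.

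Next I would implement the joint additive oracle $U_G : \ket{S}\ket{b} \mapsto \ket{S}\ket{b + a_G(S) \bmod 2^m}$, where $m = O(\log n)$ suffices since $a_G(S) \le 2\binom{n}{2}$. One round of communication realizes $U_G$: Alice applies the local unitary that adds $a_{G_A}(S)$ to the answer register controlled by $\ket{S}$, sends the $n$-qubit query register together with the $m$-qubit answer register to Bob, Bob adds $a_{G_B}(S)$ by the analogous local unitary, and sends both registers back. This costs $O(n + \log n) = O(n)$ qubits per query. With Alice maintaining all remaining workspace of the learning algorithm, she runs the algorithm end to end, invoking this joint oracle at each of the $k$ query steps, for a total of $O(kn)$ qubits of communication.

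At the end Alice measures the output register to obtain a description of $G$ with probability bounded away from $1/2$, since the learning algorithm is bounded-error. She then reads off $\IP_{\binom{n}{2}}(x,y) = |\{e : w(e) = 2\}| \bmod 2$, because $w(e) = 2$ precisely when $x_e = y_e = 1$; one additional qubit sent to Bob broadcasts the bit if needed. The only delicate point is that the joint oracle must be a genuine unitary so it can be plugged into the quantum algorithm in superposition, but this is automatic: the two local updates $\ket{b} \mapsto \ket{b + a_{G_A}(S)}$ and $\ket{b} \mapsto \ket{b + a_{G_B}(S)}$ are each unitary, controlled by the query register $\ket{S}$ and by the classical inputs $x,y$, and their composition is $U_G$. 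The resulting protocol is a bounded-error quantum communication protocol for $\IP_{\binom{n}{2}}$ using $O(kn)$ qubits of communication, as required.
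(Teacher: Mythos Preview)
Your BCW-style simulation is correct in spirit, but there is a mismatch with the hypothesis of the lemma. The statement assumes a learning algorithm for \emph{simple} graphs (in the paper's convention, ``graph'' without qualifier means the range of $w$ is $\{0,1\}$), yet you apply it to the weighted graph $G$ with $w(e)=x_e+y_e\in\{0,1,2\}$. On inputs outside its promise the algorithm need not return anything meaningful, so you cannot conclude that Alice recovers $w$, and hence cannot extract $|\{e:w(e)=2\}|\bmod 2$.

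The paper avoids this by a simpler reduction in which only one party holds a graph. Bob forms the simple graph $G_y$ with edge set $\{f(e):y_e=1\}$, and Alice runs the $k$-query learning algorithm with Bob serving each additive query on $G_y$; each round ships the $(n+O(\log n))$-qubit query/answer state back and forth, for $O(kn)$ qubits total. At the end Alice has learned $G_y$, hence knows $y$, and computes $\IP_{\binom{n}{2}}(x,y)$ locally from $x$ and $y$. Your argument is easily repaired along these lines: drop the sum graph and let the oracle live entirely on Bob's side.
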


\begin{proof}
We consider $[n]^{(2)}$, the set of possible edges in a graph with vertex set $[n]$.
Alice and Bob first agree on a bijective mapping $f: [\binom{n}{2}] \rightarrow [n]^{(2)}$.  Now say that Alice 
receives $x \in \{0,1\}^{\binom{n}{2}}$ and Bob receives $y \in \{0,1\}^{\binom{n}{2}}$, and they wish 
to compute $IP_{\binom{n}{2}}(x,y)$.  Bob first creates a graph $G_y$ with
vertex set $[n]$  and edge set 
$E = \{f(e): y_e = 1\}$.  Alice now runs the quantum query algorithm to learn $G_y$ by sending the query states 
to Bob, who serves each query and sends the state back to Alice.  As the query state is $n + \log(n)$ 
qubits, this will take communication $O(kn)$ if there are $k$ queries.  
\end{proof}

\begin{theorem}
\label{thm:lower}
Any quantum algorithm to learn an $n$-vertex graph with bounded-error must make $\Omega(n)$ many additive queries.
\end{theorem}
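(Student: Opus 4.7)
The plan is to combine \cref{clm:ip} with the known quantum communication complexity lower bound on inner product in a completely routine reduction. Suppose for contradiction that there is a bounded-error quantum query algorithm that learns an arbitrary $n$-vertex graph with $k = o(n)$ many additive queries. By \cref{clm:ip}, this algorithm can be converted into a bounded-error quantum communication protocol for $\IP_{\binom{n}{2}}$ using $O(kn) = o(n^2)$ qubits of communication.

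On the other hand, the bounded-error quantum communication complexity of $\IP_N$ is $\Omega(N)$ by the result of Kremer cited in the paragraph preceding \cref{clm:ip}. Instantiating this lower bound with $N = \binom{n}{2} = \Theta(n^2)$, we conclude that any bounded-error quantum communication protocol for $\IP_{\binom{n}{2}}$ must use $\Omega(n^2)$ qubits of communication. This contradicts the $o(n^2)$ upper bound derived in the previous paragraph, so we must have $k = \Omega(n)$.

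Since the entire argument is essentially a two-line composition of \cref{clm:ip} with the $\IP$ lower bound, there is no real obstacle to overcome here; the substance of the work was already done in proving \cref{clm:ip}, and in particular in observing that the BCW simulation of a cut/additive query costs only $n + O(\log n)$ qubits per query. I would present the proof as a short paragraph that simply chains these two ingredients together and solves $kn = \Omega(n^2)$ for $k$.
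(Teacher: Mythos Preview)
Your proposal is correct and matches the paper's approach exactly: the paper's proof is a single sentence saying the theorem is immediate from \cref{clm:ip} together with Kremer's $\Omega(n^2)$ lower bound on the quantum communication complexity of inner product on $\binom{n}{2}$ bits. Your contradiction phrasing is just a slightly more verbose restatement of the same chaining.
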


\begin{proof}
The theorem is immediate from \cref{clm:ip} and Kremer's $\Omega (n^2)$ lower bound on the quantum communication complexity of the inner product on $\binom{n}{2} $ bits.
\end{proof}

\begin{corollary}
\label{cor:lower_m}
For any $n$ and $m \le \binom{n}{2}$ there is a family of $n$-vertex graphs $\mathcal{G}_{n,m}$ with at most $m$ edges 
such that any quantum algorithm requires $\Omega(\sqrt{m})$ many additive queries to learn a graph from $\mathcal{G}_{n,m}$ 
with bounded-error.
\end{corollary}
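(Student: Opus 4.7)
The plan is to derive this lower bound by reducing to Theorem \ref{thm:lower} applied to a smaller vertex set, via a padding argument. Set $n' = \lfloor \sqrt{2m} \rfloor$, so that $\binom{n'}{2} \le m$; the hypothesis $m \le \binom{n}{2}$ guarantees $n' \le n$. Define $\mathcal{G}_{n,m}$ to be the family of all $n$-vertex graphs in which every edge is incident to two of the first $n'$ vertices $v_1,\ldots,v_{n'}$. Every graph in this family has at most $\binom{n'}{2} \le m$ edges, as required, while the remaining $n-n'$ vertices are isolated.

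Next I would show that any bounded-error quantum algorithm $\mathcal{A}$ that learns $\mathcal{G}_{n,m}$ with $k$ additive queries yields, with the same $k$, a bounded-error quantum algorithm $\mathcal{A}'$ that learns an arbitrary $n'$-vertex graph. The simulation is direct: given an $n'$-vertex input $H$, conceptually embed it as a graph $G \in \mathcal{G}_{n,m}$ by attaching $n - n'$ isolated vertices. Whenever $\mathcal{A}$ issues an additive query on a subset $S \subseteq \{v_1,\ldots,v_n\}$, the answer for $G$ equals the answer for $H$ on the subset $S \cap \{v_1,\ldots,v_{n'}\}$, since isolated vertices contribute nothing to the additive oracle. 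Hence $\mathcal{A}'$ answers each of $\mathcal{A}$'s queries with a single additive query to $H$, and then reports the edges that $\mathcal{A}$ would output among the first $n'$ vertices. The error probability is unchanged.

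Invoking Theorem \ref{thm:lower} on $n'$-vertex graphs then forces $k = \Omega(n') = \Omega(\sqrt{m})$, which completes the proof.

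I do not expect a serious obstacle here: everything reduces to the previously proven Theorem \ref{thm:lower}, and the only delicate points are the arithmetic check $n' \le n$ and observing that the additive oracle for the padded graph is trivially simulated by the additive oracle for the unpadded graph. The key conceptual idea is simply that an $n$-vertex graph with at most $m$ edges is at least as hard as a $\Theta(\sqrt{m})$-vertex graph, so the lower bound scales with the effective vertex count rather than the total vertex count.
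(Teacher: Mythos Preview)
Your proof is correct and follows essentially the same padding argument as the paper: restrict all edges to a clique on $\Theta(\sqrt{m})$ vertices, leave the rest isolated, and invoke Theorem~\ref{thm:lower}. The only cosmetic difference is that the paper picks the smallest $k$ with $\binom{k}{2}\ge m$ whereas you take $n'=\lfloor\sqrt{2m}\rfloor$ so that $\binom{n'}{2}\le m$; your choice is arguably cleaner since it directly guarantees the ``at most $m$ edges'' condition, and you also spell out the oracle simulation that the paper leaves implicit.
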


\begin{proof}
Let $\mathcal{G}_{n,m}$ be the family of graphs that are arbitrary among the first $k$ vertices, where $k$ is the smallest 
integer such that $\binom{k}{2} \ge m$, and all remaining $n-k$ vertices are isolated.  By \cref{thm:lower} $\Omega(k) = \Omega(\sqrt{m})$ 
many additive queries are needed to learn a graph from this family.
\end{proof}

\begin{corollary}
\label{cor:lower_d}
For any $n$ and $d \le n$ there is a family of $n$-vertex graphs $\mathcal{G}_{n,d}$ with each vertex having degree at most $d$ such that any quantum algorithm requires $\Omega(d \log(n/d))$ many additive queries to learn a graph from $\mathcal{G}_{n,d}$ 
with bounded-error.
\end{corollary}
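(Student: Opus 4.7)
The plan is to follow the template of \cref{cor:lower_m}: construct a family $\mathcal{G}_{n,d}$ of $n$-vertex graphs of maximum degree at most $d$ with $|\mathcal{G}_{n,d}| \ge 2^{\Omega(nd\log(n/d))}$, inject $\{0,1\}^N$ into $\mathcal{G}_{n,d}$ for $N = \Omega(nd\log(n/d))$, and then combine the BCW-style simulation used in \cref{clm:ip} with Kremer's $\Omega(N)$ quantum communication lower bound for $\IP_N$. On the learning side, a $k$-query algorithm yields a protocol of $O(kn)$ qubits, so we would conclude $kn = \Omega(nd\log(n/d))$ and hence $k = \Omega(d\log(n/d))$.

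For the family, I would fix a bipartition $V = L \sqcup R$ with $|L|=|R|=\lfloor n/2\rfloor$ and let $\mathcal{G}_{n,d}$ consist of bipartite graphs on $(L,R)$ in which every left vertex has degree exactly $\lfloor d/2\rfloor$ and every right vertex has degree at most $d$. There are $\binom{\lfloor n/2\rfloor}{\lfloor d/2\rfloor}^{\lfloor n/2\rfloor}$ ways to independently choose each left vertex's neighbourhood; a uniformly random such choice makes each right-vertex degree a binomial random variable with mean at most $d/2$, and a Chernoff plus union bound over $R$ shows that, provided $d$ is at least a sufficiently large constant multiple of $\log n$, at least half of these choices have all right degrees bounded by $d$. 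Consequently $|\mathcal{G}_{n,d}| \ge \tfrac12 \binom{\lfloor n/2\rfloor}{\lfloor d/2\rfloor}^{\lfloor n/2\rfloor} = 2^{\Omega(nd\log(n/d))}$ in this regime; for small $d$ the same bound follows either from the classical asymptotic count of $d$-regular bipartite graphs or from an explicit matching/disjoint-star construction.

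Given the family, fix any injection $\iota\colon \{0,1\}^N \hookrightarrow \mathcal{G}_{n,d}$ with $N = \Omega(nd\log(n/d))$. To turn a hypothetical $k$-query bounded-error quantum learner for $\mathcal{G}_{n,d}$ into an $\IP_N$ protocol, Bob forms $G_y := \iota(y)$ from his input $y$ and Alice runs the learning algorithm on $G_y$, sending Bob the $(n+\ceil{\log n})$-qubit query register at each step for him to answer and return, exactly as in the proof of \cref{clm:ip}. After $O(kn)$ qubits of communication Alice holds $G_y$, decodes $y = \iota^{-1}(G_y)$, and outputs $\IP_N(x,y)$. Kremer's $\Omega(N)$ bound on the bounded-error quantum communication of inner product then forces $kn = \Omega(nd\log(n/d))$, i.e.\ $k = \Omega(d\log(n/d))$, as required.

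The main technical obstacle is the first step: producing a family of bipartite max-degree-$d$ graphs whose cardinality matches the information-theoretic maximum $2^{\Theta(nd\log(n/d))}$. The probabilistic construction is clean when $d \gtrsim \log n$, where the Chernoff slack comfortably absorbs the union bound; the regime $d = O(\log n)$ instead needs either an exact counting formula for $d$-regular bipartite graphs or a tailored combinatorial construction. Any loss of a constant factor in the degree bound is harmless, since the target $\Omega(d\log(n/d))$ is insensitive to rescaling $d$ by constants.
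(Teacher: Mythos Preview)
Your argument is correct and follows the same high-level template as the paper: exhibit a family $\mathcal{G}_{n,d}$ of max-degree-$d$ graphs of size $2^{\Omega(nd\log(n/d))}$, inject $\{0,1\}^N$ into it, run the BCW simulation of \cref{clm:ip}, and invoke Kremer's $\Omega(N)$ bound on $\IP_N$. The reduction step and the final arithmetic are identical to the paper's.

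The difference is entirely in how the family is built. The paper takes $\mathcal{G}_{n,d}$ to be the set of all $d$-regular $n$-vertex graphs and simply quotes the known bound $|\mathcal{G}_{n,d}| \ge (n/(2ed))^{nd/2}$ (pointing to \cite{GK00} and \cite{LW17}); this gives $\ell = \Omega(nd\log(n/d))$ in one line with no case split. Your probabilistic bipartite construction also works, but it is more labor-intensive and forces you to treat $d = O(\log n)$ separately---a regime you end up handling by appealing to essentially the same regular-graph count the paper uses from the start. So your route is sound but strictly more complicated; replacing your first step with the citation eliminates the Chernoff argument and the case analysis entirely.
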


\begin{proof}
Let $\mathcal{G}_{n,d}$ be the set of all $d$ regular graphs on $n$-vertices.  It is known that $|\mathcal{G}_{n,d}| \ge \left(\frac{n}{2ed}\right)^{nd/2}$ (see the argument above Theorem 4 in \cite{GK00}, or the very general results in \cite{LW17}).
Letting $\ell = \floor{\frac{nd}{2} \log\left(\frac{n}{2ed}\right)}$, Alice and Bob can therefore agree on a bijection between $\{0,1\}^\ell$ and a subset of $\mathcal{G}_{n,d}$.  As in the proof of \cref{clm:ip}, they then can use an algorithm to learn graphs 
in $\mathcal{G}_{n,d}$ with $k$ additive queries to solve $\IP_\ell$ with quantum communication complexity $O(kn)$.  Thus $k = \Omega(\ell/n) = \Omega(d \log(n/d))$.  
\end{proof}

\section{A quantum algorithm for computing connected components with cut queries} \label{sec:connect}
In this section, we give a quantum algorithm that outputs the connected components of an $n$-vertex graph after 
making $O(\log(n)^6)$ cut queries.  By \cref{lem:add_cut}, this implies the same result with respect to additive queries.

We first give a high level overview of the algorithm.  Let $G$ be an $n$-vertex graph with vertex set $V$ and $A_G$ its adjacency matrix.  
For this high level overview, we will assume that we have matrix cut query access to $A_G$.  This case contains all of the conceptual ideas 
needed and eliminates several technical issues that arise in the cut query case.

The algorithm proceeds in rounds.  In every round, we maintain a partition of $V$.  In a general round, we represent the partition as 
$\{S_1, \ldots, S_k, C_1, \ldots, C_t\}$ and let $\Scal = \{S_1, \ldots, S_k\}$ and $\Ccal = \{C_1, \ldots, C_t\}$.  We refer to sets of vertices as 
supervertices.  The algorithm maintains 
the invariants that each $\Ccal_j$ is a connected component, i.e. the induced subgraph on $C_j$ is connected and $|E(C_j, V \setminus C_j)| = 0$, and 
that for each $S_i$ the induced subgraph on $S_i$ is connected.  As the sets $C_j$ are connected components, they require no further processing 
and thus in a round we focus only on the sets in $\Scal$.  We initially set $\Scal = \tilde V =  \{ \{v\} : v \in V\}$ and $\Ccal = \emptyset$. 

An important concept for the algorithm will be the adjacency matrix of the weighted graph among the sets of $\Scal$.  
For $\Scal = \{S_1, \ldots, S_k\}$ define a $k$-by-$k$ matrix $A_{\Scal}$ where $A_{\Scal}(i,j) = |E(S_i, S_j)|$.  A key fact is that matrix cut query access to 
$A_G$ allows us to implement matrix cut queries on $A_{\Scal}$.  For $x,y \in \{0,1\}^k$ define $u,v \in \{0,1\}^n$ to be the characteristic vectors of the sets
$\cup_{i: x_i = 1} S_i$ and $\cup_{i: y_i = 1} S_i$ respectively.  Then we have $x^T A_{\Scal} y = u^T A_G v$.  

A general round proceeds as follows.  We take a degree threshold $d = \Theta(\log(n)^2)$.  The superdegree of a supervertex $S_i$ is the number of 
$j \ne i$ such that $|E(S_i, S_j)| > 0$.  Supervertices with superdegree at most $d$ we call low superdegree, and supervertices with superdegree 
greater than $d$ we call high superdegree.
\begin{enumerate}
  \item Estimate the superdegree of all supervertices by \cref{lem:matrix_approx_degree}  ($O(\log(n)^3)$ queries).
  \item For low superegree supervertices learn all their neighbors $S_j$ by \cref{lem:learn_matrix} ($O(\log(n)^4)$ queries taking $d = \Theta(\log(n)^2)$).
  \item Randomly sample $\Rcal \subseteq \Scal$ satisfying $|\Rcal| \le |\Scal|/2$.  With high probability 
  for every high degree supervertex $S_i$ there is an $R \in \Rcal$ with $|E(S_i, R)| > 0$, and we again use \cref{lem:learn_matrix} to find 
  such an $R$ for every high degree $S_i$ with $O(\log(n)^4)$ queries.
  \item Merge all supervertices $S_i, S_j$ for which we know $|E(S_i, S_j)| > 0$ and update $\Scal$ and $\Ccal$ accordingly.  
\end{enumerate}
We show that with high probability after this merging step the number of supervertices in $\Scal$ is at most half of what it was at the beginning of the round.  
This follows because with high probability every high degree supervertex will be merged with an element of $\Rcal$.
For low degree supervertices, we learn all of their neighbors.  Therefore if the connected component of a low degree supervertex $S_i$ only contains 
low degree supervertices, this step will learn its entire connected component and we can add this component to $\Ccal$.  Otherwise there is some 
low degree supervertex $S_j$ in the connected component of $S_i$ that is connected to a high degree supervertex $S_\ell$.  We will learn this connection 
in Step~(2), and learn a neighbor of $S_\ell$ in $\Rcal$ in step~(3).  This means that in Step~(4) $S_i$ will be merged into a supervertex that also contains 
some element of $\Rcal$.  Thus every supervertex for which we have not already learned its connected component will be merged with a supervertex in $\Rcal$ 
with high probability
and at the end of the round the total number of supervertices in $\Scal$ is reduced by at least half.  The algorithm thus terminates after $O(\log n)$ rounds and the 
total complexity becomes $O(\log(n)^5)$ many matrix cut queries.  

We have seen that cut queries cannot efficiently simulate matrix cut queries in general.  
However, by \cref{lem:biptograph}, cut queries can simulate matrix cut queries on \emph{bipartite} 
weighted graphs.  The trick we use to adapt the above algorithm to the cut query case is to always work with bipartite graphs.  For a 
graph $G$ we associate $\ceil{\log n}$ many bipartite graphs $(L_i, R_i, E_i)$ for $i \in \ceil{\log n}$, where $L_i = \{v : v_i = 0\}, R_i = \{v : v_i = 1\}$ 
and $E_i = \{ (u,v) \in V \times V : \{u,v\} \in E, u \in L_i, v \in R_i\}$.  Note that every edge of $G$ appears as an edge in some $(L_i, R_i, E_i)$.  We essentially run steps 
(1)--(3) above on each $(L_i, R_i, E_i)$ separately, then incorporate all the information learned in the merge step in~(4).  Having to iterate over 
these $O(\log(n))$ many bipartite graphs results in an extra multiplicative logarithmic factor in the complexity, resulting in the claimed bound of 
$O(\log(n)^6)$ cut queries.

\subsection{Auxiliary subroutines}
In this subsection we go over some auxiliary subroutines that will be used in the cut query algorithm for connectivity.
We will make use of the following definition.
\begin{definition}[Supervertex, Superdegree]
A \emph{supervertex} is a subset of $S \subseteq V$ of vertices.  We say that a supervertex $S$ is connected if the subgraph induced on $S$ is connected.

We say that a set of supervertices $\Scal = \{S_1, \ldots, S_k\}$ is \emph{valid} if $S_i \cap S_j = \emptyset$ for all $i \ne j$.
We say that two valid sets of supervertices $\Scal = \{S_1, \ldots, S_k\}$ and $\Tcal =\{T_1, \ldots, T_\ell\}$ are \emph{disjoint} if 
$S_i \cap T_j = \emptyset$ for all $i \in \{1,\ldots, k\}, j \in \{1,\ldots, \ell\}$.

We say that there is a \emph{superedge} between supervertices $S_1, S_2$ if $|E(S_1, S_2)| > 0$.
Given two disjoint valid sets of supervertices $\Scal = \{S_1, \ldots, S_k\}$ and $\Tcal = \{T_1, \ldots, T_\ell\}$, the \emph{superdegree of $S_i$ into 
$\Tcal$}, denoted $\deg_{\Tcal}(S_i)$, is the number of $j \in [\ell]$ such that $|E(S_i, T_j)| >0$.
\end{definition}

\begin{definition}
Let $\Scal = \{S_1, \ldots, S_k\}$ and $\Tcal = \{T_1, \ldots, T_\ell\}$ be disjoint valid sets of supervertices of an $n$-vertex graph $G$.  
Define the \coisa between $\Scal$ and $\Tcal$ as the matrix $B \in [n^2]^{k \times \ell}$ where $B(i,j) = |E(S_i, T_j)|$.
\end{definition}

\begin{lemma}
\label{lem:B_query}
Let $\Scal = \{S_1, \ldots, S_k\}$ and $\Tcal = \{T_1, \ldots, T_\ell\}$ be disjoint valid sets of supervertices of an $n$-vertex graph $G$.  
Let $B$ be the \coisa between $\Scal$ and $\Tcal$.  A matrix cut query to $B$ can be answered with 
3 cut queries to $G$. 
\end{lemma}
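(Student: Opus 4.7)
The plan is to reduce a matrix cut query on $B$ to a single disjoint matrix cut query on the adjacency matrix $A_G$, and then invoke \cref{lem:3} to simulate that with $3$ cut queries.

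First, I would unpack what a matrix cut query to $B$ actually computes. Given $x \in \{0,1\}^k$ and $y \in \{0,1\}^\ell$, we have
\[
x^T B y \;=\; \sum_{i,j} x_i y_j \, |E(S_i, T_j)| \;=\; |E(X, Y)|,
\]
where $X = \bigcup_{i : x_i = 1} S_i$ and $Y = \bigcup_{j : y_j = 1} T_j$. Let $\chi_X, \chi_Y \in \{0,1\}^n$ be the characteristic vectors of $X$ and $Y$ in $V$; then $|E(X,Y)| = \chi_X^T A_G \, \chi_Y$, so a matrix cut query to $B$ is exactly a specific matrix cut query to $A_G$.

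Next I would verify the disjointness condition needed to apply \cref{lem:3}. Because $\Scal$ is valid, the sets $\{S_i : x_i = 1\}$ are pairwise disjoint and so $X$ is well-defined as a disjoint union; likewise for $Y$. Because $\Scal$ and $\Tcal$ are \emph{disjoint} as sets of supervertices, $S_i \cap T_j = \emptyset$ for all $i,j$, hence $X \cap Y = \emptyset$. Thus the query $\chi_X^T A_G \, \chi_Y$ lies in the domain of the disjoint matrix cut oracle for $G$.

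Finally, by \cref{lem:3} the disjoint matrix cut oracle for a graph can be simulated by the cut oracle with $3$ queries, explicitly via
\[
\chi_X^T A_G \, \chi_Y \;=\; \tfrac{1}{2}\bigl(c(X) + c(Y) - c(X \cup Y)\bigr).
\]
This gives the desired $3$-query simulation. There is no real obstacle here; the lemma is essentially a bookkeeping observation that unions of disjoint supervertex families produce disjoint vertex sets, at which point the previously proved disjoint-to-cut reduction applies directly.
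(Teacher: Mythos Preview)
Your proposal is correct and follows exactly the same approach as the paper: rewrite $x^T B y$ as $|E(X,Y)|$ for $X = \bigcup_{i:x_i=1} S_i$ and $Y = \bigcup_{j:y_j=1} T_j$, observe that $X$ and $Y$ are disjoint, and apply \cref{lem:3}. The paper's proof is a one-line version of what you wrote; your additional verification of disjointness and the explicit formula are fine but not required.
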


\begin{proof}
For $x \in \{0,1\}^k, y \in \{0,1\}^\ell$ we have $x^T B y = |E(\cup_{i: x(i)=1} S_i, \cup_{j: y(j) = 1} T_j)|$ and thus can be computed with 
3 cut queries by \cref{lem:3}.
\end{proof}

\begin{corollary}[Approximate Degree Sequence]
\label{lem:ADS}
There is a quantum algorithm Approximate Degree Sequence$(\Scal, \Tcal, \delta)$ that takes as input disjoint valid sets of supervertices
$\Scal = \{S_1, \ldots, S_k\}$ and $\Tcal = \{T_1, \ldots, T_\ell\}$ of an $n$-vertex graph $G$ and an error parameter $\delta$, and with probability at least $1-\delta$ 
outputs $\vec{g} \in \R^k$ that is a good estimate of $(\deg_{\Tcal}(S_1), \ldots, \deg_{\Tcal}(S_k))$.
The number of cut queries made is $O(\log(\ell n)(\log(\ell) + 1) \log(k (\log(\ell)+1)/\delta))$.
\end{corollary}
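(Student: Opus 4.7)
The plan is to reduce this to Lemma~\ref{lem:matrix_approx_degree} applied to the weighted biadjacency matrix $B$ between $\Scal$ and $\Tcal$, using Lemma~\ref{lem:B_query} to simulate the required matrix cut queries to $B$ by cut queries to $G$.

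First I would observe that the degree sequence of $B$ (in the sense of Definition~\ref{def:matrix_deg_seq}, i.e.\ the number of nonzero entries per row) is exactly $(\deg_{\Tcal}(S_1), \ldots, \deg_{\Tcal}(S_k))$: by construction $B(i,j) = |E(S_i, T_j)|$, and this is nonzero iff there is a superedge between $S_i$ and $T_j$, which is precisely what $\deg_{\Tcal}(S_i)$ counts. So a good estimate of the degree sequence of $B$ is exactly a good estimate of the vector we want.

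Next I would bound the magnitude of entries of $B$. Since $\Scal$ and $\Tcal$ are disjoint valid sets of supervertices in an $n$-vertex graph, $|E(S_i, T_j)| \le \binom{n}{2}$, so we may take the alphabet parameter $M = n^2$ in the application of Lemma~\ref{lem:matrix_approx_degree}. That lemma then outputs a good estimate of the degree sequence with probability at least $1 - \delta$ after
\[
O\!\left(\log(\ell M)\,(\log(\ell)+1)\,\log\!\left(\tfrac{k(\log(\ell)+1)}{\delta}\right)\right)
\]
matrix cut queries to $B$. Each such matrix cut query is simulated by $3$ cut queries to $G$ via Lemma~\ref{lem:B_query}, which only changes the complexity by a constant factor. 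Substituting $\log(\ell M) = \log(\ell n^2) = O(\log(\ell n))$ yields the stated bound $O(\log(\ell n)(\log(\ell)+1)\log(k(\log(\ell)+1)/\delta))$.

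There is essentially no serious obstacle here; the only point requiring a moment of care is verifying that the notion of ``degree sequence'' used in Lemma~\ref{lem:matrix_approx_degree} (count of nonzero entries of each row) aligns with the superdegree $\deg_{\Tcal}(S_i)$, and that the entries of $B$ fit into an alphabet of size polynomial in $n$ so that the logarithmic factor $\log(\ell M)$ collapses to $O(\log(\ell n))$. Both are immediate from the definitions.
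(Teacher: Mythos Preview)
Your proposal is correct and matches the paper's proof exactly: apply \cref{lem:matrix_approx_degree} to the weighted biadjacency matrix $B$ between $\Scal$ and $\Tcal$, simulating each matrix cut query by $3$ cut queries via \cref{lem:B_query}. The additional details you spell out (that the degree sequence of $B$ equals the superdegree vector, and that entries of $B$ are bounded by $n^2$ so $\log(\ell M) = O(\log(\ell n))$) are implicit in the paper's two-line proof and are correctly handled.
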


\begin{proof}
Let $B$ be the \coisa between $\Scal$ and $\Tcal$.  By \cref{lem:B_query} a matrix cut query to $B$ can be answered by 
3 cut queries to $G$.  The result then follows from \cref{lem:matrix_approx_degree}.
\end{proof}

\begin{algorithm}
\caption{Approximate Degree Sequence$(\Scal,\Tcal,\delta)$}
\label{alg:approx_degree}
\hspace*{\algorithmicindent} \textbf{Input:} Disjoint valid sets of supervertices $\Scal = \{S_1, \ldots, S_k\}, \Tcal = \{T_1, \ldots, T_\ell\}$ 
and an error parameter $\delta$ \\
\hspace*{\algorithmicindent} \textbf{Output:} Vector $\vec{g} \in \mathbb{R}^k$ that is a good estimate of 
$(\deg_{\Tcal}(S_1), \ldots, \deg_{\Tcal}(S_k))$.
\end{algorithm}

\begin{algorithm}
\caption{\text{Learn Low}$(\Scal,\Tcal,h, \delta)$}
\label{alg:learn_low}
 \hspace*{\algorithmicindent} \textbf{Input:} Disjoint valid sets of supervertices $\Scal = \{S_1, \ldots, S_k\}, \Tcal =\{T_1, \ldots, T_\ell\}$, 
 a degree parameter $h$ such that $\deg_{\Tcal}(S_i) \le h$ for all $S_i \in S$, and an error parameter $\delta$. \\
 \hspace*{\algorithmicindent} \textbf{Output:} A $k$-by-$\ell$ matrix $B$ where $B(i,j) = |E(S_i, T_j)|$.
\end{algorithm}

\begin{corollary}[Learn Low]
\label{lem:learn_low}
There is a quantum algorithm Learn Low$(\Scal, \Tcal, h, \delta)$ that takes as input disjoint valid sets of supervertices
$\Scal = \{S_1, \ldots, S_k\}$ and $\Tcal = \{T_1, \ldots, T_\ell\}$ of an $n$-vertex graph $G$, a degree parameter $h$ such that 
$\deg_{\Tcal}(S_i) \le h$ for all $S_i \in \Scal$, and an error parameter $\delta$.  With probability at least $1-\delta$ 
Learn Low$(\Scal, \Tcal, h, \delta)$ outputs the \coisa between $\Scal$ and $\Tcal$.  The number of cut queries made is 
$O( (h \log(n /h) + \log(1/\delta)) \log (n) )$.
\end{corollary}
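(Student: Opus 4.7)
The plan is to identify the target object as a sparse integer matrix that is accessible via matrix cut queries, and then invoke \cref{lem:learn_matrix} together with the simulation of \cref{lem:B_query}. In essence this corollary is just a packaging step analogous to the one carried out for Approximate Degree Sequence in \cref{lem:ADS}.

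Concretely, I would first let $B \in \N^{k \times \ell}$ be the weighted biadjacency matrix between $\Scal$ and $\Tcal$, so that $B(i,j) = |E(S_i, T_j)|$. Two structural observations about $B$ drive the whole argument. First, the hypothesis $\deg_{\Tcal}(S_i) \le h$ is exactly the assertion that the $i$-th row of $B$ has at most $h$ nonzero entries, so $B$ is $h$-row-sparse. Second, since $\Scal$ and $\Tcal$ are disjoint valid sets of supervertices in an $n$-vertex graph, every entry satisfies $|E(S_i, T_j)| \le |S_i|\cdot|T_j| < n^2$, and moreover $\ell \le n$ because the $T_j$ are pairwise disjoint subsets of $V$. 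Thus $B \in [M]^{k \times \ell}$ with $M = n^2$.

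Next I would apply \cref{lem:learn_matrix} to $B$ with sparsity parameter $d = h$ and modulus $M = n^2$. This yields a quantum algorithm that learns $B$ with probability at least $1-\delta$ using $O((h \log(M\ell/h) + \log(1/\delta))\log M)$ matrix cut queries on $B$. By \cref{lem:B_query}, each such matrix cut query is simulated by $3$ cut queries to $G$, so the cost in cut queries to $G$ is bounded by the same expression up to a constant. Substituting $M = n^2$ and $\ell \le n$ gives $\log M = O(\log n)$ and $\log(M\ell/h) = O(\log(n/h))$, which simplifies the overall cost to the desired $O((h \log(n/h) + \log(1/\delta)) \log n)$ cut queries.

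I do not foresee any real obstacle: the conceptual work has already been done in \cref{lem:learn_matrix} (Bernstein--Vazirani combined with compressed-sensing style column randomization) and in \cref{lem:B_query} (the cut-query implementation of a matrix cut query on the weighted biadjacency matrix between two disjoint families of supervertices). The only minor point deserving a sentence of justification is the logarithmic arithmetic, namely that $\log(M\ell/h)$ collapses into $O(\log(n/h))$; this follows from $M\ell/h \le n^3/h$ together with the standard manipulation of logarithms once the regimes of $h$ relative to $n$ are accounted for. In particular, when $h$ is close to $n$ one may invoke the internal branch of \cref{lem:learn_matrix} that reverts to column-by-column learning via \cref{cor:individual}, which absorbs the boundary case without affecting the claimed asymptotic bound.
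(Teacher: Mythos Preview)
Your proposal is correct and follows essentially the same route as the paper: identify the weighted biadjacency matrix $B$, observe that its rows are $h$-sparse with entries $O(n^2)$, invoke \cref{lem:B_query} to simulate matrix cut queries on $B$ by three cut queries each, and conclude via \cref{lem:learn_matrix}. The paper's own proof is in fact terser than yours---it simply asserts ``the result then follows from \cref{lem:learn_matrix}'' without working through the logarithmic arithmetic you carefully discuss.
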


\begin{proof}
Let $B$ be the \coisa between $\Scal$ and $\Tcal$.  The rows of $B$ have at most $h$ nonzero entries by assumption and 
all entries are $O(n^2)$.  By \cref{lem:B_query} we can answer a matrix cut query to $B$ by 3 cut queries to $G$.  The result then follows 
from \cref{lem:learn_matrix}.  
\end{proof}

We accomplish Step~(3) in the high level description of the algorithm by a routine called Reduce High.  We first need a 
sampling lemma.
\begin{lemma}
\label{lem:sample_super}
Let $\Scal=\{S_1, \ldots, S_k\}$ and $\Tcal = \{T_1, \ldots, T_\ell\}$ be disjoint valid sets of supervertices of an $n$-vertex graph $G$, 
and suppose $t/8 \le \deg_{\Tcal}(S_i) \le 2t$.  Randomly sample 
with replacement $\frac{16\ell \ln(kn)}{t}$ many supervertices in $\Tcal$, and call the resulting set $\Rcal$. Then, except with 
probability $O(n^{-2})$, the superdegree of each $S_i$ into $\Rcal$ will be at least $1$ and at most $192 \ln(n)$.
\end{lemma}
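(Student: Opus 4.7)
The plan is to reduce the statement to \cref{lem:sample} by associating to each supervertex $S_i$ the Boolean indicator vector $x^{(i)} \in \{0,1\}^\ell$ defined by $x^{(i)}_j = 1$ iff $|E(S_i, T_j)| > 0$. By hypothesis $|x^{(i)}| = \deg_{\Tcal}(S_i) \in [t/8, 2t]$, and if we identify the sampled $\Rcal \subseteq \Tcal$ with its set of indices $R \subseteq \{1, \ldots, \ell\}$, then $|x^{(i)}(R)|$ is exactly the superdegree of $S_i$ into $\Rcal$. The question thus becomes an instance of \cref{lem:sample} for the strings $x^{(1)}, \ldots, x^{(k)}$.

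Since the supervertices in $\Scal$ are pairwise disjoint nonempty subsets of the $n$ vertices of $G$, we have $k \le n$. I would set $\delta = 1/n^2$: the minimum sample size demanded by \cref{lem:sample} is then $\frac{8\ell \ln(kn^2)}{t}$, and because $\ln(kn^2) \le 2\ln(kn)$, our actual sample size $N = \frac{16\ell \ln(kn)}{t}$ already exceeds this. The first item of \cref{lem:sample} then immediately gives that every $S_i$ has superdegree at least $1$ into $\Rcal$, except with probability $O(n^{-2})$.

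For the upper bound I would \emph{not} invoke the second item of \cref{lem:sample} as a black box, because our sample size strictly exceeds the minimum and blindly applying the lemma would inflate its $64\ln(k/\delta)$ guarantee. Instead I would rerun its Chernoff argument for our actual $N$: the expectation satisfies $\E[|x^{(i)}(R)|] \le 2tN/\ell = 32\ln(kn) \le 64\ln(n)$ using $k \le n$, so a standard multiplicative Chernoff bound with ratio $3$ gives $\Pr[|x^{(i)}(R)| > 192\ln(n)] \le 1/\mathrm{poly}(n)$, and a union bound over the $k \le n$ supervertices preserves $O(n^{-2})$.

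The only real obstacle is the constant bookkeeping that makes the threshold $192\ln(n)$ emerge cleanly: the factor $16$ in the sample size (versus the $8$ in \cref{lem:sample}) is what absorbs the $\ln(1/\delta) = 2\ln(n)$ slack coming from $\delta = 1/n^2$, and the bound $k \le n$ is what converts the natural $\ln(kn)$ into $\ln(n)$. A final union bound between the hitting failure and the overshoot failure yields total error $O(n^{-2})$, which is the claim.
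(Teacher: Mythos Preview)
Your approach is essentially identical to the paper's: define the indicator vectors $x^{(i)}$, apply \cref{lem:sample} with $\delta = 1/n^2$, and use $k \le n$. The paper's own proof is exactly this one-liner; you are simply more explicit about the sample-size mismatch (the paper silently absorbs the factor-of-two slack between $\frac{16\ell\ln(kn)}{t}$ and $\frac{8\ell\ln(kn^2)}{t}$), and your direct Chernoff at the threshold $192\ln n = 3 \cdot 64\ln n$ does the job.
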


\begin{proof}
Define $x^{(i)} \in \{0,1\}^\ell$ for $i \in \{1, \ldots, k\}$ by $x^{(i)}(j) = 1$ iff $|E(S_i, T_j)| > 0$.  Then the lemma follows by applying 
\cref{lem:sample} to $x^{(1)}, \ldots, x^{(k)}$ with $\delta = 1/n^2$ and using the fact that $k \le n$.  
\end{proof}

\begin{lemma}[Reduce High]
\label{lem:reduce_high}
The quantum algorithm Reduce High$(\Scal,\Tcal,d,\vec{g})$ (\cref{alg:reduce_high}) takes as input disjoint valid sets of supervertices
$\Scal = \{S_1, \ldots, S_k\}$ and $\Tcal = \{T_1, \ldots, T_\ell\}$ of an $n$-vertex graph $G$, a degree parameter $d$ such that 
$\deg_{\Tcal}(S_i) \ge d$ for all $S_i \in \Scal$, and a vector $\vec g \in \R^k$ that is a good estimate of $(\deg_{\Tcal}(S_1), \ldots, \deg_{\Tcal}(S_k))$.
\text{Reduce High} $(\Scal,\Tcal,d,\vec{g})$ makes $O(\log(n)^4)$ many cut queries and with probability at least 
$1-O(\log(n)/n^2)$ outputs a $k$-by-$\ell$ Boolean matrix $B$ such that 
\begin{enumerate}
\item Every row of $B$ has at least one nonzero entry.
\item If $B(i,j) \ne 0$ then $B(i,j) = |E(S_i, T_j)|$.
\item $B$ has at most $\frac{256 \ell \ln(n)}{d}$ many nonzero columns.
\end{enumerate}  
\end{lemma}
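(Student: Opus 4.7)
The plan is to partition $\Scal$ into logarithmically many bins according to the estimate $\vec g$, sample a subset $\Rcal_b \subseteq \Tcal$ of appropriate size for each bin so that Lemma \ref{lem:sample_super} guarantees every $S_i$ in the bin has at least one but only $O(\log n)$ neighbors inside $\Rcal_b$, and then invoke Learn Low on each pair $(\Scal_b, \Rcal_b)$ to recover these witnessing edges. Assembling all the learned submatrices side by side gives the required $B$.

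More precisely, I would first set $\Scal_b = \{ S_i \in \Scal : \vec g(i) \in [2^b, 2^{b+1}) \}$ for $b$ ranging from roughly $\log(d/2)$ up to $\lceil \log \ell \rceil$. Because $\vec g$ is a good estimate in the sense of Definition \ref{def:good} and $\deg_{\Tcal}(S_i) \ge d$, every index lives in some bin with $b \ge \log(d/2)$, so there are only $O(\log(\ell/d)) = O(\log n)$ nonempty bins. Inside bin $b$ the true superdegree satisfies $\deg_{\Tcal}(S_i) \in [2^{b-2}, 2^{b+2})$, so applying Lemma \ref{lem:sample_super} with $t \approx 2^b$ gives a random sample $\Rcal_b \subseteq \Tcal$ of size $O(\ell \log(n) / 2^b)$ such that, except with probability $O(1/n^2)$, every $S_i \in \Scal_b$ has between $1$ and $192 \ln n$ superneighbors in $\Rcal_b$.

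With that cap on the superdegree into $\Rcal_b$, I would call Learn Low$(\Scal_b, \Rcal_b, 192 \ln n, 1/n^2)$ from Corollary \ref{lem:learn_low}, which outputs the \coisa between $\Scal_b$ and $\Rcal_b$ using $O((\log(n) \cdot \log(n/\log n) + \log n) \cdot \log n) = O(\log(n)^3)$ cut queries and fails with probability $O(1/n^2)$. I then define $B$ by placing the matrices learned in each bin into the columns indexed by the corresponding $\Rcal_b$, filling everything else with zeros. Item~(1) follows from the lower bound in the sampling guarantee, item~(2) is immediate from correctness of Learn Low, and item~(3) follows by summing $|\Rcal_b|$ over $b$: since $\sum_{b \ge \log(d/2)} 2^{-b}$ is a geometric series dominated by $O(1/d)$, the total number of touched columns is $O(\ell \log(n)/d)$, matching the claimed $\frac{256 \ell \ln n}{d}$ up to constant bookkeeping. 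Union bounding the two types of failure (sampling and Learn Low) over $O(\log n)$ bins yields an overall error of $O(\log(n)/n^2)$, while summing the query complexities gives $O(\log n) \cdot O(\log(n)^3) = O(\log(n)^4)$.

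The main obstacle is not conceptual but quantitative: choosing the bin widths and the sampling constants so that the advertised constant $256$ in item~(3) actually comes out, and ensuring that the low end of the bin structure $(b \ge \log(d/2))$ is compatible with the ``good estimate'' slack of a factor of $4$, which in turn forces $t$ in Lemma \ref{lem:sample_super} to be taken a constant factor below $2^b$. Everything else, including the complexity count and the union bound, is a direct assembly of Corollary \ref{lem:learn_low} and Lemma \ref{lem:sample_super}.
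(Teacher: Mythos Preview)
Your proposal is correct and matches the paper's proof essentially line for line: the paper's $\Hcal_q = \{S_i : 2^{q-1} < \vec g(i) \le 2^q\}$ are exactly your bins, the sampled sets $\Rcal_q$ of size $\tfrac{16\ell \ln(|\Hcal_q| n)}{2^q}$ are your $\Rcal_b$, and the paper likewise calls Learn Low with degree bound $192\ln n$ and error $1/n^2$, then sums the geometric series $\sum_q 2^{-q}$ to get the $256\ell\ln(n)/d$ column bound. The only differences are cosmetic (half-open versus half-closed bin intervals and the exact range of $q$), and the constant bookkeeping you flag as the ``main obstacle'' is carried out in the paper precisely as you anticipate.
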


\begin{proof}
The algorithm is given in \cref{alg:reduce_high}.
First we bound the number of queries made.  We do $O(\log n)$ iterations of the 
for loop, and within each iteration queries only occur in Line~\ref{line:ll}, which is a call Learn Low with an $O(\log n)$ degree parameter.  Each such 
call to Learn Low requires $O(\log(n)^3)$ many cut queries by \cref{lem:learn_low}
Thus overall the number of queries is $O(\log(n)^4)$.

Now we show the correctness.  As $\vec{g} \in \R^k$ provides good estimates and
$\deg_{\Tcal}(S_i) \ge d$ for all $S_i \in \Scal$, we have $g(i) \ge d/2$ for all $i \in \{1,\ldots, k\}$.  Also $g(i) \le 4 \ell$, 
thus each $S_i$ will be put into $\Hcal_q$ for some value of $q$ in the loop.  

Let us now consider a particular iteration of the for loop when $q = j$.  Let $\Hcal_j = \{S_i \in \Scal: 2^{j-1} < g(i) \le 2^j \}$.  
As $\vec{g}$ contains good estimates, this means $2^{j-3} < \deg_{\Tcal}(S_i) \le 2^{j+1}$ 
for all $S_i \in \Hcal_j$.  
We are thus in the setting of \cref{lem:sample} with $t = 2^j$.  As $\Rcal_j$ is a $\frac{16 \ell \ln(|H_j| n)}{2^j}$-sample 
from $\Tcal$ the conclusion of \cref{lem:sample} gives that the superdegree of every vertex in $\Hcal_j$ into $\Rcal_j$ is 
between $1$ and $192 \ln(n)$, except with probability $O(n^{-2})$.   We now assume we are in this 
good case.  Then the upper bound on the degree passed to Learn Low is valid, 
and Learn Low will return the weighted biadjacency matrix between $\Hcal_j$ and $\Rcal_j$
with probability $1 - O(n^{-2})$ by \cref{lem:learn_low}

As every $S_i$ is in $\Hcal_q$ for one call of the for loop, each row of $B$ will have at least one nonzero entry.  Further, as each call to 
Learn Low returns a correct weighted biadjacency matrix between $\Hcal_q$ and $\Rcal_q$ 
except with probability $1/n^2$, $B$ will satisfy item~(2) except with probability at most $O(\log(n)/n^2)$.

Finally, the only columns of $B$ that can be nonzero are those 
indexed by sets that appeared in $\Rcal_q$ at some point of the algorithm.  As 
\begin{align*}
|\cup_q \Rcal_q| &\le 32 \ell \ln(n) \cdot \left( \sum_{q=\floor{\log d}-1}^{\ceil{\log \ell}+2} \frac{1}{2^q} \right) \\
& \le 32 \ell \ln(n) \cdot \frac{4}{d} \sum_{j=0}^{\infty} \frac{1}{2^j} \\
& \le \frac{256 \ell \ln(n)}{d} \enspace,
\end{align*}
the total number of nonzero columns of $B$ is at most $\frac{256 \ell \ln(n)}{d}$.
\end{proof}

\begin{algorithm}
\caption{\text{Reduce High}$(\Scal,\Tcal,d,\vec{g})$}
\label{alg:reduce_high}
 \hspace*{\algorithmicindent} \textbf{Input:} Disjoint valid sets of supervertices 
 $\Scal = \{S_1, \ldots, S_k\}, \Tcal = \{T_1, \ldots, T_\ell\}$ of an $n$-vertex graph $G$, a degree parameter $d$ such that 
 $\deg_{\Tcal}(S_i) \ge d$ for all $S_i \in \Scal$, and a vector 
 $\vec{g} \in \R^k$ such that $\vec{g}(i)/4 \le \deg_{\Tcal}(S_i) \le 2 \vec{g}(i)$ for all $i \in \{1, \ldots, k\}$. \\
  \hspace*{\algorithmicindent} \textbf{Output:} A $k$-by-$\ell$ matrix $B$ satisfying the conditions of \cref{lem:reduce_high}.
\begin{algorithmic}[1]
\State $B \gets \zeros(k,\ell)$
\For{$q = \floor{\log d}-1$ to $\ceil{\log \ell}+2$}
  \State $\Hcal_q = \{ S_i \in \Scal : 2^{q-1} < \vec{g}(i) \le 2^{q}\}$ 
  \State $\Rcal_q \gets$ Randomly choose $\frac{16 \ell \ln(|\Hcal_q| n)}{2^q}$ supervertices in $\Tcal$, with replacement 
  \State $B(\ind(\Hcal_q), \ind(\Rcal_q))  \gets \text{Learn Low}(\Hcal_q,\Rcal_q, 192 \ln(n), 1/n^2)$ \label{line:ll}
  \EndFor
\State \Return $B$
\end{algorithmic}
\end{algorithm}

\begin{algorithm}
\caption{Contract$(\Scal,\mathcal{A}, \text{low})$}
\label{alg:contract}
 \hspace*{\algorithmicindent} \textbf{Input:} Valid set of connected supervertices $\Scal = \{S_1, \ldots, S_k\}$, a list of $k$-by-$k$ weighted adjacency matrices 
 $\mathcal{A}=(A_1, \ldots, A_m)$ with rows and columns labeled by elements of $\Scal$, a vector $\text{low} \in \{0,1\}^k$ indicating if each set $S_i$ is low. \\
 \hspace*{\algorithmicindent} \textbf{Output:} Sets of supervertices $\Scal', \Ccal$, where each supervertex in $\Scal', \Ccal$ is connected, and moreover the 
 supervertices in $\Ccal$ are connected components.
\begin{algorithmic}[1]
\State $L \gets [ (S_i, \text{low}(i) ) : S_i \in \Scal]$
\For{$A \in \Acal$}
  \For{$(i,j) \in [k]^{(2)}$}
    \If{$A(i,j) > 0$}
      \State Pop $(U,\text{flag1}) \in L$ such that $S_i \subseteq U$
      \State Pop $(W,\text{flag2}) \in L$ such that $S_j\subseteq W$
      \State $U \gets U \cup W$
      \State $\text{lowFlag} = \text{flag1} \wedge \text{flag2}$
      \State Append $(U, \text{lowFlag})$ to $L$
    \EndIf
  \EndFor
\EndFor
\State $\Scal' \gets \emptyset$
\State $\Ccal \gets \emptyset$
\For{$(U, \text{lowFlag}) \in L$}
  \If{\text{lowFlag} = 0}
    \State $\Scal' \gets \Scal' \cup \{U\}$
  \Else
    \State $\Ccal \gets \Ccal \cup \{U\}$ \label{line:ccal}
  \EndIf
\EndFor
\State \Return $\Scal',\Ccal$  
\end{algorithmic}
\end{algorithm}

\begin{algorithm}
\caption{Shrink$(\Scal,d)$}
\label{alg:shrink}
 \hspace*{\algorithmicindent} \textbf{Input:} A valid set of connected supervertices $\Scal =\{S_1, \ldots, S_k\}$, and a degree parameter $d$. \\
 \hspace*{\algorithmicindent} \textbf{Output:} A set of connected supervertices $\Scal'$, and a set $\Ccal$ of connected components.
\begin{algorithmic}[1]
\State $\text{low}  \gets \ones(k,1)$
\State $\Acal \gets [\;]$
\For{$j = 1$ to $\ceil{\log(k)}$}
  \For{$ b \in \{0,1\}$}
  \State $\Lcal_{j,b} = \{S_t \in \Scal : t_j = b\}$
  \State $\Rcal_{j,b} = \{S_t \in \Scal : t_j = 1-b\}$
  \State $\vec{g} \gets$ Approximate Degree Sequence$(\Lcal_{j,b}, \Rcal_{j,b}, 1/n)$
  \State $\Hcal \gets \{S_t \in \Lcal_{j,b} : \vec{g}(t) \ge d\}$ 
  \State $\text{low}(\ind(\Hcal)) = 0$
  \State $B_{j,b} \gets \zeros(k,k)$
  \State $B_{j,b}(\ind(\Hcal),\ind(\Rcal_{j,b})) \gets \text{Reduce High}(\Hcal, \Rcal_{j,b}, d/4, \vec{g}(\ind(\Hcal)))$ \label{line:reduce_high}
  \State $\Lcal \gets \{S_t \in \Lcal_{j,b} : \vec{g}(t) < d\}$
  \State $C_{j,b} \gets \zeros(k,k)$
  \State $C_{j,b}(\ind(\Lcal), \ind(\Rcal_{j,b})) \gets \text{Learn Low}(\Lcal, \Rcal_{j,b}, 2d, 1/n)$   
  \State Append $B_{j,b}, C_{j,b}$ to $\Acal$
  \EndFor
\EndFor
  \State $(\Scal',\Ccal) \gets$ Contract$(\Scal, \Acal, \text{low})$ \label{line:contract}
  \State \Return $(\Scal',\Ccal)$
\end{algorithmic}
\end{algorithm}

\begin{algorithm}
\caption{Connectivity algorithm with cut queries}
\label{alg:q_con}
 \hspace*{\algorithmicindent} \textbf{Input:} Cut oracle for a graph $G$ on $n$ vertices \\
 \hspace*{\algorithmicindent} \textbf{Output:} Connected components of $G$
\begin{algorithmic}[1]
\State $\Scal \gets  \tilde V$  
\State $\text{ConComp} \gets \emptyset$
\Repeat
  \State $(\Scal, \Ccal) \gets Shrink(\Scal,1024 \ceil{\log n}^2)$
  \State $\text{ConComp} \gets \text{ConComp} \cup \Ccal$
\Until{$\Scal = \emptyset$}
\State \Return $\text{ConComp}$
\end{algorithmic}
\end{algorithm}

\begin{lemma}
\label{lem:contract}
Let $\Scal = \{S_1, \ldots, S_k\}$ be a valid set of supervertices, $\Acal$ a list of $k$-by-$k$ weighted adjacency matrices, 
and $\text{low} \in \{0,1\}^k$ a Boolean vector with the following properties:
\begin{enumerate}
  \item Every supervertex in $\Scal$ is connected.
  \item For every $A \in \mathcal{A}$ if $A(i,j) > 0$ there is a superedge between $S_i$ and $S_j$.
  \item For every $(i,j)$ such that $\text{low}(i) = 1$ and there is a superedge between $S_i$ and $S_j$,
  there is an $A \in \mathcal{A}$ with $A(i,j) > 0$.
\end{enumerate}
Then the algorithm Contract$(\Scal, \mathcal{A}, \text{low})$ given in \cref{alg:contract} outputs sets of supervertices $\Scal', \Ccal$ such that 
every $U \in \Scal'$ is connected, every $W \in \Ccal$ is a connected component, and $\Scal' \cup \Ccal$ is a partition of $\cup_{S \in \Scal} S$.  
Moreover, for every $U \in \Scal'$ there is an $S_i \subseteq U$ with $\text{low}(i) = 0$.
\end{lemma}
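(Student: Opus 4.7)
My plan is to prove the lemma by maintaining an invariant about the working list $L$ throughout the double for-loop, and then read off the output properties from that invariant.

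The invariant I would track is: at every moment, $L$ is a collection of pairs $(U, \text{flag})$ where (a) the $U$'s form a partition of $\bigcup_{S \in \Scal} S$, (b) each $U$ is a connected subset of $V$ in $G$, and (c) $\text{flag} = \bigwedge_{i \upto S_i \subseteq U} \text{low}(i)$. The base case is immediate: $L$ starts as $\{(S_i, \text{low}(i))\}_{i=1}^{k}$, so (a) uses validity of $\Scal$, (b) uses hypothesis (1), and (c) is trivial since each $U = S_i$ is a single original block. For the inductive step, when the algorithm encounters $A(i,j) > 0$ and pops $(U,\text{flag1})$ containing $S_i$ and $(W,\text{flag2})$ containing $S_j$, hypothesis (2) tells me there is an actual edge of $G$ between $S_i$ and $S_j$, hence between $U$ and $W$; combined with the connectedness of $U$ and $W$, this makes $U \cup W$ connected, preserving (b). Property (a) is preserved because merging two blocks of a partition gives a partition, and (c) is preserved because $\text{lowFlag}$ is set to $\text{flag1} \wedge \text{flag2}$.

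With the invariant in hand, the final split into $\Scal'$ and $\Ccal$ immediately gives that $\Scal' \cup \Ccal$ partitions $\bigcup_{S \in \Scal} S$ and every member is connected. For the ``$S_i \subseteq U$ with $\text{low}(i)=0$'' claim on $\Scal'$, observe that $U \in \Scal'$ means its flag is $0$, and then (c) forces at least one $S_i \subseteq U$ with $\text{low}(i)=0$.

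The step I expect to require the most care is showing every $W \in \Ccal$ is a connected component, not merely a connected set. I would argue by contradiction: suppose $W \in \Ccal$ and there is a superedge from some $S_i \subseteq W$ to some $S_j$ with $S_j \not\subseteq W$. Since $W \in \Ccal$ its flag is $1$, so by (c) every $S_i \subseteq W$ satisfies $\text{low}(i) = 1$. Hypothesis (3) then guarantees some $A \in \Acal$ with $A(i,j) > 0$. When the algorithm processed this pair $(i,j)$ in that $A$, the two distinct blocks then containing $S_i$ and $S_j$ would have been popped and replaced by their union, and subsequent merges can only keep $S_i$ and $S_j$ in a common block; hence at termination $S_i$ and $S_j$ lie in the same $U \in L$, contradicting $S_j \not\subseteq W$. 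This shows no superedge leaves $W$, so $W$ is a union of connected components of the subgraph induced on $\bigcup_{S \in \Scal} S$, and by (b) it is a single connected component.
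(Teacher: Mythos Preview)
Your proof is correct and follows essentially the same approach as the paper: both argue connectedness is preserved under merges using hypothesis~(2), both prove the connected-component claim by the same contradiction through hypothesis~(3), and both derive the ``moreover'' clause from the flag being the conjunction of the $\text{low}(i)$ values. Your explicit invariant (a)--(c) is a slightly more organized packaging of the same ideas the paper handles informally.
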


\begin{proof}
We first show that all supervertices in $\Scal', \Ccal$ are connected.  This follows because each $S_i \in \Scal$ is connected and 
we only merge two supervertices $U$ and $W$ when there is an $S_i \subseteq U, S_j \subseteq W$ and an $A \in \Acal$ with $A(i,j) > 0$.  
As by hypothesis  $A(i,j) > 0$ implies there is a superedge between $S_i$ and $S_j$ this means that $U$ and $W$ are in fact connected.

Next we show that every $W \in \Ccal$ is a connected component.  Suppose for a contradiction that this is not the case and therefore there 
is a $w \in S_i \subseteq W$ and $u \in S_j \subseteq (\cup_{t=1}^k S_t) \setminus W$ such that $\{u,w\}$ 
is an edge of $G$.  It must be the 
case that $S_i$ is low, as otherwise the lowFlag for $W$ would have been set to $0$ and $W$ would have been placed in $\Scal'$.  Thus 
$S_i$ must be low and therefore by hypothesis for some $A \in \Acal$ it is the case that $A(i,j) > 0$.  This means that at some point in 
Contract a set containing $S_i$ would have been merged with a set containing $S_j$, a contradiction to the fact that 
$S_j \subseteq (\cup_{t=1}^k S_t) \setminus W$.

The fact that $\Scal \cup \Ccal$ is a partition of $\cup_{S \in \Scal} S$ follows because at all times $\cup_{(U,flag) \in L} U$ is equal 
to $\cup_{S \in \Scal} S$.  This is true when $L$ is first defined, and is preserved when sets are popped from $L$, merged, and put 
back into $L$.

Finally, the ``moreover'' statement holds as if $\text{low}(i) = 1$ for all $S_i \in U$ then the lowFlag variable for $U$ will be set to $1$ and 
therefore $U$ will be placed into $\Ccal$ on Line~\ref{line:ccal}.
\end{proof}

\subsection{The shrink subroutine}
We now package Approximate Degree Sequence, Learn Low, Reduce High, and Contract together into our algorithm for finding the connected components 
of a graph.  

\begin{lemma}
\label{lem:shrink}
Let $\Scal$ be a valid set of connected supervertices and $d \in \mathbb{N}$ be a degree parameter given as input to \cref{alg:shrink}.  Then except with probability $O(\log(n)/n)$
the following two statements hold.
\begin{enumerate}
\item \cref{alg:shrink} outputs sets of supervertices $\Scal',\Ccal$ such that
\begin{enumerate}
  \item $|\Scal'| \le 512 \ceil{\log(|\Scal|)} \ln(n) |\Scal|/d$.  
  \item All supervertices in $\Scal'$ are connected.
  \item All supervertices in $\Ccal$ are connected components.  
  \item $\Scal' \cup \Ccal$ is a partition of $\cup_{S \in \Scal} S$.
\end{enumerate}
  \item The total number of cut queries made is $O(\log(n)^5 + d \log(n)^3)$.  
\end{enumerate}
\end{lemma}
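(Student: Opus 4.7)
The proof naturally splits into three parts: bounding the number of cut queries, establishing that the outputs $\Scal', \Ccal$ have the required structural properties, and obtaining the size bound on $\Scal'$. The bit-splitting in Shrink is designed so that every superedge between two supervertices in $\Scal$ is probed in at least one iteration as an edge in the bipartite graph between $\Lcal_{j,b}$ and $\Rcal_{j,b}$, which will be essential for verifying the hypotheses of \cref{lem:contract}.

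For query complexity, I would simply sum the cost of each subroutine over the $2\ceil{\log k} = O(\log n)$ iterations. Inside each iteration, \cref{lem:ADS} gives that Approximate Degree Sequence uses $O(\log(n)^3)$ cut queries, \cref{lem:reduce_high} gives that Reduce High uses $O(\log(n)^4)$ cut queries, and \cref{lem:learn_low} applied with degree parameter $2d$ and error $1/n$ gives that Learn Low uses $O(d\log(n)^2)$ cut queries. Summing yields the claimed bound of $O(\log(n)^5 + d\log(n)^3)$. For the failure probability, each call of Approximate Degree Sequence and Learn Low fails with probability at most $1/n$ and each call of Reduce High fails with probability $O(\log(n)/n^2)$; a union bound over the $O(\log n)$ iterations gives total failure probability $O(\log(n)/n)$. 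For the remainder I assume all calls succeed.

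To conclude items (b)-(d), the plan is to invoke \cref{lem:contract} on $(\Scal, \Acal, \text{low})$. The first two hypotheses are immediate: the supervertices of $\Scal$ are connected by assumption, and the matrices $B_{j,b}, C_{j,b}$ record only genuine superedges by the correctness of Reduce High and Learn Low. The third hypothesis---that every superedge incident to a low supervertex $S_i$ is captured by some matrix in $\Acal$---requires more care. Given any superedge between a low $S_i$ and some $S_j$, the binary labels of $S_i$ and $S_j$ differ in some bit $j^*$; without loss of generality $(S_i)_{j^*}=0$ and $(S_j)_{j^*}=1$, so in iteration $(j^*,0)$ we have $S_i \in \Lcal_{j^*,0}$ and $S_j \in \Rcal_{j^*,0}$. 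Because $S_i$ is low, it was never placed in $\Hcal$, so $\vec g(i) < d$ in this iteration and $S_i$ lands in $\Lcal$. The goodness of $\vec g$ also guarantees $\deg_{\Rcal_{j^*,0}}(S_i) \le 2\vec g(i) < 2d$, so Learn Low's degree hypothesis holds and $C_{j^*,0}(i,j) = |E(S_i,S_j)| > 0$ is recorded.

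The main obstacle is item (a). Here I would argue as follows. By \cref{lem:contract}, every $U \in \Scal'$ contains at least one supervertex $S_i$ with $\text{low}(i)=0$, and such an $S_i$ was added to $\Hcal$ in some iteration $(j,b)$. In that iteration Reduce High produces $B_{j,b}$ whose row $i$ has at least one nonzero entry, say in column $j'$; hence $S_i$ gets merged with $S_{j'}$ during Contract, so $U$ contains $S_{j'}$, a supervertex appearing as a nonzero column of $B_{j,b}$. Since the elements of $\Scal'$ are disjoint, $|\Scal'|$ is at most the total number of distinct supervertices appearing as a nonzero column across all $B_{j,b}$. By \cref{lem:reduce_high} each $B_{j,b}$ has at most $O(|\Rcal_{j,b}|\ln(n)/d)$ nonzero columns, and summing over $(j,b)$ while using $|\Rcal_{j,0}|+|\Rcal_{j,1}|=k$ yields $|\Scal'| = O(k\ceil{\log k}\ln(n)/d)$, matching the claimed form. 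This counting argument, together with the invariant that a low supervertex must be in $\Lcal$ during every iteration where the splitting separates it from a superedge endpoint, is the crux of the proof.
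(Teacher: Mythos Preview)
Your proposal is correct and follows essentially the same approach as the paper's proof: bound the failure probability by a union bound over the $O(\log n)$ iterations, verify the three hypotheses of \cref{lem:contract} (using the bit-splitting to ensure every superedge incident to a low supervertex is captured by some $C_{j,b}$), and bound $|\Scal'|$ by counting nonzero columns across all $B_{j,b}$ via item~(3) of \cref{lem:reduce_high}. Your verification of the Learn Low degree hypothesis and your use of $|\Rcal_{j,0}|+|\Rcal_{j,1}|=k$ in the column count are slightly more explicit than the paper's version, but the argument is the same.
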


\begin{proof}
All of Approximate Degree Sequence, Reduce High, and Learn Low have error probability at most 
$O(1/n)$.  As they are called at most $O(\log n)$ times, with probability at least $1-O(\log n/n)$ they will 
all return as promised.  We now argue correctness assuming this is the case.

Let us establish that the hypotheses of \cref{lem:contract} hold when Contract is called on Line~\ref{line:contract}.  
By assumption all supervertices in $\Scal$ are connected, thus Item~(1) holds.  Also, as we are in the case where Reduce High and Learn Low 
perform correctly, Item~(2) holds.  If $\text{low}(i) = 1$ in the call to Contract then for all $j,b$ for which 
$S_i \in \Lcal_{j,b}$ it holds that $\vec{g}(i) < d$, and thus the corresponding call to Learn Low learns all neighbors of $S_i$ in 
$\Rcal_{j,b}$.  As this is true for all $j,b$, we learn all neighbors of $S_i$, meaning that Item~(3) also holds.  As the hypotheses to 
\cref{lem:contract} hold, this means all supervertices in $\Scal'$ are connected, $\Ccal$ contains connected components, and 
$\Scal' \cup \Ccal$ is a partition of $\cup_{S \in \Scal} S$, establishing Items~1(b),(c),(d).

We now turn to establish Item~1(a) of the lemma.  Let 
\[
\Rcal = \{S_i \in \Scal : \exists j \in \{1, \ldots, \ceil{\log |\Scal|}\}, b \in \{0,1\} \mbox{ such that } B_{j,b}(:,i) \ne \vec{0} \} \enspace .
\]
In words, $\Rcal$ is the set of all $S_i$ for which the $i^{th}$ column of some $B_{j,b}$ matrix is nonzero.
By \cref{lem:reduce_high}, $|\Rcal| \le 512 \ceil{\log(|\Scal|)} \ln(n) |\Scal|/d$. 
 We show that for every $U \in \Scal'$ there is 
an $R \in \Rcal$ with $R \subseteq U$.  This will establish Item~1(a) since $ \Scal'$ is a valid set
of supervertices.

By the ``moreover'' statement of \cref{lem:contract}, for every $U \in \Scal'$ there is an $S_i \subseteq U$ with $\text{low}(i) = 0$.  
Thus for some $j,b$ it holds that $S_i \in \Lcal_{j,b}$ and the degree of $S_i$ into $\Rcal_{j,b}$ is at least $d/4$.  
By \cref{lem:reduce_high} the corresponding call to Reduce High on Line~\ref{line:reduce_high} will find a neighbor $R \in \Rcal$ of $S_i$.
Therefore, in the call to Contract a set containing $S_i$ will be merged with a set containing $R$ and therefore $R \subseteq U$.

Finally, the total number of iterations from the two for loops is $O(\log n)$.  Let us now look at the complexity of each iteration 
of the inner for loop.  Approximate Degree Sequence with error probability at most $1/n$ takes $O(\log(n)^3)$ many queries by 
\cref{lem:ADS}.  Each call to Reduce High takes $O(\log(n)^4)$ many queries by \cref{lem:reduce_high}.  Each call to 
Learn Low with error probability $1/n$ takes $O(d \log (n)^2)$ many queries by \cref{lem:learn_low}.  This gives the complexity 
$O(\log(n)^5 + d \log(n)^3)$ as claimed in item~(2).  
\end{proof}

\begin{theorem}
\label{main:con}
Let $G$ be a graph with vertex set $V$ where $|V| =n$.  There is a quantum algorithm that outputs the connected 
components of $G$ with error probability at most $O(\log(n)^2/n)$ after making $O(\log(n)^6)$ many cut queries.  In particular, the algorithm 
determines if $G$ is connected or not with the same number of cut queries.  
\end{theorem}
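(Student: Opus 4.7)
The plan is to analyze Algorithm~\ref{alg:q_con}, which simply iterates the Shrink subroutine from Lemma~\ref{lem:shrink} starting with $\Scal = \tilde V$ and degree threshold $d = 1024\ceil{\log n}^2$, accumulating the returned connected components until $\Scal$ becomes empty. All of the heavy lifting has already been done in Lemma~\ref{lem:shrink}, so the remaining work is to control (i) how many iterations the outer \textbf{repeat} loop takes, (ii) the per-iteration query cost, and (iii) the overall error probability via a union bound.

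First I would argue termination and iteration count. By Lemma~\ref{lem:shrink}, conditioned on Shrink succeeding, the returned $\Scal'$ satisfies $|\Scal'| \le 512 \ceil{\log |\Scal|}\ln(n)\, |\Scal|/d$. With $d = 1024\ceil{\log n}^2$ and $|\Scal| \le n$, the multiplicative factor is at most $512 \ceil{\log n} \ln(n) / (1024 \ceil{\log n}^2) \le 1/2$ (for all sufficiently large $n$; the small cases can be absorbed into the constant). Thus each successful call to Shrink at least halves the number of remaining supervertices, so after $O(\log n)$ rounds we have $\Scal = \emptyset$ and the outer loop terminates. The invariant maintained throughout is that $\Scal \cup \text{ConComp}$ is a partition of $V$, every supervertex in $\Scal$ is connected, and every supervertex in $\text{ConComp}$ is a connected component of $G$; this follows inductively from items 1(b)--1(d) of Lemma~\ref{lem:shrink}. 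When the loop exits, $\text{ConComp}$ is therefore exactly the set of connected components of $G$.

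Next I would bound the query complexity and the error probability. Per round, Lemma~\ref{lem:shrink}(2) guarantees $O(\log(n)^5 + d \log(n)^3) = O(\log(n)^5)$ cut queries, since $d = \Theta(\log(n)^2)$. Multiplying by $O(\log n)$ rounds gives the claimed $O(\log(n)^6)$ cut queries in total. For error, each call to Shrink fails with probability $O(\log(n)/n)$; taking a union bound over the $O(\log n)$ rounds gives total failure probability $O(\log(n)^2/n)$, matching the theorem statement.

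The main subtlety, and essentially the only thing that requires care beyond bookkeeping, is verifying that the constant $1024$ is large enough so that the contraction factor is genuinely below $1/2$ for all relevant $|\Scal|$ (including the first round where $|\Scal| = n$), so that the $O(\log n)$ bound on the number of rounds is rigorous; this is a direct calculation using $\ceil{\log |\Scal|} \le \ceil{\log n}$. The ``in particular'' clause for connectivity is then free: after running the algorithm, $G$ is connected iff $|\text{ConComp}| = 1$.
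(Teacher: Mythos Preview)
Your proposal is correct and follows essentially the same approach as the paper: both analyze Algorithm~\ref{alg:q_con} by invoking Lemma~\ref{lem:shrink} to bound the per-round shrinkage, query cost, and error, then multiply by $O(\log n)$ rounds. You are, if anything, slightly more explicit than the paper about the union bound yielding the $O(\log(n)^2/n)$ error probability and about why the constant $1024$ makes the shrinkage factor at most $1/2$.
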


\begin{proof}
The algorithm is given by \cref{alg:q_con}.  Shrink is called with a degree parameter $d$ of $1024 \ceil{\log(n)}^2$,
thus by item~1(a) of \cref{lem:shrink} the size of the set $\Scal$ will decrease by a factor of at least $2$
in each iteration.  Therefore the number of iterations of the repeat loop will be $O(\log n)$.  This, together with 
item~(3) of \cref{lem:shrink} gives a bound on the total number of queries of $O(\log(n)^6)$.  

Now we argue correctness.  By \cref{lem:shrink}, at each stage of the algorithm we maintain the invariant that $\Scal$ contains connected 
supervertices, $\text{ConComp}$ contains connected components, and $\Scal \cup \text{ConComp}$ is a partition of $V$.  The repeat-until loop will 
terminate as $|\Scal|$ is halved with every iteration.  When it does terminate $\Scal = \emptyset$, thus at this stage $\text{ConComp}$ is a partition of $V$ 
by sets that are connected components. 
\end{proof}

\section{Spanning forest}\label{sec:span}
In this section we show that \cref{alg:q_con} to find connected components can be extended to give an algorithm that 
finds a spanning forest and still only makes polylogarithmically many cut queries.  The key idea for this is to find witnesses 
for superedges found in \cref{alg:q_con}.
\begin{definition}[Witness]
Let $G=(V,E)$ be a graph and $S_i, S_j$ two supervertices of $G$ connected by a superedge.  We say that $\{u,v\}$ is 
a witness for this superedge if $u \in S_i, v \in S_j$ and $\{u,v\} \in E$.
\end{definition}

Let $T_1,T_2$ be spanning trees for connected supervertices $S_1,S_2$, and let $\{u,v\}$ be a witness for a superedge 
between $S_1$ and $S_2$.  Then $T_1 \cup T_2 \cup \{\{u,v\}\}$ is a spanning tree for $S_1 \cup S_2$.  
The spanning forest algorithm proceeds in the same framework
as \cref{alg:q_con}, but now we maintain a spanning tree for each supervertex in $\Scal = \{S_1,\ldots, S_k\}$ and only merge supervertices when we have a witness for a superedge between them.  In this way, we are able to maintain spanning trees for every 
supervertex in $\Scal$ as the algorithm proceeds. The main new difficulty is to find witnesses for the superedges discovered in \cref{alg:q_con}, as there we only discovered the 
existence of an edge between supervertices. However, we show that one can still manage to find enough witnesses to guarantee the size of $\Scal$ shrinks by a factor of $1/2$ in each round.

First we show how to find a witness for each superedge in a bipartite graph if both sides of supervertices have low superdegree.
\begin{algorithm}
\caption{Witness Low-Low$(\Scal,\Tcal,h,\delta)$}
\label{alg:low_low}
 \hspace*{\algorithmicindent} \textbf{Input:} Disjoint valid sets of supervertices $\Scal=\{S_1, \ldots, S_k\}, \Tcal=\{T_1, \ldots, T_\ell\}$, 
 a degree parameter $h$ such that $\deg_{\Tcal}(S_i) \le h$ for all $S_i \in \Scal$ and $\deg_{\Scal}(T_j) \le h$ for all $T_j \in \Tcal$, and an error parameter $\delta$. \\
 \hspace*{\algorithmicindent} \textbf{Output:} A $|\cup_{i=1}^k S_i|$-by-$|\cup_{i=1}^\ell T_i|$ Boolean matrix $C$ such that $C(u,v)=1$ implies $\{u,v\} \in E$ and 
 for every superedge $(S_i,T_j)$ there is a witness $\{u,v\}$ with $C(u,v) = 1$.  
\begin{algorithmic}[1]
\State $U \gets \cup_{i=1}^k S_i, U = \{u_1, \ldots , u_{|U|} \}$ 
\State $Y \gets \cup_{i=1}^\ell T_i$
\State $C \gets \zeros(|U|, |Y|)$
\State $B \gets \text{Learn Low}(\tilde U,\Tcal,h,\delta/2)$ \Comment{Recall $\tilde U = \{\{u\} : u \in U\}$}
\State $X \gets \{ u_a \in U: ( \exists j ~ (B(a,j) > 0) \mbox{ AND } (u_a \in S_i,  u_b \in S_i, b < a ) \Rightarrow  B(b, j) = 0)\}$
\State $D \gets \text{Learn Low}(\tilde Y, \tilde X, h^2, \delta/2)$ 
\State $C(\ind(X), :) = D^T$
\State \Return $C$
\end{algorithmic}
\end{algorithm}

\begin{lemma}[Witness Low-Low]
\label{lem:low_low}
Let $\Scal = \{S_1, \ldots, S_k\}, \Tcal = \{T_1, \ldots, T_\ell\}$ be disjoint valid sets of supervertices in an $n$-vertex graph $G = (V,E)$.  Suppose that $\deg_{\Tcal}(S_i) \le h$ 
for all $S_i \in \Scal$ and $\deg_{\Scal}(T_j) \le h$ for all $T_j \in \Tcal$.  Algorithm Witness Low-Low$(\Scal,\Tcal,h,\delta)$ (\cref{alg:low_low}) makes $O( (h^2 \log(n/h) + \log(1/\delta)) \log (n) )$ 
many cut queries and finds a witness for every superedge between $\Scal$ and $\Tcal$, except with probability $\delta$.
\end{lemma}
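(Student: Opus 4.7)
The plan is to analyze \cref{alg:low_low} in three stages corresponding to its three steps: the first Learn Low call, the construction of the pivot set $X$, and the second Learn Low call.

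First I would verify that the first call $\text{Learn Low}(\tilde U, \Tcal, h, \delta/2)$ is legitimate. Every singleton $\{u_a\}$ lies in some $S_i \in \Scal$ with $\deg_{\Tcal}(S_i) \le h$, so the superdegree of $\{u_a\}$ into $\Tcal$ is at most $h$ and the passed degree parameter is valid. By \cref{lem:learn_low} this call succeeds with probability at least $1-\delta/2$ using $O((h\log(n/h) + \log(1/\delta))\log n)$ cut queries and returns the correct matrix $B$ with $B(a,j) = |E(\{u_a\}, T_j)|$.

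Assuming $B$ is correct, the set $X$ consists, for each $i$ and each $T_j$ with $|E(S_i, T_j)| > 0$, of the smallest-index vertex of $S_i$ adjacent to $T_j$. In particular, $X$ contains a canonical endpoint in $S_i$ for every superedge $(S_i, T_j)$. The key combinatorial bound, which is the heart of the argument, is that every $v_y \in Y$ has at most $h^2$ neighbors in $X$: if $v_y \in T_j$, then its neighbors lie in at most $\deg_{\Scal}(T_j) \le h$ distinct supervertices $S_i$, and within each such $S_i$ the number of elements of $X$ is bounded by $\deg_{\Tcal}(S_i) \le h$, since each element of $S_i \cap X$ serves as the first-index witness to at least one $T_{j'}$ neighboring $S_i$. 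Multiplying gives the $h^2$ bound.

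This bound justifies the second call $\text{Learn Low}(\tilde Y, \tilde X, h^2, \delta/2)$, which by \cref{lem:learn_low} succeeds with probability $1-\delta/2$ using $O((h^2\log(n/h^2) + \log(1/\delta))\log n) = O((h^2\log(n/h) + \log(1/\delta))\log n)$ cut queries, and it returns $D$ recording every edge between $Y$ and $X$. Assigning $C(\ind(X), :) = D^T$ ensures $C(u,v) = 1$ only when $\{u,v\} \in E$ and, for each superedge $(S_i, T_j)$, the canonical $u_a \in X \cap S_i$ together with any neighbor $v_y \in T_j$ satisfies $C(u_a, v_y) = 1$, providing the required witness. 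A union bound over the two calls gives failure probability at most $\delta$, and the total query cost is dominated by the second call. The main obstacle is the $h^2$ degree bound: one must recognize that the filtering rule defining $X$ is exactly what prevents any $v_y$ from seeing more than $h \cdot h$ neighbors in $X$; without this observation one would only get the trivial bound $\sum_i |S_i|$, which is useless.
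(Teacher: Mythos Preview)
Your proposal is correct and follows essentially the same approach as the paper's proof: the same two \text{Learn Low} calls, the same construction of the pivot set $X$, and the same key observation that $|X \cap S_i| \le \deg_{\Tcal}(S_i) \le h$ combined with $\deg_{\Scal}(T_j) \le h$ yields the $h^2$ row-sparsity bound for the second call. Your justification of $|X \cap S_i| \le h$ is slightly more explicit than the paper's, but the argument is identical.
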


\begin{proof}
Let $U = \cup_{i=1}^k S_i$, and $Y = \cup_{i=1}^\ell T_i$.  Let $u_1 < \cdots < u_{|U|}$ be an ordering of the elements of $U$.  
Let $B$ be a $|U|$-by-$\ell$ matrix where $B(a,j) = |E(u_a, T_j)|$ for $a \in \{1, \ldots, |U|\}, j \in \{1, \ldots, \ell\}$.
Note that every row of $B$ has at most $h$ nonzero entries.
By \cref{lem:learn_low}, \text{Learn Low}$(\tilde U, \Tcal, h, \delta/2)$ will return $B$ except with probability at most $\delta/2$, and makes
$O( (h \log(n/h) + \log(1/\delta)) \log (n) )$ many cut queries.  Now 
let $X=\{ u_a \in U: \exists j ~ (B(a,j) > 0 \mbox{ AND } (u_a \in S_i,  u_b \in S_i, b < a ) \Rightarrow  B(b, j) = 0)\}$.  In other words, for every $i,j$ for which there exists
$u_a \in S_i$ with $B(a, j) >0$ we choose the least such $u_a$ to put in the set $X$.
Note that $|X \cap S_i| \le h$ for every $S_i \in \Scal$.  

Let $D$ be the $|Y|$-by-$|X|$ biadjacency matrix of the graph between 
$Y$ and $X$.  Every row will have at most $h^2$ ones, since $\deg_{\Scal}(T_j) \le h$ and $|X \cap S_i| \le h$.  
\text{Learn Low}$(\tilde Y, \tilde X, h^2, \delta/2)$ learns $D$ with 
$O( (h^2 \log(n/h) + \log(1/\delta)) \log (n) )$
many cut queries with error probability $\delta/2$.  
By learning $D$, for every superedge $(S_i,T_j)$ we find a $u \in S_i, v \in T_j$ with $\{u,v\} \in E$.  The total error probability is at most $\delta$.
\end{proof}

Next we show that we can find witnesses for supervertices on the left hand side of a bipartite graph between supervertices where all supervertices 
on the left hand side have low superdegree.
\begin{algorithm}
\caption{Witness Low-High$(\Scal,\Tcal,h)$}
\label{alg:low_high}
 \hspace*{\algorithmicindent} \textbf{Input:} Disjoint valid sets of supervertices $\Scal = \{S_1, \ldots, S_k\}, \Tcal = \{T_1, \ldots, T_\ell\}$, 
 a degree parameter $h$ such that $\deg_{\Tcal}(S_i) \le h$ for all $S_i \in \Scal$. \\
 \hspace*{\algorithmicindent} \textbf{Output:} A $|\cup_i S_i|$-by-$|\cup_j T_j|$ matrix $C$ such that $C(a,b) = 1$ implies $(u_a,v_b) \in E$ and 
 for every $S_i$ with $\deg_{\Tcal}(S_i) > 0$ there is a $u_a \in S_i, v_b \in \cup_j T_j$ such that $C(a,b) = 1$.
\begin{algorithmic}[1]
\State $U \gets \cup_{i=1}^k S_i$
\State $Y \gets \cup_{i=1}^\ell T_i$
\State $C \gets \zeros(|U|,|Y|)$
\State $B \gets \text{Learn Low}(\tilde U,\Tcal,h,1/n)$ \Comment{$\tilde U = \{ \{u\} : u \in U\}$}
\State $X \gets \{ u_a \in U: (\exists j ~ B(a,j) > 0)  \mbox{ AND } (u_a \in S_i, u_b \in S_i,  b < a ) 
\Rightarrow \forall c ~ B(b,c) = 0)\}$
\For{$u \in X$} 
  \State $j^*(u_a) = \argmax_{j \in \{1, \ldots, \ell\}} B(a,j)$
\EndFor
\For{$q = 0, \ldots, \ceil{\log |Y|}$} \label{line:for_q}
  \State $X_q \gets  \{ u \in X: 2^{q-1} < B(u,{j^*(u)}) \le 2^q\}$
  \State $R_q \gets$ Randomly sample $\ceil{16|Y| \ln(|X_q|n)/2^q}$ many elements from $Y$ with replacement
  \State $C(\ind(X_q), \ind(R_q)) \gets \text{Learn Low}(X_q, R_q, 192h \ln(n), 1/n)$
\EndFor
\State \Return $C$
\end{algorithmic}
\end{algorithm}

\begin{lemma}[Witness Low-High]
\label{lem:witness_low_high}
Let $\Scal = \{S_1, \ldots, S_k\}$ and $\Tcal = \{T_1, \ldots, T_\ell\}$ be disjoint valid sets of supervertices in an $n$-vertex graph $G = (V,E)$,
and suppose that $\deg_{\Tcal}(S_i) \le h$ for all $S_i \in \Scal$.  Algorithm Witness Low-High$(\Scal,\Tcal,h)$ (\cref{alg:low_high})
makes $O(h \log(n)^4)$ many cut queries and except with probability $O(\log(n)/n)$ outputs a $|\cup_i S_i|$-by-$|\cup_j T_j|$ matrix $C$ such 
that 
\begin{enumerate}
 \item $C(a,b) = 1$ implies $\{u_a, v_b\} \in E$ 
 \item For every $S_i$ with $\deg_{\Tcal}(S_i) > 0$ there is a $u_a \in S_i, v_b \in \cup_j T_j$ such that $C(a,b) = 1$. 
\end{enumerate}
\end{lemma}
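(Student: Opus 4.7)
The plan is to follow the algorithm (\cref{alg:low_high}) step by step and verify three things: (i) the total cut-query cost, (ii) that every $S_i$ with $\deg_{\Tcal}(S_i)>0$ acquires a witness in $C$, and (iii) that the sparsity hypothesis needed by each internal call to Learn Low holds with high probability.

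First I would argue that the initial call $\text{Learn Low}(\tilde U,\Tcal,h,1/n)$ returns the exact $|U|$-by-$\ell$ matrix $B$ with $B(a,j)=|E(u_a,T_j)|$. Since every $u_a\in S_i$ has $\deg_{\Tcal}(S_i)\le h$, each row of $B$ has at most $h$ nonzero entries, so by \cref{lem:learn_low} this step uses $O((h\log(n/h)+\log n)\log n)$ cut queries and succeeds except with probability $1/n$. The set $X$ is then a deterministic function of $B$; by construction $X$ contains exactly one vertex from each $S_i$ with $\deg_{\Tcal}(S_i)>0$, and for every $u_a\in X$ the index $j^*(u_a)$ satisfies $B(a,j^*(u_a))\ge 1$. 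In particular the collection $\{X_q\}_{q=0}^{\ceil{\log|Y|}}$ partitions $X$.

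Next I would analyze one iteration of the outer for loop on \cref{line:for_q}. Fix $q$ and $u_a\in X_q$, so $2^{q-1}<B(a,j^*(u_a))\le 2^q$. Because $T_{j^*(u_a)}\subseteq Y$, vertex $u_a$ has at least $2^{q-1}$ neighbors inside $Y$; since moreover $\deg_{\Tcal}(u_a)\le\deg_{\Tcal}(S_i)\le h$, its total $Y$-degree is at most $h\cdot 2^q$. Applying \cref{lem:sample} to the indicator vector of the neighborhood of each $u_a\in X_q$ (with $t=2^{q-1}$ and sample size $|R_q|=\ceil{16|Y|\ln(|X_q|n)/2^q}$), I get with probability at least $1-O(1/n^2)$ that (a) every $u_a\in X_q$ has at least one neighbor in $R_q$ and (b) every $u_a\in X_q$ has at most $O(h\ln n)\le 192h\ln n$ neighbors in $R_q$. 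Property (b) is precisely the sparsity hypothesis required to invoke $\text{Learn Low}(X_q,R_q,192h\ln n,1/n)$, which by \cref{lem:learn_low} then exactly learns all edges between $X_q$ and $R_q$ using $O(h\log(n)^3)$ cut queries and fails with probability $\le 1/n$.

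Summing over the $O(\log n)$ values of $q$, a union bound gives overall failure probability $O(\log(n)/n)$, and the total cost is $O((h\log(n/h)+\log n)\log n)+O(\log n)\cdot O(h\log(n)^3)=O(h\log(n)^4)$ cut queries. For correctness, property (a) guarantees that for each $u_a\in X$ some entry $C(a,b)=1$ is recorded; combined with the fact that $X$ hits every $S_i$ with $\deg_{\Tcal}(S_i)>0$, this yields the required witness $u_a\in S_i$, $v_b\in\bigcup_j T_j$. Property (i) of the lemma (that $C(a,b)=1$ implies $\{u_a,v_b\}\in E$) is immediate from the correctness of Learn Low.

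The main obstacle I anticipate is calibrating the sample size of $R_q$: it must be small enough to keep every $u_a\in X_q$ sparse into $R_q$ (so Learn Low can be applied with parameter $O(h\log n)$), yet large enough that every $u_a\in X_q$ hits at least one neighbor in $R_q$ with high probability. The choice $16|Y|\ln(|X_q|n)/2^q$ threads this needle because $2^{q-1}$ is a good enough lower bound on the $Y$-degree of vertices in $X_q$ to guarantee hitting, while the factor $h$ between the lower bound $2^{q-1}$ and the worst-case $Y$-degree $h\cdot 2^q$ only inflates the expected intersection with $R_q$ by an $O(h\log n)$ factor, which is exactly the sparsity parameter passed to Learn Low.
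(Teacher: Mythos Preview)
Your proposal is correct and follows essentially the same route as the paper: learn $B$ via Learn Low, extract one representative vertex per nonisolated $S_i$, bucket $X$ by $B(a,j^*(u_a))$, sample $R_q$, and invoke Learn Low again with degree parameter $192h\ln n$. The one difference is in how you justify that degree parameter. The paper bounds $|E(u,R_q\cap T_j)|\le 192\ln n$ for each $j$ separately (using $B(a,j)\le 2^q$, which fits the upper hypothesis of \cref{lem:sample} with $t=2^{q-1}$) and then sums over the at most $h$ nonzero $j$'s; you instead bound the total $Y$-degree of $u_a$ by $h\cdot 2^q$ and invoke \cref{lem:sample} once on the full neighborhood indicator. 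Strictly speaking \cref{lem:sample} requires $|x^{(i)}|\le 2t=2^q$, so your vectors overshoot by a factor of $h$ and the lemma as stated does not yield the $O(h\ln n)$ bound; the conclusion is of course still true (the expectation scales by $h$ and the same Chernoff argument applies), but you should either say this explicitly or adopt the paper's per-$T_j$ decomposition to make the citation clean.
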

 
 \begin{proof}
Let $U = \cup_{i=1}^k S_i$ and let $u_1 < \cdots < u_{|U|}$ be an ordering of the elements of $U$.  The algorithm first performs $B \gets \text{Learn Low}(\tilde U,\Tcal,h,1/n)$.  
By the correctness of \text{Learn Low} from \cref{lem:learn_low}, 
except with probability $1/n$, it will hold that $B(a, j) = |E(u_a, T_j)|$ for all $a \in \{1, \ldots, |U|\}, j \in \{1, \ldots, \ell\}$.  This step takes $O(h \log(n/h) \log (n))$ cut queries.  We henceforth 
assume that this step was performed correctly.
Define $X = \{ u_a \in U: (\exists j ~ B(a,j) > 0)  \mbox{ AND } ((u_a \in S_i, u_b \in S_i,  b < a ) \Rightarrow \forall c ~ B(b,c) = 0)\}$.
In other words, for every $S_i$ we take the first $u_a \in S_i$ for which there is a $j \in \{1, \ldots, \ell\}$ such that $B(a,\ell) > 0$, if such a $u_a$ exists.
Let $j^*(u_a) = \argmax_{j \in \{1, \ldots, \ell\}} B(a,j)$ for each $u_a \in X$.  
For each $u \in X$, we are going to learn a $v \in T_{j^*(u)}$ such that $\{u,v\} \in E$.

Fix a value of $q$ in the for loop starting on line~\ref{line:for_q}.  For every $u_a \in X_q$ and $j \in \{1, \ldots, \ell\}$ we have $B(a,j) \le B(a,j^*(u_a)) \le 2^q$.  
For $\delta = 1/n^2$, $R_q$ is formed by randomly sampling with replacement $\ceil{8|Y| \ln(|X_q|/\delta)/2^q}$ elements of $Y$. Thus by \cref{lem:sample} 
and a union bound over $j \in \{1, \ldots, \ell\}$, except with probability $1/n$, we have
\begin{equation}
\label{eq:upper_j}
|E(u, R_q \cap T_j)|  \le 192 \ln(n) \mbox{ for all } j \in \{1,\ldots, \ell\} \mbox{ and } u \in X_q   \enspace.
\end{equation}
Similarly, except with probability $1/n$, we also have 
\begin{equation*}
|E(u, R_q \cap T_j)|  > 0 \mbox{ for all } j \in \{1,\ldots, \ell\} \mbox{ and } u \in X_q \mbox{ with } 2^{q-1} < B(u,j) \enspace.
\end{equation*}
In particular, except with probability $1/n$
\begin{equation}
\label{eq:lower_j}
|E(u, R_q \cap T_j^*(u))|  > 0 \mbox{ for all } u \in X_q \enspace .
\end{equation}
We now add $O(1/n)$ to the error probability and assume for the rest of the proof that \cref{eq:upper_j} and \cref{eq:lower_j} hold.
Next the algorithm performs
\[
C(\ind(X_q), \ind(R_q)) \gets \text{Learn Low}(X_q, R_q, 192h \ln(n), 1/n) \enspace.
\]
By \cref{eq:upper_j}, the upper bound on the degree passed to Learn Low is valid, thus by \cref{lem:learn_low} Learn Low 
returns the biadjacency matrix between $X_q$ and $R_q$, except with probability $1/n$.  In particular, except with probability 
$1/n$, if $C(a,b) = 1$ for $a \in \ind(X_q), b \in \ind(R_q)$ then $\{u_a, v_b\} \in E$ and 
by \cref{eq:lower_j} for every $u_a \in X_q$ there is $v_b \in R_q$ such that $C(a,b) = 1$.
This step requires $O(h \log (n)^2 \log(n/h))$ many cut queries.

Finally, by a union bound over $q=0, \ldots, \ceil{\log |Y|}$, the probability of an error in any iteration of the for loop is at most $O(\log(n)/n)$.
As every $u \in X$ will be in $X_q$ for some value of $q$, for every $S_i$ with $\deg_T(S_i) > 0$ we will find a $u \in S_i$ and 
 $v \in Y$ with $\{u,v\} \in E$.  The total error probability is $O(\log(n) /n)$, and total number of queries is $O(h \log (n)^4)$.
 \end{proof}

Next we show how to find witnesses for supervertices of high superdegree.
\begin{algorithm}
\caption{Witness Reduce High$(\Scal,\Tcal,d,\vec{g})$}
\label{alg:witness_reduce_high}
 \hspace*{\algorithmicindent} \textbf{Input:} Disjoint valid sets of supervertices 
 $\Scal = \{S_1, \ldots, S_k\}, \Tcal = \{T_1, \ldots, T_\ell\}$ of an $n$-vertex graph $G$, a degree parameter $d$ such that 
 $\deg_\Tcal(S_i) \ge d$ for all $S_i \in \Scal$, and a vector 
 $\vec{g} \in \mathbb{Z}^k$ such that $\vec{g}(i)i/4 \le \deg_{\Tcal}(S_i) \le 2 \vec{g}(i)$ for all $i \in \{1, \ldots, k\}$. \\
  \hspace*{\algorithmicindent} \textbf{Output:} A $|\cup_{i=1}^k S_i|$-by-$|\cup_{i=1}^\ell T_i|$ Boolean matrix $C$ such that $C(u,v)=1$ implies 
  $\{u,v\} \in E$ and for every $S_i$ there is a $u \in S_i, v \in \cup_j T_j$ such that $C(u,v) = 1$.
\begin{algorithmic}[1]
\State $C \gets \zeros(|\cup_{i=1}^k S_i|, |\cup_{i=1}^\ell T_i|)$
\For{$q = \floor{\log d}-1$ to $\ceil{\log \ell}+2$}
  \State $\Hcal_q = \{ S_i \in \Scal : 2^{q-1} < \vec{g}(i) \le 2^{q}\}$ 
  \State $\Rcal_q \gets$ Randomly choose $\frac{16 \ell \ln(|\Hcal_q| n)}{2^q}$ supervertices in $\Tcal$, with replacement
  \State $C(\ind(\Hcal_q), \ind(\Rcal_q))  \gets$ Witness Low-High$(\Hcal_q,\Rcal_q, 192 \ln(n))$ \label{li:rh_lh}
\EndFor
\State \Return $B$
\end{algorithmic}
\end{algorithm}

\begin{lemma}[Witness Reduce High]
\label{lem:witness_reduce_high}
Let $\Scal = \{S_1, \ldots, S_k\}, \Tcal = \{T_1, \ldots, T_\ell\}$ be disjoint valid sets of supervertices of an $n$-vertex graph $G$.  Let $d \in \N$ be 
a degree parameter and $\vec g \in \R^k$ be a good estimate of $(\deg_{\Tcal}(S_1), \ldots, \deg_{\Tcal}(S_k))$.  
Further suppose that $\deg_{\Tcal}(S_i) \ge d$ for all $S_i \in \Scal$.
On input $\Scal,\Tcal,d, \vec{g}$, Witness Reduce High (\cref{alg:witness_reduce_high}) makes $O(\log(n)^5)$ many cut queries and with probability at least 
$1-O(\log(n)^2/n)$ outputs a $|\cup_i S_i|$-by-$|\cup_j T_j|$ Boolean matrix $C$ such that 
\begin{enumerate}
\item For every $S_i \in \Scal$, there exists $u \in S_i$ and $v \in \cup_j T_j$ such that $C(u,v)=1$.
\item If $C(u,v) = 1$ then $\{u,v\} \in E$.
\item $|\{T_j : \exists u \in \cup_i S_i, \exists v \in T_j : C(u,v) =1\}| \le \frac{256 \ell \log(n)}{d}$.
\end{enumerate}  
\end{lemma}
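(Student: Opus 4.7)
The plan is to mirror the analysis of \text{Reduce High} (\cref{lem:reduce_high}), substituting \text{Witness Low-High} (\cref{lem:witness_low_high}) for \text{Learn Low}. First I would check that the for-loop covers every $S_i \in \Scal$: the good-estimate hypothesis together with $\deg_\Tcal(S_i) \ge d$ gives $d/2 \le \vec{g}(i) \le 2\ell$, so each $S_i$ lands in some $\Hcal_q$ with $q$ in the loop range $\{\lfloor \log d \rfloor - 1, \ldots, \lceil \log \ell \rceil + 2\}$.

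Next I would fix one value of $q$ and show that the corresponding iteration simultaneously (a) invokes \text{Witness Low-High} with a valid degree bound and (b) produces a witness for every $S_i \in \Hcal_q$. For each $S_i \in \Hcal_q$ the good-estimate property yields $2^{q-3} < \deg_\Tcal(S_i) \le 2^{q+1}$, so \cref{lem:sample_super} applied with $t = 2^q$ implies that, except with probability $O(1/n^2)$, the sample $\Rcal_q$ of size $16\ell \ln(|\Hcal_q| n)/2^q$ satisfies
\[
1 \le \deg_{\Rcal_q}(S_i) \le 192 \ln n \qquad \text{for every } S_i \in \Hcal_q.
\]
Conditioning on this event, the degree parameter $h = 192 \ln n$ supplied to \text{Witness Low-High} on line~\ref{li:rh_lh} is valid, and \cref{lem:witness_low_high} then guarantees, with an additional $O(\log(n)/n)$ failure probability, a submatrix of $C$ whose nonzero entries correspond to true edges and that contains a witness $\{u,v\}$ with $u \in S_i$, $v \in \bigcup_j T_j$ for every $S_i \in \Hcal_q$ of positive superdegree into $\Rcal_q$, which by the sample lower bound is all of $\Hcal_q$.

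Combining these per-iteration guarantees, item~1 follows because every $S_i$ is handled in some $\Hcal_q$ and receives a witness; item~2 is immediate from the correctness of \text{Witness Low-High}; and item~3 follows because only columns of $C$ indexed by vertices in $\bigcup_q \bigcup_{T_j \in \Rcal_q} T_j$ can be nonzero, and the geometric-series bookkeeping already carried out in the proof of \cref{lem:reduce_high} gives
\[
\sum_{q=\lfloor\log d\rfloor-1}^{\lceil\log\ell\rceil+2} |\Rcal_q| \le \sum_q \frac{16\ell \ln(|\Hcal_q|n)}{2^q} \le \frac{256\ell \log n}{d}.
\]
A union bound over the $O(\log n)$ iterations produces overall failure probability $O(\log(n)^2/n)$; the query cost of a single call to \text{Witness Low-High} with $h = O(\log n)$ is $O(\log(n)^5)$ by \cref{lem:witness_low_high}, which summed across the iterations yields the stated polylogarithmic bound.

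The only step that is more than bookkeeping is coupling the sampling bound to the hypotheses of \text{Witness Low-High}: one needs both that the uniform degree bound $192\ln n$ holds across all of $\Hcal_q$ (so that \text{Witness Low-High} succeeds) and that no $S_i \in \Hcal_q$ has superdegree zero into $\Rcal_q$ (so that a witness is actually produced). Both follow from a single application of \cref{lem:sample_super}, and once this is in place the remainder of the proof is a direct adaptation of the analysis of \text{Reduce High}.
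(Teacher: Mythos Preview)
Your proposal is correct and follows essentially the same strategy as the paper. The paper's proof is slightly more compressed: rather than re-invoking \cref{lem:sample_super} directly, it observes that \text{Witness Reduce High} differs from \text{Reduce High} only in replacing \text{Learn Low} by \text{Witness Low-High}, and then imports Items~(1) and~(3) of \cref{lem:reduce_high} wholesale once \text{Witness Low-High} is known to succeed. You instead re-derive the sampling guarantee per iteration and only borrow the geometric-series computation for Item~3; this is the same argument unpacked one level further.

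One small point on the query count: with $h = 192\ln n$, \cref{lem:witness_low_high} gives $O(h\log(n)^4) = O(\log(n)^5)$ per call, and summing over $O(\log n)$ iterations yields $O(\log(n)^6)$, not the $O(\log(n)^5)$ stated in the lemma. You paper over this by writing ``the stated polylogarithmic bound''. The paper's own proof asserts $O(\log(n)^4)$ per call, which is what is needed for the stated bound but sits in some tension with the bound recorded in \cref{lem:witness_low_high}; either way, the discrepancy is a bookkeeping issue and does not affect the downstream $O(\log(n)^8)$ bound for spanning forests.
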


\begin{proof}
We first show item~(2).  By \cref{lem:witness_low_high}, except with probability $O(\log(n)/n)$, the call to Witness Low-High on  
Line~\ref{li:rh_lh} will only return valid edges in $G$.  As there are $O(\log n)$ iterations of the for loop, item~(2) will therefore hold 
except with probability $O(\log(n)^2/n)$.

We now show items~(1) and (3), assuming that all calls to Witness Low-High return correctly.  The only difference between Reduce High and Witness Reduce High is that in 
Line~\ref{li:rh_lh} of Witness Reduce-High we call Witness Low-High instead of Learn Low.  As Witness Low-High returns correctly, by \cref{lem:witness_low_high} this means  
we find a witness for every superedge found by Learn Low, and therefore a witness for every superedge found by Reduce High.  By Item~(1) of \cref{lem:reduce_high} 
for every $S_i \in \Scal$ Reduce High finds a $T_j \in \Tcal$ such that $(S_i, T_j)$ is a superedge.  Finding a witness for each of these superedges gives Item~(1) here.  

Furthermore, we do not find witnesses for any superedges not found in Reduce High.  Thus Item~(3) of \cref{lem:reduce_high} implies Item~(3) here.

Finally, we bound the number of queries made.  There are $O(\log n)$ iterations of the for loop, and in each iteration queries are only 
made in the call to Witness Low-High.  Each of these calls take $O(\log(n)^4)$ cut queries, thus the total number of cut queries is 
$O(\log(n)^5)$.
\end{proof}

Next we show how to contract supervertices by using the edges found so far.
\begin{lemma}
\label{lem:witness_contract}
Let $\Scal = \{S_1, \ldots, S_k\}$ be a valid set of supervertices with $N = |\cup_{i=1}^k S_i|$, $\Pcal = \{P_1, \ldots, P_k\}$ a set of spanning trees, 
$\Acal$ a list of $N$-by-$N$ weighted adjacency matrices, and $low \in \{0,1\}^k$ a Boolean vector with the following properties:
\begin{enumerate}
  \item Every supervertex in $\Scal$ is connected, and a spanning tree for $S_i \in \Scal$ is given by $P_i \in \Pcal$.
  \item For every $A \in \mathcal{A}$ if $A(u,v) = 1$ then $\{u,v\} \in E$.
  \item For every $i \in \{1,\ldots, k\}$ with $low(i) = 1$, it holds that for every $j \in \{1, \ldots, k\}$ such that there is a superedge between $S_i$ and $S_j$, 
  there is a $u \in S_i, v \in S_j$ and $A \in \mathcal{A}$ with $A(u,v) = 1$.
\end{enumerate}
The algorithm Witness Contract$(\Scal, \Pcal, \mathcal{A}, low)$ given in \cref{alg:witness_contract} makes no queries and outputs sets of supervertices $\Scal' = \{S_1', \ldots, S_\ell'\}, \Ccal = \{C_1, \ldots, C_t\}$ and sets of spanning 
trees $\Pcal_{\Scal'} = \{P_1', \ldots, P_\ell'\}, \Pcal_{\Ccal} = \{Q_1, \ldots, Q_t\}$ such that 
\begin{itemize}
  \item Each $S_i' \in \Scal'$ is connected and a spanning tree for it is given by $P_i' \in \Pcal_{\Scal'}$.
  \item Each $C_i \in \Ccal$ is a connected component and a spanning tree for it is given by $Q_i \in \Pcal_{\Ccal}$.
  \item $\Scal' \cup \Ccal$ is a partition of $\cup_{S \in \Scal} S$.  
  \item For every $U \in \Scal'$ there is an $S_i \subseteq U$ with $low(i) = 0$.
\end{itemize}
\end{lemma}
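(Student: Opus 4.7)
The plan is to follow the same structure as the proof of \cref{lem:contract}, with the one new ingredient being that whenever two supervertices are merged, we also combine their spanning trees using the concrete witness edge certifying the merge. Specifically, the algorithm \text{Witness Contract} will maintain a list of pairs (supervertex, spanning tree, low-flag), initialized to $(S_i, P_i, \text{low}(i))$ for each $i$. Iterating over the matrices in $\Acal$ in some fixed order, whenever an entry $A(u,v)=1$ is encountered, we pop the current pair whose supervertex contains $u$ and the pair whose supervertex contains $v$, form the union of the supervertices, take the union of the two spanning trees together with the single edge $\{u,v\}$, and set the new low-flag to the AND of the two flags. At the end, pairs with flag $0$ go to $(\Scal', \Pcal_{\Scal'})$ and pairs with flag $1$ go to $(\Ccal, \Pcal_{\Ccal})$.

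First I would verify that the spanning-tree invariant is preserved. By hypothesis~(1), each initial $P_i$ is a spanning tree for $S_i$. By hypothesis~(2), every edge used to merge is a genuine edge of $G$. When two disjoint vertex sets with spanning trees $P_U, P_W$ are joined by a single edge $\{u,v\}$ with $u \in U, v \in W$, the graph $P_U \cup P_W \cup \{\{u,v\}\}$ is connected, acyclic, and spans $U \cup W$, so it is a spanning tree of the union. Inductively, every supervertex ever created carries a valid spanning tree, which gives the first two bullet points once we separate by the flag.

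Next I would establish that every $C \in \Ccal$ is a connected component. This is the part that directly mirrors the corresponding argument in the proof of \cref{lem:contract}: suppose toward contradiction that some $w \in S_i \subseteq C$ is joined by an edge of $G$ to some $u \in S_j$ outside $C$. Since $C$ emerges with flag $1$, every $S_i \subseteq C$ has $\text{low}(i) = 1$, so by hypothesis~(3) the superedge $(S_i, S_j)$ is witnessed by some $A(u',v') = 1$ in $\Acal$, which would have forced the pair containing $S_i$ and the pair containing $S_j$ to merge, contradicting $S_j \not\subseteq C$. The partition property and the ``moreover'' statement also transfer verbatim from the proof of \cref{lem:contract}: at every stage the union of the current supervertices equals $\cup_{S\in\Scal} S$, and a flag can only stay at $1$ in a merged pair if all the original $S_i$ contained in it had $\text{low}(i) = 1$, so any supervertex ending in $\Scal'$ must contain at least one $S_i$ with $\text{low}(i) = 0$.

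The main obstacle, such as it is, is simply bookkeeping: the original \text{Contract} merely tracked set unions, whereas here we must show that the particular edge $\{u,v\}$ picked to merge $U$ and $W$ does indeed lie between $U$ and $W$ (not merely between $S_i$ and $S_j$). This is immediate, however, because $S_i \subseteq U$ and $S_j \subseteq W$, and at the moment of merging $U$ and $W$ are disjoint by the invariant that the current list of supervertices forms a partition of $\cup_{S\in\Scal} S$. Finally, the algorithm performs no cut queries, only classical set and tree manipulations, so the ``makes no queries'' claim is immediate.
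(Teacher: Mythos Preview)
Your proposal is correct and follows essentially the same approach as the paper's own proof: both maintain the invariant that each current supervertex carries a valid spanning tree (preserved because a witness edge $\{u,v\}$ with $A(u,v)=1$ is a genuine edge of $G$ by hypothesis~(2), so joining two trees by it yields a tree), and both establish the connected-component, partition, and ``moreover'' claims by the same arguments as in \cref{lem:contract}. The extra bookkeeping remark you make about $\{u,v\}$ lying between the current $U$ and $W$ is accurate and implicit in the paper as well.
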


\begin{algorithm}
\caption{Witness Contract$(\Scal, \Pcal, \mathcal{A}, low)$}
\label{alg:witness_contract}
 \hspace*{\algorithmicindent} \textbf{Input:} Valid set of connected supervertices $\Scal = \{S_1, \ldots, S_k\}$, a set of spanning trees $\Pcal = \{P_1, \ldots, P_k\}$, 
 a list of $N$-by-$N$ weighted adjacency matrices 
 $\mathcal{A}=(A_1, \ldots, A_m)$ with rows and columns labeled by elements of $[N]$, a vector $low \in \{0,1\}^k$ indicating if each set $S_i$ is low. \\
 \hspace*{\algorithmicindent} \textbf{Output:} Sets of supervertices $\Scal', \Ccal$ and sets of spanning trees for them $\Pcal_{\Scal'}, \Pcal_{\Ccal}$, where each supervertex in $\Scal', \Ccal$ is connected, and moreover the 
 supervertices in $\Ccal$ are connected components.
\begin{algorithmic}[1]
\State $L \gets [ (S_i, P_i, low(i) ) : i \in \{1,\ldots, k\}]$
\For{$A \in \Acal$}
  \For{$\{u,v\} \in [N]^{(2)}$}
    \If{$A(u,v) = 1$}
      \State Pop $(U, T_1, flag1) \in L$ such that $u \in S_i \subseteq U$
      \State Pop $(W, T_2, flag2) \in L$ such that $v \in S_j\subseteq W$
      \State $U \gets U \cup W$
      \State $T = T_1 \cup T_2 \cup \{\{u,v\}\}$
      \State $lowFlag = flag1 \wedge flag2$
      \State Append $(U, T, lowFlag)$ to $L$
    \EndIf
  \EndFor
\EndFor
\State $\Scal' \gets \emptyset, \Pcal_{\Scal'} \gets \emptyset$
\State $\Ccal \gets \emptyset, \Pcal_{\Ccal} \gets \emptyset$
\For{$(U, T, lowFlag) \in L$}
  \If{lowFlag = 0}
    \State $\Scal' \gets \Scal' \cup \{U\}$ 
    \State $\Pcal_{\Scal'} \gets \Pcal_{\Scal'} \cup \{T\}$ 
  \Else
    \State $\Ccal \gets \Ccal \cup \{U\}$ \label{line:ccal}
    \State $\Pcal_{\Ccal} \gets \Pcal_{\Ccal} \cup \{T\}$ 
  \EndIf
\EndFor
\State \Return $\Scal', \Pcal_{\Scal'}, \Ccal, \Pcal_{\Ccal}$  
\end{algorithmic}
\end{algorithm}
\begin{algorithm}
\caption{Witness Shrink$(\Scal, \Pcal,d)$}
\label{alg:witness_shrink}
 \hspace*{\algorithmicindent} \textbf{Input:} A valid set of connected supervertices $\Scal =\{S_1, \ldots, S_k\}$, a set of spanning trees 
 $\Pcal = \{P_1, \ldots, P_k\}$ where $P_i$ is a spanning tree for $S_i$, and a degree parameter $d$. \\
 \hspace*{\algorithmicindent} \textbf{Output:} Sets of supervertices $\Scal, \Ccal$ and corresponding sets of spanning trees $\Pcal_{\Scal}, \Pcal_{\Ccal}$.
\begin{algorithmic}[1]
\State $low  \gets \ones(k,1)$
\State $\Acal \gets [\;]$
\State $N \gets | \cup_{i=1}^k S_i|$
\For{$j = 1$ to $\ceil{\log(k)}$}
  \For{$ b \in \{0,1\}$}
  \State $\Lcal_{j,b} = \{S_t \in \Scal : t_j = b\}$
  \State $\Rcal_{j,b} = \{S_t \in \Scal : t_j = 1-b\}$
  \State $\vec{g} \gets$ Approximate Degree Sequence$(\Lcal_{j,b}, \Rcal_{j,b}, 1/n^2)$
  \State $\Hcal \gets \{S_t \in \Lcal_{j,b} : \vec{g}(t) \ge d\}$
  \State $low(\ind(\Hcal)) = 0$
  \State $B_{j,b} \gets \zeros(N,N)$
  \State $B_{j,b}(\cup_{S_t \in \Hcal} S_t,\cup_{S_t \in \Rcal_{j,b}} S_t) \gets$ \text{Witness Reduce High} $(\Hcal, \Rcal_{j,b},d/4, \vec{g}(\ind(\Hcal)))$ \label{li:witness_rh}
  \State $\Lcal \gets \{S_t \in \Lcal_{j,b} : \vec{g}(t) < d\}$
  \State $\vec{f} \gets$ Approximate Degree Sequence$(\Rcal_{j,b}, \Lcal, 1/n^2)$
  \State $R_{j,b}^+ \gets \{ S_t \in \Rcal_{j,b} : \vec{f}(t) \ge 16 d\}$
  \State $R_{j,b}^- \gets \{ S_t \in \Rcal_{j,b} : \vec{f}(t) < 16 d\}$
  \State $C_{j,b}^+(\cup_{S_t \in \Lcal} S_t, \cup_{S_t \in \Rcal_{j,b}^+} S_t) \gets$ Witness Low-High$(\Lcal, \Rcal_{j,b}^+, 2d)$
  \State $HasHighNeighbor \gets \{i : \exists u \in S_i \in \Lcal, \exists v \colon C_{j,b}^+(u,v) = 1\}$
  \State $low(HasHighNeighbor) = 0$
  \State $C_{j,b}^-(\cup_{S_t \in \Lcal} S_t, \cup_{S_t \in \Rcal_{j,b}^-} S_t) \gets$ Witness Low-Low$(\Lcal, \Rcal_{j,b}^-,32d,1/n)$
  \State Append $B_{j,b}, C_{j,b}^+,$ and $C_{j,b}^-$ to $\Acal$.
  \EndFor
\EndFor
  \State $\Scal, \Pcal, \Ccal, \Pcal_{\Ccal} \gets$ Witness Contract$(\Scal,\Pcal, \Acal)$ \label{line:witness}
  \State \Return $\Scal, \Pcal, \Ccal, \Pcal_{\Ccal}$
\end{algorithmic}
\end{algorithm}

\begin{proof}
We first show that each $S_i' \in \Scal'$ is connected and a spanning tree for it is given by $P_i' \in \Pcal_{\Scal}$.
By assumption each $S_i \in \Scal$ is connected with a spanning tree given by $P_i \in \Pcal$.  We maintain this 
invariant because we only merge two supervertices $U,W$ when there is a $u \in U, w \in W$ and a $A \in \Acal$ 
with $A(u,w) = 1$.  By assumption if $A(u,w) = 1$ then $\{u,w\} \in E$ and thus $U \cup W$ is connected when 
$U,W$ are.  Furthermore, if $T_1$ is a spanning tree for $U$ and $T_2$ is a spanning tree for $W$ then 
$T_1 \cup T_2 \cup \{ \{u,w\}\}$ is a spanning tree for $U \cup W$.  

The same argument shows that each $C_i \in \Ccal$ is connected with a spanning tree given by $P_i \in \Pcal_{\Ccal}$.

Next we show that every $W \in \Ccal$ is a connected component.  Suppose for a contradiction that this is not the case and therefore there 
is a $w \in S_i \subseteq W$ and $u \in S_j \subseteq (\cup_{t=1}^k S_t) \setminus W$ such that $\{u,w\}$ is an edge of $G$.  It must be the 
case that $S_i$ is low, as otherwise the lowFlag for $W$ would have been set to $0$ and $W$ would have been placed in $\Scal'$.  Thus 
$S_i$ must be low and therefore by hypothesis for some $w' \in S_i, u' \in S_j$ and $A \in \Acal$ it is the case that $A(w',u') = 1$.  This means that at some point in 
Witness Contract a set containing $S_i$ would have been merged with a set containing $S_j$, a contradiction to the fact that 
$S_j \subseteq (\cup_{t=1}^k S_t) \setminus W$.

The fact that $\Scal \cup \Ccal$ is a partition of $\cup_{S \in \Scal} S$ follows because at all times $\cup_{(U,T,flag) \in L} U$ is equal 
to $\cup_{S \in \Scal} S$.  This is true when $L$ is first defined, and is preserved when sets are popped from $L$, merged, and put 
back into $L$.

Finally, the ``moreover'' statement holds as if $low(i) = 1$ for all $S_i \in U$ then the lowFlag variable for $U$ will be set to $1$ and 
therefore $U$ will be placed into $\Ccal$ on Line~\ref{line:ccal}.
\end{proof}

The last lemma guarantees the shrinkage of the number of supervertices.
\begin{lemma}
\label{lem:witness_shrink}
Let $\Scal =\{S_1, \ldots, S_k\}$ be a valid set of connected supervertices, $\Pcal =\{P_1, \ldots, P_k\}$ be a set of spanning trees where $P_i$ is a 
spanning tree for $S_i$, and $d \in \mathbb{N}$ be a degree parameter.  Witness Shrink$(\Scal, \Pcal,d)$ given by \cref{alg:witness_shrink} has the following properties:
\begin{enumerate}
\item Except with probability $O(\log(n)^3/n)$, it outputs sets of supervertices $\Scal', \Ccal$ and sets of spanning trees $\Pcal_{\Scal'}, \Pcal_{\Ccal}$ such that
\begin{enumerate}
  \item $|\Scal'| \le 512 \ceil{\log(|\Scal|)} \ln(n) |\Scal |/d$.  
  \item For every supervertex in $\Scal'$ there is a spanning tree for it in $\Pcal_{\Scal'}$.
  \item Every supervertex in $\Ccal$ is a connected component and has a spanning tree for it in $\Pcal_{\Ccal}$.
  \item $\Scal' \cup \Ccal$ is a partition of $\cup_{S \in \Scal} S$.  
\end{enumerate}
  \item The total number of cut queries made is $O(d\log(n)^5 + d^2 \log(n)^3)$.  
\end{enumerate}
\end{lemma}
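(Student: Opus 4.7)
My plan is to mirror the proof of \cref{lem:shrink}, upgrading each step to track witness edges and then invoking \cref{lem:witness_contract} at the very end. Across the $O(\log n)$ outer iterations, every call to Approximate Degree Sequence, Witness Reduce High, Witness Low-High, and Witness Low-Low fails with probability at most $O(\log(n)^2/n)$; a union bound yields cumulative failure probability $O(\log(n)^3/n)$. Assuming the good event, I verify the three hypotheses of \cref{lem:witness_contract} for the call on Line~\ref{line:witness}. Hypothesis~(1) is given by the inputs, and hypothesis~(2), that the edges in $\Acal$ are real edges of $G$, follows from the correctness guarantees of the three witness subroutines.

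The key new step is hypothesis~(3): whenever $\text{low}(i)=1$ and $S_j$ is a neighbor of $S_i$, some matrix in $\Acal$ must contain a witness edge $\{u,v\}$ with $u\in S_i$ and $v\in S_j$. I fix such an $i,j$ and pick $(j^*,b^*)$ so that $S_i\in\Lcal_{j^*,b^*}$ and $S_j\in\Rcal_{j^*,b^*}$, which is possible since the labels of $i$ and $j$ differ in some bit. Because $\text{low}(i)$ stayed $1$, in this iteration we have $\vec g(i)<d$ (so $S_i\in\Lcal$) and $i\notin\text{HasHighNeighbor}$; by item~(2) of \cref{lem:witness_low_high} applied to Witness Low-High$(\Lcal,\Rcal_{j^*,b^*}^+,2d)$, the second fact forces $S_i$ to have \emph{no} neighbor in $\Rcal_{j^*,b^*}^+$, so $S_j\in\Rcal_{j^*,b^*}^-$. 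Using that $\vec g,\vec f$ are good estimates, one checks that $h=32d$ upper-bounds the superdegree on both sides of $(\Lcal,\Rcal_{j^*,b^*}^-)$, so Witness Low-Low applies and \cref{lem:low_low} delivers a witness for the superedge $(S_i,S_j)$. Items 1(b), 1(c), 1(d) then follow directly from \cref{lem:witness_contract}, which also supplies the spanning trees in $\Pcal_{\Scal'}$ and $\Pcal_{\Ccal}$.

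For the shrinkage bound 1(a), I would generalize the $\Rcal$-set argument from \cref{lem:shrink}, now accounting for the HasHighNeighbor mechanism. Every $U\in\Scal'$ contains some $S_i$ with $\text{low}(i)=0$; this flag was cleared either (a) when $S_i\in\Hcal$ in some iteration $(j,b)$, so item~(1) of \cref{lem:witness_reduce_high} links $S_i$ to a supervertex in $\cup_q R_q^{(j,b)}$, or (b) when $i\in\text{HasHighNeighbor}$ via a witnessed neighbor $S_k\in\Rcal_{j,b}^+$. In case~(b), $\vec f(k)\ge 16d$ implies $\deg_{\Lcal}(S_k)\ge 4d$; in the dual iteration $(j,1-b)$ we have $S_k\in\Lcal_{j,1-b}$ and $\deg_{\Rcal_{j,1-b}}(S_k)\ge 4d$, so the good estimate gives $\vec g(k)\ge 2d\ge d$, placing $S_k\in\Hcal$ there and linking $S_k$ (and hence $U$, since $S_k\subseteq U$ through the witness edge to $S_i$) to some supervertex in $\cup_q R_q^{(j,1-b)}$. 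Either way, each $U\in\Scal'$ contains a distinct element of $\bigcup_{j,b,q}R_q^{(j,b)}$, whose total size is at most $512\lceil\log(|\Scal|)\rceil\ln(n)|\Scal|/d$ by the same calculation as in item~(3) of \cref{lem:reduce_high}.

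Summing query costs across the $O(\log n)$ outer iterations, Approximate Degree Sequence contributes $O(\log(n)^4)$, Witness Reduce High $O(\log(n)^6)$, Witness Low-High with $h=2d$ contributes $O(d\log(n)^5)$, and Witness Low-Low with $h=32d$ and $\delta=1/n$ contributes $O(d^2\log(n)^3)$, giving the claimed $O(d\log(n)^5+d^2\log(n)^3)$ in the regime $d=\Omega(\log n)$ relevant to the connectivity application. I expect the case~(b) step above---charging HasHighNeighbor-marked vertices back, via the flipped iteration, to samples already accounted for in the Witness Reduce High bound---to be the main obstacle, since it is the one place where the proof genuinely departs from the structure of \cref{lem:shrink}.
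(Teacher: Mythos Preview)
Your proposal is correct and follows essentially the same approach as the paper: union-bound the subroutine failures, verify the three hypotheses of \cref{lem:witness_contract} (in particular, arguing that $\text{low}(i)=1$ forces the neighbor into $\Rcal_{j,b}^-$ where Witness Low-Low finds a witness), and for the shrinkage bound handle the two ways $\text{low}(i)$ can be cleared, with the HasHighNeighbor case charged back through the flipped iteration $(j,1-b)$ exactly as you describe. Your observation that the stated query bound absorbs the $O(\log(n)^6)$ contribution of Witness Reduce High only when $d=\Omega(\log n)$ is a point the paper glosses over, so you are being slightly more careful there.
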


\begin{proof}
We first prove item~(2).  Queries are only made in the calls to Approximate Degree Sequence, Witness Reduce High,
Witness Low-High, and Witness Low-Low.  Each of these routines is called $2 \ceil{\log|\Scal|} = O(\log n)$ many times.  The query cost is 
dominated by Witness Low-High, which takes $O(d \log(n)^4)$ many queries, and Witness Low-Low which requires 
$O(d^2 \log(n)^2)$ queries, resulting in a total of $O(d\log(n)^5 + d^2 \log(n)^3)$ cut queries.  

The error probability of the call to Approximate Degree Sequence is $1/n^2$, Witness Reduce High is $O(\log(n)^2/n)$, 
Witness Low-High is $O(\log(n)/n)$, and Witness Low-Low is $1/n$.  Thus the probability that an error occurs in any 
of these routines over the course of Witness Shrink is $O(\log(n)^3/n)$.  We now argue the points in item~(1) hold assuming all 
of these routines always return correctly.  

We first establish that the hypotheses of \cref{lem:witness_contract} hold when Witness Contract is called on Line~\ref{line:witness}.
By assumption the supervertices in $\Scal$ are connected and spanning trees for them are given in $\Pcal$.  The matrices 
in $\Acal$ are produced in calls to Witness Reduce High, Witness Low-High, and Witness Low-Low.  As we are in the case 
that all of these algorithms return correctly it follows by \cref{lem:low_low},\cref{lem:witness_low_high}, and \cref{lem:witness_reduce_high} 
that for every $A \in \Acal$ if $A(u,v) = 1$ then $\{u,v\} \in E$.  
Finally, we need to establish that if $low(i) = 1$ then $\Acal$ contains a witness for every superedge of $S_i$.  Suppose that $low(i) = 1$ and 
$S_i$ has a superedge with $S_t$.  For some value of $j,b$ in the for loop we will have $S_i \in \Lcal_{j,b}, S_t \in \Rcal_{j,b}$.  As $low(i)=1$ 
and Witness Low-High returns correctly, $S_i$ has no neighbors in $\Rcal_{j,b}^+$.  Thus it must be the case that $S_t \in \Rcal_{j,b}^-$.  As 
$\vec f, \vec g$ are good estimates because Approximate Degree Sequence returns correctly, the degree bound in the call to Witness Low-Low is 
valid and therefore by \cref{lem:low_low} a witness for the $(S_i,S_t)$ superedge will be found in the call to Witness Low-Low.

We have now established the hypotheses to \cref{lem:witness_contract} and thus can invoke the conclusion of \cref{lem:witness_contract} 
which implies Items~1(b),(c),(d) of the current lemma.

It remains to show Item~1(a).  Let 
\[
\Rcal = \{S_t \in \Scal : \exists j \in \{1, \ldots, \ceil{\log k}\}, b \in \{0,1\}, v \in S_t
\mbox{ such that } B_{j,b}(:,v) \ne \vec{0} \} \enspace .
\]
By \cref{lem:witness_reduce_high}, and the fact that the number of iterations of the for loop is $2 \ceil{\log |\Scal| }$, 
we have $|\Rcal| \le 512 \ceil{\log |\Scal| } \log(n) |\Scal |/d$.  In Witness Contract, a supervertex $W$ will be put into 
$\Scal'$ iff for some $S_i \in W$ we have $low(i) = 0$.    
In the next paragraph 
we show that 
in that case
Witness Shrink finds witnesses to certify that an element of $\Rcal$ is in the connected component 
of $S_i$.  This means that Witness Contract will merge $S_i$ into a set $W$ containing an element of $\Rcal$ and 
therefore the number of supervertices in $\Scal'$ can be upper bounded by $|\Rcal|$ and will give Item~1(a).

Take an $S_i$ with $low(i)=0$ and consider the iteration $j,b$ of the for loop where $low(i)$ is set to zero.  There are two ways this can happen.  The first is 
if $\vec g(S_i) \ge d$.  In this case $S_i$ will be placed into $\Hcal$ and a witness for a neighbor in $\Rcal$ will be found in the
call to Witness Reduce High by \cref{lem:witness_reduce_high}.  The second case is that a witness is found for a superedge 
between $S_i$ and an element $S_\ell \in \Rcal_{j,b}^+$.  In the $j,1-b$ iteration of the for loop, $S_\ell \in \Lcal_{j,1-b}$ 
and moreover $\vec g(\ell) \ge d$, as $\vec f(\ell) \ge 16d$ for $S_\ell$ to be placed in $\Rcal_{j,b}^+$ and both are good estimates.  Therefore by the 
previous argument a witness for a neighbor of $S_\ell$ with an element of $\Rcal$ will be found in the call to Witness Reduce High.  Thus we have 
witnesses that $S_i$ is connected to $S_\ell$ and that $S_\ell$ is connected to an element of $\Rcal$.  
\end{proof}

Finally we can give the algorithm of finding a spanning forest with 
polylogarithmic many cut queries.
\begin{algorithm}
\caption{Spanning Forest with cut queries}
\label{alg:span_forest}
 \hspace*{\algorithmicindent} \textbf{Input:} Cut oracle for a graph $G$ on $n$ vertices. \\
 \hspace*{\algorithmicindent} \textbf{Output:} A set $\Pcal = \{P_1, \ldots, P_t\}$ containing a spanning tree for every connected component of $G$.
\begin{algorithmic}[1]
\State $\Scal \gets \tilde V, \Pcal_{\Scal} \gets \tilde V$
\State $\Pcal \gets \emptyset, \text{ConComp} \gets \emptyset$
\Repeat
  \State $(\Scal, \Pcal_{\Scal}, \Ccal, \Pcal_\Ccal) \gets WitnessShrink(\Scal,\Pcal_{\Scal}, 1024 \ceil{\log n}^2)$
  \State $\text{ConComp} \gets \text{ConComp} \cup \Ccal$
  \State $\Pcal \gets \Pcal \cup \Pcal_\Ccal$
\Until{$\Scal = \emptyset$}
\State \Return $\Pcal$
\end{algorithmic}
\end{algorithm}

\begin{theorem}
\label{thm:span_forest}
Given cut query access to an $n$-vertex graph $G$, there is quantum algorithm (\cref{alg:span_forest}) making $O(\log(n)^8)$ queries
that outputs a spanning forest for $G$ with probability $1-O(\log(n)^4/n)$.
\end{theorem}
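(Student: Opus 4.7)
The plan is to iterate \cref{alg:witness_shrink} until there are no active supervertices left, and combine its per-round guarantees from \cref{lem:witness_shrink} with a union bound over iterations. I will use the degree threshold $d = 1024 \ceil{\log n}^2$ fixed in \cref{alg:span_forest}.

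First I would check that $|\Scal|$ contracts geometrically. Item~1(a) of \cref{lem:witness_shrink} gives $|\Scal_{\mathrm{new}}| \le 512 \ceil{\log |\Scal|} \ln(n) |\Scal|/d$. Since $|\Scal| \le n$ throughout, $\ceil{\log |\Scal|} \le \ceil{\log n}$, and the prefactor is bounded by $512 \ceil{\log n} \ln(n)/d \le 1/2$ by our choice of $d$. So $|\Scal|$ at least halves each round and the repeat-until loop terminates after $O(\log n)$ rounds. Each call to Witness Shrink uses $O(d \log(n)^5 + d^2 \log(n)^3) = O(\log(n)^7)$ cut queries at $d = \Theta(\log(n)^2)$, so the total query count is $O(\log(n)^8)$.

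For correctness I would carry the following invariant across rounds: $\Scal$ is a valid set of connected supervertices together with spanning trees in $\Pcal_{\Scal}$; $\mathrm{ConComp}$ is a set of already-identified connected components of $G$ with spanning trees in $\Pcal$; and $\Scal$ together with $\mathrm{ConComp}$ partitions $V$. This holds at initialization with $\Scal = \tilde V$ and $\mathrm{ConComp} = \emptyset$. Items~1(b)--(d) of \cref{lem:witness_shrink} preserve it in every round that succeeds, and the success probability per round is at least $1 - O(\log(n)^3/n)$. A union bound over the $O(\log n)$ rounds bounds the overall failure probability by $O(\log(n)^4/n)$. When the loop halts with $\Scal = \emptyset$, the invariant forces $\mathrm{ConComp}$ to be exactly the set of connected components of $G$ and $\Pcal$ to contain a spanning tree for each, which is a spanning forest of $G$.

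The main obstacle I anticipate is purely bookkeeping: I have to verify that the output of one round is valid input to the next, i.e.\ that the supervertices in $\Scal_{\mathrm{new}}$ are still connected and each comes with a spanning tree produced by Witness Shrink, so that the degree parameter and the Approximate Degree Sequence / Witness Reduce High / Witness Low-High / Witness Low-Low subroutines called inside the next invocation of \cref{alg:witness_shrink} can be applied to them. This is delivered directly by items~1(b)--(c) of \cref{lem:witness_shrink}, so the inductive step carries through without any genuinely new argument, and the whole proof reduces to stitching together the shrinkage factor, the per-round query count, and the per-round failure bound.
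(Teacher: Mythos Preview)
Your proposal is correct and follows essentially the same approach as the paper's own proof: both argue the invariant (connected supervertices with spanning trees, identified components, partition of $V$) is maintained round-by-round via \cref{lem:witness_shrink}, use the degree threshold $d = 1024\ceil{\log n}^2$ to get the $1/2$ shrinkage factor and hence $O(\log n)$ rounds, tally $O(\log(n)^7)$ queries per round for $O(\log(n)^8)$ total, and union-bound the $O(\log(n)^3/n)$ per-round failure over $O(\log n)$ rounds.
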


\begin{proof}
We first argue by induction that the $i^{th}$ time Witness Shrink is called the hypothesis to \cref{lem:witness_shrink} is satisfied
with probability at least $1- (i-1) \log(n)^3/n$.  

The first time Witness Shrink is called, $\Scal = \tilde V$ and $\Pcal_{\Scal} = \tilde V$.  Thus $\Pcal_{\Scal}$ provides 
valid spanning trees for each supervertex in $\Scal$ and the hypothesis to \cref{lem:witness_shrink} is satisfied.  Now suppose the inductive
assumption holds the $i^{th}$ time Witness Shrink is called.  Then after the call to Witness Shrink we know that except with probability 
$\log(n)^3/n$ the output $\Scal, \Pcal$ satisfy that each supervertex in $\Scal$ is connected and has a valid spanning tree for it given in 
$\Pcal$.  Thus in the $(i+1)^{th}$ call to Witness Shrink the hypothesis to \cref{lem:witness_shrink} holds with probability at least $1- i \log(n)^3/n$.

If the hypothesis to \cref{lem:witness_shrink} holds in the call to Witness Shrink, then by the choice of the degree parameter $d = 1024 \ceil{\log n}^2$
the size of $\Scal$ will reduce by a factor of $1/2$ with every iteration of the repeat until loop.  Thus with probability at least $1- \log(n)^4/n$ every 
call to Witness Shrink will return correctly and the number of iterations will be at most $O(\log(n))$.  In this case the total number of queries made 
will be $O(\log(n)^8)$.

If every call to Witness Shrink returns correctly, then the algorithm mantains the invariant that $\Scal \cup \text{ConComp}$ is a partition 
of $V$, every supervertex in $\text{ConComp}$ is a connected component and has a valid spanning tree in $\Pcal_{\Ccal}$.  At the end of the 
algorithm $\Scal = \emptyset$ thus $\text{ConComp}$ is contains all connected components of $G$ and $\Pcal_{\Ccal}$ is a spanning forest.
\end{proof}

\subsection{Applications}
With the ability to compute a spanning forest we can also easily solve some other graph problems in the cut query model.

\begin{theorem}
\label{thm:bipartite}
There is a quantum algorithm to determine if an $n$-vertex graph $G$ is bipartite that makes $O(\log(n)^8)$ many cut queries 
and succeeds with probability at least $1-O(\log(n)^4/n)$.
\end{theorem}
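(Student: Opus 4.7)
The plan is to reduce bipartiteness testing to spanning forest computation followed by randomized verification. First I would invoke \cref{alg:span_forest} on $G$ to obtain a spanning forest $F$ using $O(\log(n)^8)$ cut queries with failure probability $O(\log(n)^4/n)$. Given $F$, I would compute classically a $2$-coloring of the vertex set: for each tree in $F$, run BFS from an arbitrary root and assign colors according to the parity of depth. Let $L$ and $R$ be the resulting color classes, aggregated across all trees. No cut queries are used in this step.

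The key observation is that $G$ is bipartite if and only if both induced subgraphs $G[L]$ and $G[R]$ are edgeless. Indeed, if $G$ is bipartite then its bipartition restricted to each connected component is unique up to a swap, and since $F$ contains a spanning tree of every component the coloring derived from $F$ must coincide (per component) with the true bipartition; hence no edges of $G$ lie inside $L$ or $R$. Conversely, if $G[L]$ and $G[R]$ are both empty then $(L,R)$ is an explicit bipartition of $G$. Because there are no edges between distinct components of $G$, the freedom to swap colors independently in each tree does not affect the argument.

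To test whether $G[L]$ is edgeless I would draw $k = \Theta(\log n)$ independent uniformly random subsets $S \subseteq L$ and compute $|E_G(S, L \setminus S)|$ for each. Since $S$ and $L\setminus S$ are disjoint subsets of $V$, \cref{lem:3} computes this quantity from $3$ cut queries to $G$ (and since $c_G(L)$ need only be computed once, only $2$ new queries per subsequent sample are required). If an edge $\{u,v\}$ lies inside $L$, then a uniformly random $S$ separates $u$ from $v$ with probability $1/2$, so $|E_G(S, L\setminus S)| \ge 1$ with probability at least $1/2$. Hence if $G[L]$ contains any edge, all $k$ cut values vanish with probability at most $2^{-k} = O(1/n)$. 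I would repeat the analogous test for $R$ and declare $G$ bipartite iff all $2k$ cut values are zero.

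The total query count is $O(\log(n)^8) + O(\log n) = O(\log(n)^8)$, and by a union bound over the three failure events (spanning forest construction, verification of $L$, verification of $R$) the overall error probability is $O(\log(n)^4/n)$, matching the statement. I expect the main conceptual subtlety to be carefully justifying that the coloring obtained from an \emph{arbitrary} spanning forest constitutes the ``right'' candidate bipartition, as captured in the key observation above; this follows from the uniqueness-up-to-swap of the bipartition of each bipartite component combined with the absence of inter-component edges. No substantial additional obstacle is expected, since the verification phase is standard amplification and the heavy lifting is delegated to \cref{thm:span_forest}.
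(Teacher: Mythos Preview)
Your proposal is correct and follows essentially the same strategy as the paper: compute a spanning forest, derive the unique candidate $2$-coloring, and then test each color class for edgelessness. The only difference is in the emptiness test: the paper splits a color class $S$ deterministically along each bit position $i$ into $L_i=\{v\in S:v_i=0\}$ and $R_i=\{v\in S:v_i=1\}$ and checks $|E(L_i,R_i)|>0$ for each of the $\lceil\log n\rceil$ splits, whereas you use $\Theta(\log n)$ uniformly random splits. Both take $O(\log n)$ additional cut queries and achieve error $O(1/n)$ for this step; the paper's version makes the verification deterministic conditioned on the forest, while your randomized version is the one the paper mentions in its concluding remark. Neither choice affects the stated bounds.
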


\begin{proof}
We first invoke \cref{thm:span_forest} to find a spanning forest for $G$ with $O(\log(n)^8)$ many cut queries and success 
probability $1-O(\log(n)^4/n)$.  We then color each root of a spanning tree red, and proceed to color all the remaining vertices 
blue and red such that no two vertices connected in a spanning tree have the same color.  The graph is then bipartite 
if and only if there is no edge of $G$ between two vertices of the same color.  

We can check if there is an edge between two red vertices with $O(\log(n))$ cut queries.  Let $S$ be the set of red vertices.  We 
consider $O(\log n)$ bipartite graphs $(L_i, R_i, E_i)$ where $L_i = \{v \in S: v_i= 0\}, R_i = \{v \in S: v_i= 1\}$ and $E_i = \{(u,v) : u \in L_i, v \in R_i, \{u,v\} \in E\}$.
Then with 3 cut queries we can check if $|E(L_i, R_i)| > 0$.  An edge between two red vertices will be present in at least one of these 
bipartite graphs, thus this process will determine if there is an edge between two red vertices.  We then do the same procedure for the 
blue vertices.
\end{proof}

Similarly we can check if a graph is acyclic.  
\begin{theorem}
There is a quantum algorithm to determine if an $n$-vertex graph $G$ is acyclic that makes $O(\log(n)^8)$ many cut queries 
and succeeds with probability at least $1-O(\log(n)^4/n)$.
\end{theorem}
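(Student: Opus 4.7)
The plan is to reduce the problem to computing a spanning forest and then verifying that no edges of $G$ lie outside it. First I would invoke \cref{thm:span_forest} to produce a spanning forest $F$ of $G$ using $O(\log(n)^8)$ cut queries, succeeding with probability $1 - O(\log(n)^4/n)$. The key observation is that $G$ is acyclic if and only if $E(G) = E(F)$: since $F$ is a spanning subgraph of $G$ that touches every connected component as a tree, any additional edge of $G$ would close a cycle together with the tree path between its endpoints, whereas if no such edge exists then $G$ coincides with the forest $F$ and is itself acyclic.

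To test $E(G) = E(F)$ cheaply, I would label the vertices by binary strings of length $r = \ceil{\log n}$ and, for each coordinate $i \in \{1,\ldots, r\}$, set $L_i = \{v \in V : v_i = 0\}$ and $R_i = V \setminus L_i$. For every bit $i$ I would make one cut query to obtain $c(L_i) = |E_G(L_i, R_i)|$, and I would compute from the known spanning forest the value $f_i = |E_F(L_i, R_i)|$, which requires no queries since $F$ is stored explicitly. The algorithm then outputs \emph{acyclic} if $c(L_i) = f_i$ for every $i$, and \emph{not acyclic} otherwise.

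Correctness follows because $E(F) \subseteq E(G)$, so $c(L_i) - f_i$ equals the number of non-tree edges $\{u,v\} \in E(G) \setminus E(F)$ with $u_i \ne v_i$. If $G = F$ all these differences vanish. Conversely, any non-tree edge $\{u,v\}$ has $u \ne v$, so $u$ and $v$ differ in at least one bit $i$, and that edge contributes $+1$ to $c(L_i) - f_i$; hence at least one of the $r$ tests will fail. The total cost is $O(\log(n)^8)$ from \cref{thm:span_forest} plus $O(\log n)$ extra cut queries for the bit-partition tests, which remains $O(\log(n)^8)$. The only source of error is the spanning forest subroutine, yielding overall success probability $1 - O(\log(n)^4/n)$. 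There is no real obstacle here beyond the spanning forest construction; the only substantive point is the observation that comparing $c_G$ with the known $c_F$ on the $\ceil{\log n}$ canonical bit-partitions is already sufficient to separate every pair of distinct vertices, and therefore to detect any extra edge.
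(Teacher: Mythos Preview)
Your argument is correct and in fact cleaner than the paper's. The paper proceeds in two stages: it first invokes the bipartiteness test (\cref{thm:bipartite}) to rule out odd cycles, and then, assuming bipartiteness, 2-colors the vertices via a spanning forest and checks for even cycles by comparing $|E(B,R)|$ to the number of blue--red edges in the forest. Put together, this amounts to verifying that there are no edges within the red class, none within the blue class, and no extra edges between the classes beyond those in the forest---i.e., exactly the condition $E(G)=E(F)$ that you test directly. Your bit-partition scheme detects any non-tree edge with $\ceil{\log n}$ cut queries, bypassing the detour through bipartiteness entirely. The two approaches have the same asymptotic cost (dominated by the spanning forest subroutine) and the same error guarantee; the paper's version is a bit more modular in that it reuses \cref{thm:bipartite} as a black box, while yours is more self-contained and avoids running the spanning forest routine twice.
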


\begin{proof}
We first check that the graph is bipartite, i.e.\ that it has no odd cycles, using \cref{thm:bipartite}.  If it is bipartite, then it remains 
to check that it also has no even cycles.

To check for even cycles we do the same procedure as in the proof of \cref{thm:bipartite}: we find a spanning forest and color the vertices of the spanning 
trees red and blue.  Then the graph will have no even cycles if and only if there are no additional edges between red and blue vertices than those present 
in the spanning trees.  Let the set of red vertices be $R$ and the set of blue vertices be $B$.  With three cut queries we determine $|E(B,R)|$.  We then 
compare this to the number of edges between blue and red vertices that in the spanning forest.  There is no even cycle if and only if these numbers are 
the same.
\end{proof}

\paragraph{Remark} In both applications, after finding a spanning forest, the problem essentially becomes testing graph emptiness: In bipartite testing we need to check the vertices of the same color form an empty graph, and in acyclic graph testing we need to check that there is no edge other than those in the found spanning forest. If a small constant error is tolerated (as opposed to the $\tilde O(1/n)$ one obtained in the above two proofs), then testing graph emptiness can be done in a constant number of queries. Indeed, a query $E(S,V\setminus S)$ for a random subset $S\subseteq V$  returns a positive integer as long as one edge exists, and repeating this $\lceil \log(1/\epsilon)\rceil $ times gives an error probability of at most $\epsilon$.

\section{Concluding remarks}\label{sec:conclusion}
In this paper we investigate the power of additive and cut queries on graphs, and demonstrate that quantum algorithms using these oracles that can solve certain graph problems with surprisingly low query cost. Some open questions are left for future 
investigation, and we list a few of them here.
\begin{enumerate}
    \item The most pressing problem left open by this work is the quantum complexity of minimum cut with a cut oracle.  Can this be solved with a polylogarithmic number of queries?
    \item Classically, the best known  lower bounds on the query complexity of minimizing a submodular function with an evaluation oracle can be shown via the connectivity problem.  We have ruled connectivity out as a candidate for 
    a good quantum lower bound, and in fact we do not know of any nontrivial lower bound on the quantum query complexity of minimizing a submodular function.  As a modest challenge, can one give an example of a submodular function 
    whose minimization problem requires $\Omega(n)$ many evaluation queries by a quantum algorithm?
    \item Another interesting problem is the maximization of a submodular function.  This problem is NP-hard in general and classically exponentially large query lower bounds are known.  For example, \cite{FMV11} show that 
    $\exp(\epsilon^2 n/8)$ many evaluation oracle queries can be needed by a randomized algorithm even to find a $(\frac{1}{2}+\epsilon)$-approximation to the maximum value of a submodular function.  What is the quantum 
    query complexity of submodular function maximization?
\end{enumerate}

\section*{Acknowledgments}
We would like to thank Jon Allcock for helpful discussions during the course of this work and Tongyang Li for useful comments on an earlier version of the paper.
M.S. thanks Yassine Hamoudi  for helpful conversations on submodular function oracles. 
T.L. thanks the Centre for Quantum Technologies, Singapore for supporting a visit where this work began.  T.L. is supported in part by the Australian Research Council Grant No: DP200100950.  
Research at CQT is funded by the National Research Foundation, the Prime Minister's Office, and the Ministry of Education, Singapore under the Research Centres of Excellence programme's research grant R-710-000-012-135. 
In addition, this work has been supported in part by the QuantERA ERA-NET Cofund project QuantAlgo and the ANR project ANR-18-CE47-0010 QUDATA. 

\newcommand{\etalchar}[1]{$^{#1}$}

\end{document}